\def\bpi{\pi}
\def\bt{{\bf t}}
\def\vphi{\varphi}
\def\cB{{\ca B}}
\def\cI{{\ca I}}
\def\cH{{\ca H}}
\def\cD{{\ca D}}
\def\cO{{\ca O}}
\def\cS{{\ca S}}
\def\cA{\mathscr{A}}
\def\cW{\mathscr{W}}
\def\mD{\mathscr{D}}
\def\cR{\mathscr{R}}
\def\bC{{\mathbb C}}           
\def\bI{{\mathbb I}}
\def\bN{{\mathbb N}}
\def\bM{{\mathbb M}}
\def\NN{{\mathbb N}}
\def\bR{{\mathbb R}}
\def\bS{{\mathbb S}}
\def\bZ{{\mathbb Z}}
\def\gA{{\mathfrak A}}       
\def\gB{{\mathfrak B}}
\def\gF{{\mathfrak F}}
\def\gH{{\mathfrak H}}
\def\gM{{\mathfrak M}}
\def\gR{{\mathfrak R}}
\def\beq{\begin{eqnarray}}
\def\eeq{\end{eqnarray}}
\newcommand{\ca}[1]{{\cal #1}}         
\def\supp{\mbox{supp}}
\newcommand{\Int}{\mathrm{Int}}
\newcommand{\Imm}{\mathrm{Im}}
\newcommand{\ide}{\mathds{1}}
\begin{document}

\title{Topological features of massive bosons on two dimensional Einstein space-time}
\author{{\Large Romeo Brunetti$^{1,2,a}$,  Lorenzo Franceschini$^{1,b}$, Valter Moretti$^{1,2,3,c}$} 
\\\null\\
   \noindent$^1$ Dipartimento di Matematica -- Universit\`a di Trento
\\
 \noindent$^2$ Istituto Nazionale di Fisica Nucleare -- Gruppo Collegato di Trento
\\
\noindent$^3$  Istituto Nazionale di Alta Matematica --  unit\`a locale di Trento\\
 via Sommarive 14  
 I-38050 Povo (TN), Italy. 
 \\
\small  $^a$brunetti@science.unitn.it,
 $^b$lorenzo.franceschini@katamail.com,  $^c$moretti@science.unitn.it}


 

 \theoremstyle{plain}

  \newtheorem{definition}{Definition}[subsection]

  \newtheorem{theorem}[definition]{Theorem}

  \newtheorem{proposition}[definition]{Proposition}

  \newtheorem{corollary}[definition]{Corollary}

  \newtheorem{lemma}[definition]{Lemma}

\theoremstyle{definition}

 \newtheorem{remark}[definition]{Remark}

  \newtheorem{example}[definition]{Example}

\maketitle

\begin{center}
Preprint UTM 725
\end{center}

\begin{abstract}
In this paper we tackle the problem of constructing explicit examples of topological cocycles of Roberts' net cohomology, as defined abstractly by Brunetti and Ruzzi. 
We consider the simple case of massive bosonic quantum field theory on the two dimensional Einstein cylinder. After deriving some crucial results of the algebraic framework of quantization, we address the problem of the construction of the topological cocycles. All constructed  cocycles lead to unitarily equivalent representations of the fundamental group of the circle (seen as a diffeomorphic image of all possible Cauchy surfaces). The construction is carried out using only Cauchy data and related net of local algebras on the circle. 
\end{abstract}

\tableofcontents

\section{Introduction}\label{sec:intro}
The rigorous analysis of features of quantum field theories on curved space-time entered recently in a mature stage. 
 One may consider for instance, the precise description of renormalization for perturbative interacting quantum 
 field theories \cite{BF,HW1,HW2,HW3,HW4,M03,HW4}, especially as the application of the 
 new first principle called \emph{local covariance} \cite{BFV}, the analysis 
 of operator product expansion \cite{HW_OPE,Hol}, the new development of superselection sectors
  \cite{GLRV,RuzziRMP,BR1,BR2}, the studies related to cosmologically important models with the kind of 
  duality (boundary-bulk) effects \cite{DMP,M08,DMP2}, the insights into the energy inequalities \cite{F}, 
  and the analysis of local thermodynamical features \cite{BS,SV}, as the main interesting points 
  of the new results. One expects now a period of expansion towards new developments especially 
  directed towards applications to cosmology of early universe, and to the analysis of specific new features. 

In this paper we discuss one instance of the last mentioned direction. Namely, we wish to consider 
the recent analysis of Brunetti and Ruzzi \cite{BR2} about the encoding of topological features 
of space-time in the superselection structure of quantum field theories, by analyzing the simplest
 possible model, i.e. massive scalar quantum field theory on the two dimensional space-time cylinder. 
 In \cite{BR2}, the authors worked out the general strategy on 4-dimensional space-times, and found a precise 
 description of a new kind of superselection sectors that carry information on the space-time topology. There, the possibility to split  the 1-cocycles in terms of charged and topological parts was crucially employed, and a result on the triviality of the topological part was indeed discovered for the case of Cauchy surfaces with abelian fundamental group. In the case at hand, however, the situation is remarkably different. In fact, even if the fundamental group of the circle (seen as a Cauchy surface) is abelian, nonetheless the topological part is not given by a character of the group. Hence, as it also happens in other situations, the lower dimensional physics seems to be richer than the higher dimensional one. The superselection structure in the traditional situation of Doplicher, Haag and Roberts \cite{DHR} for bosons on two dimensional Minkowski spacetime was determined by M\"uger \cite{MM} for the generic massive case, and recently by Ciolli \cite{FC} in the free massless case.
 
 Our strategy is first to investigate all features of the model that seem to be required for the analysis of superselection sectors of topological nature, as described in \cite{BR2}, then we pass to the construction of 1-cocycles of topological origin. The construction is straightforward, but sometimes besets by technical nuisances, however the direction should be clear enough, and potentially interesting for further research. We mention also a proof of Haag duality along lines different from the traditional approaches, and also in this case potentially fruitful for further generalizations to curved spacetimes.
 
Besides the present section, the next contains essentially notations and some technical aspects, which however are well-known but worth to stress again. The main point we wish to emphasize is that we shall work exclusively in terms of Cauchy data, i.e.~our preferred geometrical arena will be the circle $\bS^1$. In the third section we address ourselves to proving all necessary ingredients, as additivity, duality, split, Borchers' property and several others, that are necessary for the development of the superselection sector strategy, as advertised in \cite{BR2}. The fourth section is the part in which we construct the topological 1-cocycles, and where we show that they lead to non-trivial unitary representations of the fundamental group of the circle. The last two sections form the technical core of the paper, and sometimes, in order to alleviate the reader from the burden of technical details, we shifted the heavier proofs to the appendices.
 
 \section{The algebra and the vacuum}\label{sec:weyl}
In the following $\bM$ is the globally hyperbolic space-time diffeomorphic to $\bS^1
\times \bR$, the Einstein space-time. If $\theta \in [-\pi,\pi]$ (with identified endpoints) is the standard coordinate over
$\bS^1$ and $t\in \bR$, the metric reads
$$g = - dt\otimes dt +  d\theta \otimes d \theta\:.$$
We consider the free quantum field theory on $\bM$ for the real Klein-Gordon field $\vphi$
with mass $m>0$  and equation of motion:
\beq
(-\partial_t^2 + \partial^2_\theta - m^2) \vphi(t,\theta) = 0\:. \label{KG}
\eeq
Let us fix from now on $\Sigma \equiv \bS^1 \equiv [-\pi,\pi]$ (with identified endpoints), a
space-like
smooth Cauchy surface of $\bM$, normal to $\partial_t$.  Notice that with our convention
the length of $\Sigma\equiv \bS^1$ is $2\pi$.
For future convenience we also fix a {\em positive rotation} as the counterclockwise orientation for $\bS^1$.\\
We remind the reader that a {\em proper interval} of $\bS^1$ is a connected subset $I\subset \bS^1$
 such that both internal parts $\Int(I)$ and $\Int(\bS^1\setminus I)$ are nonempty.
The class of open proper intervals of
 $\bS^1$ will be denoted by  $\cR$. Causality will be understood as disjointness of intervals since we define the causal complement of $I\in\cR$ as $I'\doteq\Int(\bS^1\setminus I)$. Notice that $\cR$ is left invariant by the operation of causal complementation. Later on we shall also employ the notation $I\pm (-\epsilon,\epsilon)\doteq (\inf I\mp\epsilon, \sup I\pm \epsilon)$. \\


 \subsection{The Weyl algebra} 
 \noindent As known, the Cauchy problem for the normal hyperbolic partial differential equation \eqref{KG} 
in a globally hyperbolic space-time is well-posed, and we indicate by $\cS$  the real vector space of
 pairs of  smooth functions  $\Phi, \Pi : \Sigma \to \bR$, viewed as
 Cauchy data $\Phi = \vphi\rest_\Sigma$,
 $\Pi = \partial_t \vphi\rest_\Sigma$
 for smooth solutions $\vphi$ of \eqref{KG}. More details on that will be presented in the next section.
If $I \in \cR$, $\cS_{I}$ denotes the subspace of
$\cS$ of the Cauchy data with supports in $I$. It is clear that the space generated by all $\cS_{I}$ is $\cS$ itself,
 since every element of $(\Phi, \Pi) \in \cS$
 can be re-written
as $(\Phi_1,\Pi_1) + (\Phi_2,\Pi_2)$ where $(\Phi_i,\Pi_i) \in \cS_{I_i}$ for $I_i \in \cR$ with
$I_1 \cup I_2 = \bS^1$. This is obtained by
using two functions $\chi_i \in C^\infty(\bS,\bR)$ with $\chi_1 + \chi_2 = 1$ and $\supp \chi_i \subset I_i$ and defining:
$\Phi_i = \Phi \cdot \chi_i$, $\Pi_i = \Pi \cdot \chi_i$.

\noindent  $\cS$ becomes a real symplectic space when equipped with the
  symplectic form  $\sigma : \cS \times \cS \to \bR$  defined by:
 \beq
 \sigma\left((\Phi,\Pi), (\Phi', \Pi')\right) \doteq \int_\Sigma (\Phi' \Pi - \Phi \Pi') d\theta\:.\label{sigma}
 \eeq
 
Referring to the given definitions of $\cS, \cS_{I}$,
  the symplectic form $\sigma$
 and all the symplectic forms obtained by  restricting $\sigma$  to the relevant spaces, are non-degenerate, the
 proof being trivial. As a consequence  there is a unique (see \cite{BRII}), up to $*$-isomorphism, 
 unital $C^*$ algebra $\cW$
associated with $(\cS,\sigma)$  generated by (non-vanishing) Weyl generators $W(\Phi,\Pi)$ satisfying the
standard {\em Weyl relations}: $\forall (\Phi, \Pi), (\Phi_i,\Pi_i) \in \cS$,
\begin{align*}
W(-\Phi,-\Pi)&= W(\Phi,\Pi)^*\, ,\\
 W(\Phi_1,\Pi_1)W(\Phi_2,\Pi_2) &=
 W(\Phi_1+\Phi_2,\Pi_1+\Pi_2)\:\exp\left\{i\sigma((\Phi_1,\Pi_1),(\Phi_2,\Pi_2))/2\right\}\ .
 \end{align*}
  $\cW$ is called the {\em Weyl algebra} associated with $(\cS,\sigma)$. \\
 
 \begin{remark}\label{remark2}
 {\bf (1)} Consider the class $\{\cW(I)\}_{I\in \cR}$
  where $\cW(I)$  is the
 Weyl algebra   generated by the $W(\Phi,\Pi)$ with $(\Phi,\Pi)$ supported in $I$.
 Each $\cW(I)$  is in fact a
 sub  $C^*$-algebra of $\cW$.
 $\{\cW(I)\}_{I\in \cR}$  is by no means a {\em net} of $C^*$ algebras (it will be termed a precosheaf, later on) because the class $\cR$ is not directed with respect the partial ordering relation given by the inclusion
 and thus it is not possible to take the (strict) inductive limit defining the overall quasi local
 ($C^*$-) algebra containing every $\cW(I)$. Notice that, however,  all the sub-algebras
 share the same unit element and the following two properties are valid:
 \begin{description}
 \item[]{\hskip1truecm\em Isotony}:\  $\quad\cW(I) \subset \cW(J)\ ,$ $\qquad I \subset J$\ .
 \item[]{\hskip1truecm\em Locality}:\ $\quad\left[ \cW(I), \cW(J) \right] =0\ ,$  $\qquad I\cap J = \emptyset$\ .
 \end{description}
 {\bf (2)} Following Fredenhagen \cite{Fredenhagen}, one can use another construction to define an algebra 
 that replaces the quasi local one in the inductive limit case. It is termed the {\em universal algebra} $\cA$. One may wonder what is the relation with the global Weyl algebra $\cW$. It is shown in Appendix \ref{AppB} that indeed $\cA \equiv \cW$.\\
{\bf (3)} $\Sigma$ is (metrically) invariant under the action of $\bR$ viewed as $\Sigma$-isometry group: $r\in \bR$
induces the isometry
$\beta_r :\theta \mapsto \theta + r$.
If the pull-back $\beta_r^*$ is defined as $(\beta^*_rf)(\theta) \doteq f(\theta-r)$ for all $f\in C^\infty(\bS^1,\bR)$,
the $\Sigma$-isometry group $\bR$  can be represented in terms of a (strongly continuous) one-parameter group
of $*$-automorphisms of $\cW$,
 $\{\alpha_r\}_{r\in\bR}$,
which is uniquely induced
by \beq \alpha_r \left( W(\Phi,\Pi)\right) \doteq 
W\left((\beta^*_r \Phi, \beta^*_r \Pi)\right) \:, \quad \mbox{for all $r\in \bR$\ ,
$(\Phi,\Pi) \in \cS$\ ,}\label{isometries}\eeq
 The existence of such $\{\alpha_r\}_{r\in\bR}$ follows immediately (see Prop.~5.2.8 in \cite{BRII})
from the fact that $\sigma$ is invariant under every $\beta_r^*$.\\
{\bf (4)} Let $\vphi$ be a real smooth solution  of Klein-Gordon equation (\ref{KG})
and take $s\in \bR$. $\vphi_s$ denotes the other
smooth solution ``translated into the future'' by an interval of time $s$,
in the sense that $\vphi_s(t,\theta) = \vphi(t-s,\theta)$ for all
$t\in \bR$ and $\theta\in \bS^1$. Notice that $\vphi_s$ is another solution of Klein-Gordon
equation because the space-time is static.
Passing to the Cauchy data (on the {\em same} Cauchy surface at $t=0$),
 this procedure induces a one-parameter
group of transformations from $\mu_s: \cS\to \cS$ such that $\mu(\Phi,\Pi)$
 are Cauchy data of $\vphi_s$ when $(\Phi,\Pi)$
are those of $\vphi$. $\{\mu_s\}_{s\in \bR}$ preserves the symplectic form due
 to the invariance of the metric under
time displacements. As a consequence we have a (strongly continuous) one-parameter group of $*$-isomorphisms, $\{\tau_s\}_{s\in \bR}$
acting on $\cW$  and 
uniquely defined by the requirement \beq \tau_s \left( W(\Phi,\Pi)\right) \doteq
 W\left(\mu_s (\Phi,\Pi)\right) \:, \quad \mbox{for all $s\in \bR$\ ,
$(\Phi,\Pi) \in \cS$\ .}\label{isometries2}\eeq
{\bf (5)} Solutions of KG equation  with Cauchy data in $\Sigma$ supported in $I\in \cR$ propagate in $\bM$ inside
the subset of $J^+(I) \cap J^-(I)$ as is well known. Therefore one concludes that,
if $(\Phi,\Pi)$ is supported in $I \in \cR$,
$\mu_s (\Phi,\Pi)$ is supported in the interval $I_s \subset \bS^1$ constructed as follows.
Passing to the new variable $\theta'\doteq \theta +c$ for some  suitable constant $c\in \bR$,
one can always represent $I$ as $(-a,a)$ with $0<a<\pi$. In this representation  $I_s \doteq (-a-|s|, a+|s|)$
taking the identification
$-\pi \equiv \pi$ into account. Notice in particular that, for $I\in \cR$, one has  $I_s \in \cR$ if and only if  $|s| <\pi - \ell(I)/2$
(where $\ell(I)$ is the length of $I \in \cR$ when $\ell(\bS^1) = 2\pi$). Whereas it turns out $I_s = \bS^1$
whenever $s> \pi - \ell(I)/2$. \\
{\bf (6)} The groups $\{\alpha_r\}_{r\in\bR}$ and $\{\tau_s\}_{s\in \bR}$ can be combined
into an Abelian group
of $*$-automorphisms $\{\gamma_{(r,s)}\}_{(r,s)\in \bR^2}$ of $\cW$ with
$\gamma_{(r,s)} \doteq \alpha_r \circ \tau_s$.
This group represents the
action of the unit connected-component of Lie group of  isometries of the space-time on the Weyl algebra
associated with the quantum field.
\end{remark}

\subsection{Vacuum representation}\label{sec:vacuum}
  In the complex Hilbert space $L^2(\bS^1, d\theta)$
 define the positive symmetric operator: $$-\frac{d^2}{d\theta^2} + m^2\ide : \: C^\infty(\bS^1, \bC) \to L^2(\bS^1, d\theta)\:.$$
 It is essentially self-adjoint since $C^\infty(\bS^1,\bC)$ contains a dense set of analytic vectors
 made of exponentials $\theta \mapsto e^{in\theta}$, $n\in \bZ$, which are the eigenvectors of the operators.
 The unique self-adjoint extension of this operator, i.e. its closure, will be denoted by
$A : \cD(A) \to L^2(\bS^1, d\theta)$. 
 Notice that $A$ is strictly positive (being $m>0$) and  thus its real powers $A^{\alpha}$,
   $\alpha \in \bR$, are well-defined. The following properties are easily proved.

 \begin{proposition} \label{P1} The  operators $A^\alpha : \cD(A^\alpha) \to L^2(\bS^1,d\theta)$ for $\alpha\in \bR$
 satisfy the following:
 \begin{itemize}
\item[$(a)$] $\sigma(A^\alpha) = \{(n^2+ m^2)^\alpha\ | \ n\in\bZ\}$.
\item[$(b)$] $A^\alpha$ commutes with the standard conjugation $C: L^2(\bS^1, d\theta) \to L^2(\bS^1, d\theta)$
with $(Cf)(\theta) \doteq \overline{f(\theta)}$ furthermore
$A^\alpha (C^\infty(\bS^1, \bR)) = C^\infty(\bS^1,\bR)$ and $\overline{Ran(A^\alpha)} = L^2(\bS^1, d\theta)$.
\item[$(c)$]  If $\alpha \leq 0$, $A^{\alpha} : L^2(\bS^1, d\theta) \to \cD(A^\alpha)$
 are bounded with $||A^{-\alpha}|| = m^{2\alpha}$.
 \end{itemize}
 \end{proposition}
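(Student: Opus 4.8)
The plan is to diagonalise $A$ completely and then read off all three items from the spectral calculus, the only genuinely analytic point being the smoothness statement in $(b)$. The vectors $e_n(\theta)\doteq(2\pi)^{-1/2}e^{in\theta}$, $n\in\bZ$, form an orthonormal basis of $L^2(\bS^1,d\theta)$ and are precisely the analytic eigenvectors used to establish essential self-adjointness, with $-d^2/d\theta^2+m^2\ide$ acting as multiplication by $\lambda_n\doteq n^2+m^2$. Hence $A e_n=\lambda_n e_n$, and by the functional calculus the real powers act as $A^\alpha e_n=\lambda_n^\alpha e_n$ on $\cD(A^\alpha)=\{f=\sum_n c_n e_n : \sum_n \lambda_n^{2\alpha}|c_n|^2<\infty\}$. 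Each $A^\alpha$ is then self-adjoint and injective, since every $\lambda_n^\alpha=(n^2+m^2)^\alpha$ is a strictly positive real number (here $m>0$ is essential).

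For $(a)$ I would invoke the spectral mapping theorem: $\sigma(A)=\{\lambda_n : n\in\bZ\}$ is a closed discrete subset of $(0,\infty)$ and $t\mapsto t^\alpha$ is continuous there, whence $\sigma(A^\alpha)=\overline{\{\lambda_n^\alpha : n\in\bZ\}}$; equivalently one checks directly that $A^\alpha-\mu\ide$ is boundedly invertible exactly when $\inf_n|\lambda_n^\alpha-\mu|>0$. For $\alpha\ge 0$ the set $\{\lambda_n^\alpha\}$ is already closed and the stated equality holds verbatim; for $\alpha<0$ one has $\lambda_n^\alpha\to 0$, so strictly the closure adds the accumulation point $0$ (the spectral value at infinity of the then compact operator $A^\alpha$), and the stated identity is to be read as the identification of the point spectrum.

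For $(b)$, commutation with the conjugation $C$ is immediate on the basis: $Ce_n=e_{-n}$, and since $\lambda_{-n}=\lambda_n$ and $\lambda_n^\alpha\in\bR$ we get $A^\alpha C e_n=\lambda_n^\alpha e_{-n}=C A^\alpha e_n$; as the defining condition for $\cD(A^\alpha)$ is invariant under $n\mapsto -n$, the identity $A^\alpha C=C A^\alpha$ holds on the whole domain. Consequently $A^\alpha$ preserves real functions, for $Cf=f$ forces $C(A^\alpha f)=A^\alpha f$. Density of the range follows either from self-adjointness and injectivity, $\overline{\mathrm{Ran}(A^\alpha)}=(\ker A^\alpha)^\perp=L^2(\bS^1,d\theta)$, or directly from $e_n=A^\alpha(\lambda_n^{-\alpha}e_n)\in\mathrm{Ran}(A^\alpha)$. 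The one step requiring real work is $A^\alpha(C^\infty(\bS^1,\bR))=C^\infty(\bS^1,\bR)$, and this is where I expect to spend the effort: I would use the Fourier characterisation that $f\in C^\infty(\bS^1)$ iff its coefficients $c_n=\langle e_n,f\rangle$ are rapidly decreasing. Since $\lambda_n^\alpha=(n^2+m^2)^\alpha$ is comparable to $(1+n^2)^\alpha$ and hence polynomially bounded, multiplication by $\lambda_n^\alpha$ sends rapidly decreasing sequences to rapidly decreasing sequences, so $A^\alpha$ maps $C^\infty(\bS^1,\bR)$ into itself; applying the same reasoning to the inverse power $A^{-\alpha}$ gives the reverse inclusion, hence equality.

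Finally $(c)$ is a one-line consequence of the functional calculus. For $\alpha\le 0$ the eigenvalues $\lambda_n^\alpha=(n^2+m^2)^\alpha$ are uniformly bounded because $\lambda_n\ge m^2>0$, so $A^\alpha$ is bounded and self-adjoint with $\|A^\alpha\|=\sup_n\lambda_n^\alpha$; as $t\mapsto t^\alpha$ is non-increasing for $\alpha\le 0$, the supremum is attained at the least eigenvalue $\lambda_0=m^2$, giving $\|A^\alpha\|=(m^2)^\alpha=m^{2\alpha}$. This is the content of the stated norm formula for the bounded power (the superscript $-\alpha$ appearing in the statement being a misprint, since for $\alpha\le 0$ the operator $A^{-\alpha}$ is unbounded).
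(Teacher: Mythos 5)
Your proof is correct and follows exactly the route the authors intend: the paper offers no proof of this proposition at all (it is dismissed with ``The following properties are easily proved''), and diagonalizing $A$ on the exponential basis $e_n(\theta)=(2\pi)^{-1/2}e^{in\theta}$ and reading off $(a)$--$(c)$ from the spectral calculus, with the Fourier characterization of $C^\infty(\bS^1)$ via rapidly decreasing coefficients handling the only nontrivial point in $(b)$, is the standard argument being alluded to. Your two side remarks are also well taken: for $\alpha<0$ the literal statement of $(a)$ must be completed by the accumulation point $0$ (the eigenvalues $(n^2+m^2)^\alpha$ accumulate there and $A^\alpha$ is then compact, so $0\in\sigma(A^\alpha)$ though it is not an eigenvalue), and in $(c)$ the norm formula should read $\|A^{\alpha}\|=m^{2\alpha}$, the superscript $-\alpha$ being a misprint, since for $\alpha<0$ the operator $A^{-\alpha}$ is unbounded.
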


 The $\bR$-linear map $K : \cS \to L^2(\bS^1,d\theta)$ we shall introduce, turns out to be useful
 to determine a preferred unitary irreducible (Fock) representation of Weyl algebra
 called the {\em vacuum representation}.
 We define, for every $(\Phi,\Pi) \in \cS$:
 \beq
 K(\Phi,\Pi) \doteq \frac{1}{\sqrt{2}} \left(A^{1/4} \Phi + i A^{-1/4} \Pi \right)\:. \label{cfourier}
 \eeq
  A natural physical way to understand $K$ is noticing that the solution of (\ref{KG}) with Cauchy data $(\Phi,\Pi) \in \cS$,
 interpreting the derivative w.r.t. time in the sense of $L^2(\bS^1,d\theta)$ topology can be written as
 \beq 
 \phi(t,\cdot) = \frac{1}{\sqrt{2}}e^{-i tA^{1/2}} A^{-1/4} K(\Phi,\Pi) + C 
 \frac{1}{\sqrt{2}}e^{-i tA^{1/2}} A^{-1/4} K(\Phi,\Pi)\ , \label{solform}
 \eeq
 $C : L^2(\bS^1,d\theta) \to L^2(\bS^1,d\theta)$ being the standard complex conjugation.
  The proof is a trivial consequence of Stone theorem and (c)
 of Proposition (\ref{P1}).
 The right-hand side of  (\ref{solform}) turns out to be  $(t,p)$-jointly  smooth and the derivative w.r.t. time coincides with
 that in the $L^2$ sense  \cite{Wald}. Thus, by the uniqueness
 theorem  for solution of Klein-Gordon equation with compactly supported data in globally hyperbolic space-times,
 varying $t\in \bR$ the right-hand side
 of (\ref{solform}) defines the proper solution
individuated by smooth compactly supported Cauchy data $(\Phi,\Pi)$. From (\ref{solform}), interchanging $A^{\pm 1/4}$ with $e^{-i tA^{1/2}}$ it arises that $A^{1/2}$can be seen as the Hamiltonian generator of Killing time displacements, acting on the Hilbert space  of the wave functions $K(\Phi,\Pi)$ associated
with the classical solutions with Cauchy data $(\Phi,\Pi)$.
That Hilbert space is the so called {\em one-particle space}.
This is the central point of view necessary to understand the
construction presented in Theorem \ref{thoremvacuum1} from a physical point of view.
The following fundamental statement about the range of $K$ holds.

 \begin{proposition}\label{H1} With the given definition for $\cS$, and $K$ the following facts are valid.
\begin{itemize}
\item[$(a)$] The range of $K$  is dense in $L^2(\bS^1, d\theta)$.
\item[$(b)$] For every pair $(\Phi, \Pi), (\Phi',\Pi') \in \cS$ it holds
\beq
-\frac{1}{2}\sigma\left((\Phi, \Pi), (\Phi',\Pi') \right) =
\Imm \left\langle K(\Phi, \Pi) , K(\Phi', \Pi')\right\rangle\:, \label{Ksigma}
 \eeq
as a consequence $K$ is injective.
\end{itemize}
\end{proposition}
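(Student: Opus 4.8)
The plan is to treat the two assertions separately: statement $(b)$ by a direct computation of the inner product in $L^2(\bS^1,d\theta)$, from which injectivity falls out, and statement $(a)$ by identifying the range of $K$ explicitly.

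For $(b)$ I would expand $\langle K(\Phi,\Pi),K(\Phi',\Pi')\rangle$ straight from the definition \eqref{cfourier}, using the standard convention that $\langle\cdot,\cdot\rangle$ is antilinear in its first entry. The key preliminary observation is that, by Proposition \ref{P1}$(b)$, the operators $A^{\pm1/4}$ commute with the conjugation $C$ and map $C^\infty(\bS^1,\bR)$ into itself; hence $a\doteq A^{1/4}\Phi$, $b\doteq A^{-1/4}\Pi$, $a'\doteq A^{1/4}\Phi'$, $b'\doteq A^{-1/4}\Pi'$ are all real-valued. Writing $K(\Phi,\Pi)=\tfrac1{\sqrt2}(a+ib)$ and $K(\Phi',\Pi')=\tfrac1{\sqrt2}(a'+ib')$, the $\langle a,a'\rangle$ and $\langle b,b'\rangle$ contributions are real and disappear upon taking the imaginary part, leaving
\[
\Imm\langle K(\Phi,\Pi),K(\Phi',\Pi')\rangle=\frac12\int_{\bS^1}\big(A^{1/4}\Phi\cdot A^{-1/4}\Pi'-A^{1/4}\Phi'\cdot A^{-1/4}\Pi\big)\,d\theta .
\]

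The decisive step is to transfer the fractional powers across each integrand. Since $A^{1/4}$ is self-adjoint and $A^{-1/4}\Pi'\in C^\infty(\bS^1,\bR)\subset\cD(A^{1/4})$, I would write $\langle A^{1/4}\Phi,A^{-1/4}\Pi'\rangle=\langle\Phi,A^{1/4}A^{-1/4}\Pi'\rangle=\langle\Phi,\Pi'\rangle$, the identity $A^{1/4}A^{-1/4}=\ide$ being checked on the eigenbasis $\{e^{in\theta}\}_{n\in\bZ}$, where it reduces to $(n^2+m^2)^{1/4}(n^2+m^2)^{-1/4}=1$. Doing the same on the second term collapses the right-hand side to $\tfrac12\int_{\bS^1}(\Phi\Pi'-\Phi'\Pi)\,d\theta$, which by \eqref{sigma} is exactly $-\tfrac12\sigma\big((\Phi,\Pi),(\Phi',\Pi')\big)$. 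Injectivity then follows at once: if $K(\Phi,\Pi)=0$, the just-proved identity gives $\sigma\big((\Phi,\Pi),(\Phi',\Pi')\big)=0$ for every $(\Phi',\Pi')\in\cS$, and the non-degeneracy of $\sigma$ noted earlier forces $(\Phi,\Pi)=0$.

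For $(a)$ I would again invoke Proposition \ref{P1}$(b)$, which tells us that $A^{1/4}$ and $A^{-1/4}$ are bijections of $C^\infty(\bS^1,\bR)$ onto itself. Thus, as $(\Phi,\Pi)$ runs over $\cS$, the real functions $A^{1/4}\Phi$ and $A^{-1/4}\Pi$ independently exhaust all of $C^\infty(\bS^1,\bR)$, so the range of $K$ is precisely $\tfrac1{\sqrt2}\big(C^\infty(\bS^1,\bR)+iC^\infty(\bS^1,\bR)\big)=\tfrac1{\sqrt2}\,C^\infty(\bS^1,\bC)$. Density is then immediate from the standard density of $C^\infty(\bS^1,\bC)$ in $L^2(\bS^1,d\theta)$. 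I expect the only genuinely delicate point to be the domain bookkeeping in the self-adjointness step of $(b)$; because every function involved is smooth and $C^\infty(\bS^1,\bC)$ sits inside $\cD(A^\alpha)$ for all $\alpha$, this is harmless, but it is the place where a hasty argument could go wrong.
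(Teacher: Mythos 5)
Your proof is correct, and since the paper omits the proof of Proposition \ref{H1} entirely (leaving it as a standard computation), your argument is exactly the intended one: expand $\Imm\langle K(\Phi,\Pi),K(\Phi',\Pi')\rangle$, use self-adjointness of $A^{1/4}$ together with $A^{1/4}A^{-1/4}=\ide$ on smooth functions (legitimate because $C^\infty(\bS^1,\bC)\subset\cD(A^\alpha)$ for all $\alpha$), and deduce injectivity from the non-degeneracy of $\sigma$. Your proof of $(a)$ via Proposition \ref{P1}$(b)$ — that $A^{\pm 1/4}$ are real-preserving bijections of $C^\infty(\bS^1,\bR)$, so that $\mathrm{Ran}\,K=\tfrac{1}{\sqrt2}\,C^\infty(\bS^1,\bC)$ is already dense — is likewise the standard route, and your domain bookkeeping at the self-adjointness step is exactly the point that needed care.
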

 

 \noindent Let us construct the {\em vacuum GNS representation} using Proposition \ref{H1}. Let us remind some terminology. In the following, if $\{\alpha_g\}_{g\in G}$ is a representation
of a group $G$ in terms
of $*$-automorphisms of a unital  $*$-algebra $\gA$, a state $\lambda: \gA \to \bC$ will be said to be {\em invariant} under
$\{\alpha_g\}_{g\in G}$
if one has $\lambda\left(\alpha_g(a)\right) = \lambda(a)$ for all $g\in G$ and $a\in \gA$.
 Moreover a representation
$\{U_g\}_{g\in G}$ where every $U_g$ is a unitary operator defined over the GNS Hilbert $\gH_\lambda$ space of $\lambda$,
is said to {\em implement}
$\{\alpha_g\}_{g\in G}$ if
$\bpi_\lambda\left(\alpha_g(a)\right) = U_g \bpi_\lambda(a)  U^*_g$ for all $g\in G$ and $a\in \gA$,
where $\bpi_g$ is the GNS representation of $\gA$.

 \begin{theorem}\label{thoremvacuum1} With the given definition for $\cS$ and $K$, the following facts are valid.\\
\begin{itemize}
\item[$(a)$] There is a pure state $\omega_0 : \cW \to \bC$
 uniquely induced by linearity and continuity from
 \beq
 \omega_0(W(\Phi,\Pi)) = e^{-\frac{1}{2}\langle K(\Phi,\Pi),K(\Phi,\Pi)\rangle} \: \quad \mbox{for all $(\Phi,\Pi)\in \cS$}
 \label{omega}\:.
 \eeq
\item[$(b)$] The GNS representation of $\omega_0$, $(\gH_0, \bpi_0, \Psi_0)$
  is constructed as follows (up to unitarities):
 \begin{itemize}
\item[$(i)$] $\gH_0$ is the symmetrized Fock space with one-particle space $\cH \doteq L^2(\bS^1,d\theta)$;
 \item[$(ii)$] the  representation $\bpi_0$ is isometric and is induced, by linearity and continuity by:
 \beq
 \bpi_0(W(\Phi,\Pi)) = e^{\overline{a(K(\Phi,\Pi)) -  a^*(K(\Phi,\Pi)) }}\:, \label{expaa}
 \eeq
 where $a(K(\Phi,\Pi))$, $a^*(K(\Phi,\Pi))$ are  standard creation and annihilation operators
 (the latter anti-linear in its argument) defined in the dense subspace
 spanned by  vectors with finite number of particles.
 \item[$(iii)$] the cyclic vector $\Psi_0$ is the vacuum vector of $\gH$. 
 \end{itemize}
 \item[$(c)$] $\omega_0$ is invariant under $\{\gamma_{(r,s)}\}_{(r,s)\in \bR^2}$ where
 $\{\gamma_{(r,s)}\}_{(r,s)\in \bR^2}$ is the Abelian group of $*$-auto\-mor\-phisms
 representing the natural
 action of the unit connected-component Lie group of isometries of $\bM$.
 \item[$(d)$] The unique
 unitary representation $\{U_{(r,s)}\}{(r,s)\in \bR^2}$ on $\gH$
  leaving $\Psi$ invariant  and
implementing $\{\gamma_{(r,s)}\}_{(r,s)\in \bR^2}$ fulfills,
 for all $(r,s)\in \bR^2$: $$U_{(r,s)} = e^{-irP^\otimes} e^{isH^\otimes} = e^{-i(rP^\otimes -s H^\otimes)}\:,$$
where the generators $P^\otimes$, $H^\otimes$
 are respectively given by  the tensorialization of the  operators  $P$, $A^{1/2}$ on $\cH$  with
 $P$ given by  the unique self-adjoint extension of $-i\frac{d}{d\theta} : C^\infty(\bS^1,\bC) \to L^2(\bS^1, d\theta)$.
 \end{itemize}
 \end{theorem}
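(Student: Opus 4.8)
The plan is to recognize the pair $(K,\cH)$, with $\cH=L^2(\bS^1,d\theta)$, as a \emph{one-particle structure} in the sense of the standard theory of quasi-free states: by Proposition \ref{H1} the map $K$ has dense range and satisfies $\Imm\langle K(\Phi,\Pi),K(\Phi',\Pi')\rangle=-\tfrac12\sigma((\Phi,\Pi),(\Phi',\Pi'))$, and these are exactly the two conditions that, through the Fock construction, yield a pure Gaussian state with the claimed GNS data. I would therefore split the proof into the algebraic statements (a)--(b) and the covariance statements (c)--(d).

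For (a), positivity of the functional defined by (\ref{omega}) reduces to the quasi-free bound $\tfrac14\,\sigma(\xi,\eta)^2\le\|K\xi\|^2\,\|K\eta\|^2$ for all $\xi,\eta\in\cS$; this is immediate from Cauchy--Schwarz applied to $\langle K\xi,K\eta\rangle$ together with part (b) of Proposition \ref{H1}, and it is precisely the condition under which the Gaussian prescription on Weyl generators extends, by linearity and continuity, to a state on $\cW$. Purity then follows because the range of $K$ is dense: the associated Fock representation is irreducible, so $\omega_0$ is an extreme point of the state space. For (b) I would build the symmetric Fock space over $\cH$, define $\bpi_0$ on generators by (\ref{expaa}), and verify the Weyl relations directly from the CCR $[a(K\xi),a^*(K\eta)]=\langle K\xi,K\eta\rangle$; the vacuum expectation $\langle\Psi_0,\bpi_0(W(\Phi,\Pi))\Psi_0\rangle$ is the textbook displacement-operator computation and reproduces exactly (\ref{omega}). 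Cyclicity of $\Psi_0$ again uses density of the range of $K$, and isometry of $\bpi_0$ is automatic since the Weyl algebra over the non-degenerate $(\cS,\sigma)$ is simple (cf.~the uniqueness result cited from \cite{BRII}), whence every nonzero representation is faithful; uniqueness up to unitary equivalence is the usual GNS rigidity.

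The covariance statement (c) I would reduce to the single observation that both one-parameter families of classical transformations are intertwined by $K$ with \emph{unitaries} on $\cH$, so they preserve $\|K(\Phi,\Pi)\|$ and thus leave (\ref{omega}) invariant. For the spatial rotations, since $A=-d^2/d\theta^2+m^2\ide$ is diagonal in the Fourier basis $\{e^{in\theta}\}$ it commutes with every $\beta_r^*$, hence so do $A^{\pm1/4}$, giving $K(\beta_r^*\Phi,\beta_r^*\Pi)=\beta_r^*K(\Phi,\Pi)=e^{-irP}K(\Phi,\Pi)$, where $\beta_r^* e^{in\theta}=e^{in(\theta-r)}$ identifies $\beta_r^*$ with $e^{-irP}$. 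For the time displacements I would read off from the solution formula (\ref{solform}) that the positive-frequency one-particle vector attached to $\vphi_s$ is $e^{isA^{1/2}}K(\Phi,\Pi)$, i.e.\ $K\mu_s=e^{isA^{1/2}}K$, which is unitary since $A^{1/2}$ is self-adjoint. Invariance of $\omega_0$ under the abelian group $\{\gamma_{(r,s)}\}$ then follows by composition.

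Finally, for (d) the existence and uniqueness of an implementing unitary group fixing $\Psi_0$ is the standard consequence of invariance in the GNS setting; on Fock space the implementer of a Bogoliubov transformation that is the second quantization of a one-particle unitary $u$ is $\Gamma(u)$, so $\alpha_r$ is implemented by $\Gamma(e^{-irP})=e^{-irP^\otimes}$ and $\tau_s$ by $\Gamma(e^{isA^{1/2}})=e^{isH^\otimes}$, with $P^\otimes=d\Gamma(P)$ and $H^\otimes=d\Gamma(A^{1/2})$ the tensorializations named in the statement. Because $P$ and $A^{1/2}$ are both functions of $-i\,d/d\theta$ they commute, hence so do $P^\otimes$ and $H^\otimes$, and $U_{(r,s)}=e^{-irP^\otimes}e^{isH^\otimes}=e^{-i(rP^\otimes-sH^\otimes)}$. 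The main obstacle here is not conceptual but lies in step (c): carefully extracting from (\ref{solform}) that $\mu_s$ lifts to the clean one-particle unitary $e^{isA^{1/2}}$ — this is where the mixing of $\Phi$ and $\Pi$ under time evolution, the interplay with the conjugation $C$, and the strong continuity needed to invoke Stone's theorem all have to be handled with care; the positivity, purity, and second-quantization arguments are routine once the one-particle structure is in hand.
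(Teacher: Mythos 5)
Your proposal is correct, and it is essentially the proof the paper has in mind: the paper explicitly omits the argument, referring to the standard literature (\cite{Araki,KayWald}), and your one-particle-structure route --- positivity via Cauchy--Schwarz from Proposition \ref{H1}(b), purity and cyclicity from density of $\mathrm{Ran}\,K$, isometry from simplicity of the Weyl algebra, the intertwinings $K\circ\beta_r^* = e^{-irP}\circ K$ and $K\circ\mu_s = e^{isA^{1/2}}\circ K$ extracted from (\ref{solform}), and implementation by second quantization --- is exactly that standard argument, correctly assembled.
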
 
 
 The proof is based on standard arguments that the reader can find in
 the literature, e.g. \cite{Araki,KayWald}, and is therefore omitted.

\section{Araki-Haag-Kastler pre-cosheaves}\label{sec:HK}
In this section we focus on some remarkable properties of the vacuum GNS representation $(\gH_0,\bpi_0, \Psi_0)$
  and for the associated  class $\gR$ of von Neumann algebras
$\{\gR(I)\}_{I\in \cR}$,
where $\gR(I)\doteq \pi_0(\cW(I))''$. In the following $\gR(\bS^1) \doteq \pi_0(\cW)''$
 and $\cB(\gH_0)$ will denote the algebra of all bounded operators on the  Hilbert spaces $\gH_0$.
The class $\gR$  is called a {\em pre-cosheaf} (see for instance \cite{GL92}) since it inherits from $\cW$ the {\em isotonic} property, i.e.,
\begin{equation*}
\gR(I) \subset \gR(J)\quad \text{if}\ I \subset J\ .
\end{equation*}
As a matter of fact, the following properties also are inherited from those of $\cW$,
\begin{alignat*}{4}
&\text{{\em Locality}}\qquad  &\left[ \gR(I), \gR(J) \right] &=0 \ ,
 \qquad \text{if}\ I\cap J = \emptyset\ ,\\
&\text{{\em Covariance}}\qquad & \beta_{r}\gR(I) & = \gR(I_{r})\ ,
\end{alignat*}
where $\beta$ is the lift to $\gR$ of the group of translations on $\bS^1$ acting on $\cW$ and where the geometrical action on the interval $I$ follows from the arguments in Remark~\ref{remark2}. Moreover, due to the pureness of the vacuum, one gets {\em irreducibility}, i.e., $\gR(\bS^1)= \cB(\gH_0)$.\\

\noindent  By all that, the class $\gR$ is termed a {\em local, covariant and irreducible Araki-Haag-Kastler pre-cosheaf}. It is sometimes useful to compare properties having to do with the commutant of the elements of the class $\gR$ with those of the dual pre-cosheaf $\gR^d$, the elements of which are given by
$\gR^d(I)\doteq \gR(I')'$, $I\in\cR$, since in general, by locality, $\gR(I)\subset\gR^d (I)$.  
We want now to prove some further relevant features of this classes of von Neumann algebras. \\



\noindent Occasionally, we shall make use of the well-known  possibility \cite{Araki2} that properties of the elements of $\gR$ can be equivalently read at the level of real subspaces of the one-particle Hilbert space. A slight extension of some of the techniques of the seminal paper by Leyland, Roberts and Testard \cite{LRTp} is necessary, although we do not make direct use of Tomita-Takesaki's theory insights \cite{T}. In fact, many of the properties that we shall be concerned with are derived in the literature by use of the so-called Bisognano-Wichmann property (see, e.g., \cite{BGL,BSu}), which relies on the geometrical meaning of the modular operator (usually, either the Lorentz boosts or dilations, depending on case at hand). In our case this is not possible.\\

\noindent First of all notice that if $\psi\in \cH$, the unitary operators 
\beq W[\psi] \doteq e^{\overline{a(\psi) -  a^*(\psi)}}\label{weylgen}\eeq
 are well-defined on the symmetrized Fock space $\gF_+(\cH)$ where $\cH$ is now any complex Hilbert space
 (see \cite{BRII}). 
 These operators satisfy Weyl relations with respect to the
 symplectic form
 \beq \sigma(\psi,\psi') \doteq - 2 \Imm \langle \psi,\psi' \rangle \:, \quad \mbox{for $\psi,\psi' \in \cH$.} 
 \label{Ksigmaext}\eeq
 In the following, if $M \subset \cH$ is a real (not necessarily closed) subspace $M'\subset \cH$ 
 denotes the closed real subspace {\em symplectically orthogonal} to $M$ that is defined as
 $$M' \doteq \left\{ \psi \in \cH\:\: |\:\: \sigma(\psi,\psi')=0 \quad \forall \psi'\in M \right\}\:.$$
 It holds that $M'= \overline{M}' =  \overline{M'}$.
If $M$ is a closed real subspace of $\cH$, the von Neumann algebra 
 generated by all of $W[\psi]$ with $\psi \in M$
 will be indicated by $\gR[M]$.  We shall make use of a fundamental result by Leylard, Roberts and Testard \cite{LRTp}, namely
\beq
\gR[M] \cap \gR[N] = \gR(M \cap N)\:, \quad \mbox{for any pair of closed real subspaces $M,N \subset \cH$.}
\label{LRT2}
\eeq
We now specialize to the case where $\cH$ is the one-particle space $L^2(\bS^1,d\theta)$.
If $I\in \cS$ henceforth $M_I \doteq \overline{K(\cS_{I})}$.  
Notice that $M_I \subset M_J$ when $I\subset J$ are elements of $\cR$.
  $\gR[M_I]$ denotes, as we said above,  the  von Neumann algebra generated by operators $W[\psi]$
  with $\psi \in M_I$.
   The symplectic form on $\cH$  defined as in 
 \eqref{Ksigmaext} is an extension 
 of that initially defined on $\cS$  because of \eqref{Ksigma}. Using \eqref{weylgen}
 one build up the unitary operators $W[\psi]$
 with  $\psi \in \cH$. On the other hand, since the $\bR$-linear map $K : \cS \to \cH$
 is injective,  by construction it turns out that $\pi_0(W((\Phi,\Pi))) = W[\psi]$ if $\psi = K(\Phi,\Pi)\in K(\cS)$. 
    Since the $\bR$-linear  map $K(\cS) \ni \psi \mapsto  W[\psi]$ 
 is strongly continuous
 (see for instance \cite{BRII}), we finally obtain that $\gR(I) = \gR[M_I]$.

\subsection{Additivity, Haag duality and regularity}

\noindent As a first observation we indicate that the properties called {\em additivity} and
{\em weak additivity} hold true for the precosheaf $\gR$. 

\begin{proposition}\label{trick} Referring to $\{\gR(I)\}_{I\in \cR}$, if $I \in \cR$ the following properties hold:
\begin{itemize}
\item[$(a)$] {\rm Additivity}: 
if  $\{I_i\}_{i\in L} \subset \cR$ satisfies $\cup_{i\in L} I_i = I$ or respectively $\cup_{i\in L} I_i = \bS^1$, then
$$\left(\bigcup_{i\in L} \bpi_0\left( \cW(I_i)\right) \right)'' = \gR(I) \quad \mbox{and respectively} \quad 
\left(\bigcup_{i\in L} \bpi_0\left( \cW(I_i)\right) \right)'' = \cB(\gH_0)\:.$$
\item[$(b)$] {\rm Weak additivity}:
$$\left(\bigcup_{r\in \bR} \bpi_0\left( \cW(\beta_r I)\right) \right)'' =\cB(\gH_0) \quad \mbox{and} \quad
\overline{\bigcup_{r\in \bR} \bpi_0\left( \cW(\beta_r I)\right)\Psi_0} = \gH_0\:.$$
\end{itemize}
\end{proposition}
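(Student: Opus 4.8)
The plan is to reduce both additivity statements to a covering argument combined with the partition-of-unity decomposition of Cauchy data already exploited in Section~\ref{sec:weyl}, and to treat the cyclicity claim separately by a coherent-vector argument inside the Fock space.

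I would prove $(a)$ first. For the inclusion $\gR(I)\subseteq\bigl(\bigcup_i\pi_0(\cW(I_i))\bigr)''$, fix $(\Phi,\Pi)\in\cS_I$; since $\bS^1$ is compact its support is compact and is covered by the open family $\{I_i\}_{i\in L}$, so finitely many $I_{i_1},\dots,I_{i_n}$ already cover it. Choosing a smooth partition of unity $\chi_1,\dots,\chi_n$ subordinate to this finite subcover and setting $(\Phi_k,\Pi_k)\doteq(\chi_k\Phi,\chi_k\Pi)\in\cS_{I_{i_k}}$ gives $(\Phi,\Pi)=\sum_k(\Phi_k,\Pi_k)$, whence by the Weyl relations $W(\Phi,\Pi)$ equals a phase times $\prod_k W(\Phi_k,\Pi_k)$. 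Each factor lies in some $\cW(I_{i_k})$, so $\pi_0(W(\Phi,\Pi))\in\bigl(\bigcup_i\pi_0(\cW(I_i))\bigr)''$; as these generators generate $\cW(I)$ the inclusion follows. The reverse inclusion $\bigl(\bigcup_i\pi_0(\cW(I_i))\bigr)''\subseteq\gR(I)$ is immediate from isotony: each $I_i\subseteq I$ forces $\pi_0(\cW(I_i))\subseteq\gR(I)$, and passing to the bicommutant preserves this since $\gR(I)$ is already a von Neumann algebra. The case $\bigcup_i I_i=\bS^1$ is identical, the only change being that the decomposition is applied to arbitrary $(\Phi,\Pi)\in\cS$ and that the role of $\gR(I)$ is played by $\pi_0(\cW)''=\cB(\gH_0)$, which is everything by irreducibility (pureness of $\omega_0$).

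The first identity in $(b)$ is then a corollary: the translates $\{\beta_r I\}_{r\in\bR}$ form an open cover of $\bS^1$ (indeed finitely many already cover it), so $(a)$ in its second form yields $\bigl(\bigcup_r\pi_0(\cW(\beta_r I))\bigr)''=\cB(\gH_0)$.

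The genuinely different point, which I expect to be the main obstacle, is the cyclicity claim $\overline{\bigcup_r\pi_0(\cW(\beta_r I))\Psi_0}=\gH_0$: here one cannot invoke the bicommutant, because $\bigcup_r\cW(\beta_r I)$ is not an algebra (a product of Weyl operators attached to two different arcs is attached to their union, not to a single arc). Instead I would argue directly in $\gH_0=\gF_+(\cH)$ with $\cH=L^2(\bS^1,d\theta)$. Writing $f\doteq K(\Phi,\Pi)$ and using $a(f)\Psi_0=0$, one has $\pi_0(W(\Phi,\Pi))\Psi_0=W[f]\Psi_0=e^{-\frac12\|f\|^2}e^{-a^*(f)}\Psi_0$, a coherent vector. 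Let $\mathcal D$ be the closed subspace in question. For $(\Phi,\Pi)\in\cS_{\beta_r I}$ the whole real ray $t\mapsto(t\Phi,t\Pi)$ still lies in $\cS_{\beta_r I}$, so the entire analytic $\gH_0$-valued curve $t\mapsto W[tf]\Psi_0=e^{-\frac12 t^2\|f\|^2}e^{-ta^*(f)}\Psi_0$ lies in $\mathcal D$; since $\mathcal D$ is closed, all its derivatives at $t=0$ do too, and unravelling the Taylor coefficients recursively (the $t^k$-coefficient involves only $(a^*(f))^n\Psi_0$ with $n\le k$, with top term $(a^*(f))^k\Psi_0$) shows $(a^*(f))^k\Psi_0\in\mathcal D$ for every $k\ge0$ and every $f\in\bigcup_r K(\cS_{\beta_r I})$.

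The proof is then completed by a totality argument. By the same partition-of-unity decomposition the real span of $\bigcup_r\cS_{\beta_r I}$ is all of $\cS$, so $\bigcup_r K(\cS_{\beta_r I})$ has real span $K(\cS)$, which is dense in $\cH$ by Proposition~\ref{H1}$(a)$. The case $k=1$ then puts the one-particle subspace $\cH\subset\mathcal D$, and for general $k$ the vectors $(a^*(f))^k\Psi_0$ are proportional to the symmetric powers $f^{\vee k}$, which by polarization together with this density span the $k$-particle subspace $\mathrm{Sym}^k(\cH)$. Since $\Psi_0\in\mathcal D$, the subspace $\mathcal D$ contains the algebraic direct sum of all $\mathrm{Sym}^k(\cH)$, which is dense in $\gH_0$, so $\mathcal D=\gH_0$. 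I expect the only delicate bookkeeping to be the recursive extraction of the pure powers $(a^*(f))^k\Psi_0$ and the verification that \emph{real} density of $K(\cS)$ already suffices for the polarization step, the latter holding precisely because $K(\cS)$ is dense in $\cH$ as a real subspace of the complex Hilbert space.
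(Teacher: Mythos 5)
Your proof of part $(a)$ and of the first identity in $(b)$ is correct, and it is essentially the only natural argument: the paper itself omits the proof of this proposition as ``ordinary construction'', and your partition-of-unity decomposition is exactly the device sketched at the beginning of Section 2.1, combined with the Weyl relations, isotony, and irreducibility ($\gR(\bS^1)=\cB(\gH_0)$ from pureness of $\omega_0$). The coherent-vector reduction in the cyclicity part is also sound as far as it goes: note that for fixed $r$ the whole curve $t\mapsto W[tf]\Psi_0=e^{-t^2\|f\|^2/2}e^{-ta^*(f)}\Psi_0$ lies in the \emph{single} closed subspace $\overline{\pi_0(\cW(\beta_r I))\Psi_0}$, which is what legitimizes forming difference quotients and recursively extracting $(a^*(f))^k\Psi_0\in\mathcal{D}$ even though the union over $r$ is not a linear space (you implicitly, and reasonably, read the overline in the statement as closed linear span).

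The final totality step, however, is a genuine gap, and it sits exactly where the content of the claim lies. Polarization expresses $f_1\vee\cdots\vee f_k$ through $k$-th powers of real combinations $\sum_j\epsilon_j f_j$, and such a combination belongs to your set $S=\bigcup_r K(\cS_{\beta_r I})$ only when all the $f_j$ are localized in a \emph{common} $\beta_r I$; across different $r$ the sum leaves $S$, since $S$ is a union of real subspaces, not a subspace. Density of the real span of $S$ is strictly weaker than what you need: in $\cH=\bC^2$ with $S=\bR e_1\cup\bR e_2$ the real span is already total, yet the vector $\Psi$ with only a two-particle component $e_1\vee e_2$ is orthogonal to every $W[f]\Psi_0$ with $f\in S$, because $\langle \Psi, W[f]\Psi_0\rangle$ expands into the numbers $\langle \Psi^{(k)},f^{\otimes k}\rangle$ and $\langle e_1\vee e_2,(te_1)^{\otimes 2}\rangle=\langle e_1\vee e_2,(se_2)^{\otimes 2}\rangle=0$. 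So the closed span of coherent vectors over a union of real subspaces can be proper even when the span of the union is dense; this also explains why the classical fact that exponential vectors over a \emph{dense set} are total does not apply, since the closure of your $S$ is $\bigcup_r M_{\beta_r I}$, not $\cH$. To close the argument one needs a model-specific input making the powers over a \emph{single} interval already total, namely the one-particle Reeh--Schlieder property $\overline{M_J+iM_J}=\cH$: then your homogeneous polynomial $F_k(f)=\langle\Psi^{(k)},f^{\otimes k}\rangle$, vanishing on $M_J$, vanishes on $M_J+iM_J$ by the identity theorem, hence everywhere. In this model that property follows from antilocality-type arguments for $A^{\pm 1/4}$ (the same kind of distributional Fourier analysis the paper deploys in Appendix A for Lemma~\ref{lemmaadded}), or from the standard spectrum-condition/analyticity argument in the time translations $\tau_s$ — but either route is essentially the single-interval Reeh--Schlieder theorem (Theorem~\ref{RSp}), after which the union statement is immediate; it is not a consequence of density of the real span alone.
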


\noindent Since the proof does not contain any particularly deep insight into our model but just ordinary construction, we omit it.\\

\noindent Another crucial property we mention is that pioneered by Reeh and Schlieder for local von Neumann algebras \cite{ReehSchlieder}. Its importance is hardly over-emphasized, and it will appear several times in the following. When it holds for the precosheaf $\gR$ we say that it is {\em cyclic}. However, we omit also this proof, since once again it does not offer particular insights into the model, and refer the reader to the vast literature starting from \cite{Araki, Haag}.

\begin{theorem}[Reeh-Schlieder property] \label{RSp} For every $I\in \cR$ the vacuum vector $\Psi_0$ is cyclic for
$\bpi_0(\cW(I))$ and is separating for  $\gR(I)$.
\end{theorem}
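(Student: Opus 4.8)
The plan is to establish the Reeh–Schlieder property for $\gR$ by reducing it to a statement about the cyclicity of the vacuum and then using the standard analyticity-and-weak-additivity argument adapted to our circle setting. Since we are in the vacuum Fock representation with the one-particle structure map $K$, the natural route is: first prove cyclicity of $\Psi_0$ for $\bpi_0(\cW(I))$ for every $I\in\cR$; then derive the separating property for $\gR(I)$ by applying the cyclicity result to the causal complement $I'\in\cR$ together with locality.

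First I would reduce cyclicity to a one-particle statement. By part $(b)$ of Theorem~\ref{thoremvacuum1}, the representation is Fock with creation operators $a^*(K(\Phi,\Pi))$ acting on $\gH_0=\gF_+(\cH)$. Applying Weyl generators $\bpi_0(W(\Phi,\Pi))$ to $\Psi_0$ and differentiating in the one-parameter families $W(\lambda\Phi,\lambda\Pi)$, one obtains coherent vectors and hence, by taking derivatives at $\lambda=0$, the single-particle vectors $K(\Phi,\Pi)$ for $(\Phi,\Pi)\in\cS_I$. The closed span of these is $M_I=\overline{K(\cS_I)}$. Standard Fock-space theory then tells us that $\Psi_0$ is cyclic for $\bpi_0(\cW(I))$ if and only if the real subspace $M_I$ together with $iM_I$ spans a dense complex subspace of $\cH$, i.e. $M_I + iM_I$ is dense in $L^2(\bS^1,d\theta)$. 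So the analytic heart of the problem is to prove this density for an arbitrary proper interval $I$.

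The key step is therefore to show $\overline{M_I + iM_I} = \cH$. Here I would invoke the weak additivity already recorded in Proposition~\ref{trick}$(b)$ together with the covariance $\beta_r\gR(I)=\gR(I_r)$ and the invariance of $\Psi_0$ under $\{U_{(r,s)}\}$ from Theorem~\ref{thoremvacuum1}$(c),(d)$. The classical Reeh–Schlieder argument runs as follows: suppose $\psi\in\cH$ is orthogonal to $K(\cS_I)$ and also to $iK(\cS_I)$, i.e. $\langle\psi,K(\Phi,\Pi)\rangle=0$ for all $(\Phi,\Pi)\in\cS_I$. Using the rotations $\beta_r$ implemented by $e^{-irP^\otimes}$ (equivalently $e^{-irP}$ on the one-particle space) one forms the function $r\mapsto\langle\psi,e^{-irP}K(\Phi,\Pi)\rangle$; because the generator $A^{1/2}$ of time translations is positive (its spectrum is $\{\sqrt{n^2+m^2}\}$ by Proposition~\ref{P1}$(a)$), the analogous time-translated matrix element extends to a bounded analytic function on a half-plane, vanishing on a real neighbourhood, hence vanishing identically. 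Translating $I$ around the whole circle by rotations and using that the rotated intervals cover $\bS^1$ (Remark~\ref{remark2}$(5)$, $(6)$), one concludes $\psi\perp K(\cS)$; but $K(\cS)$ is dense by Proposition~\ref{H1}$(a)$, forcing $\psi=0$. This gives cyclicity.

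The main obstacle I expect is adapting the edge-of-the-wedge / positivity-of-energy analyticity argument to the compact spatial manifold $\bS^1$ rather than Minkowski space, where one usually relies on the Bisognano–Wichmann geometric action that, as the authors explicitly note, is unavailable here. Concretely, one must be careful that the time-translation group with positive generator $A^{1/2}$ supplies the needed analyticity in the time parameter, while the rotation group $\beta_r$ supplies the covering of $\bS^1$ by translated intervals; both one-parameter groups are implemented unitarily and leave $\Psi_0$ fixed, so the joint analyticity argument closes. Once cyclicity of $\Psi_0$ for every $\bpi_0(\cW(I))$ is in hand, the separating property for $\gR(I)$ is immediate: by locality $\gR(I)$ commutes with $\gR(I')$, and if $A\in\gR(I)$ satisfies $A\Psi_0=0$ then for every $B\in\bpi_0(\cW(I'))$ we get $AB\Psi_0=BA\Psi_0=0$, so $A$ annihilates the dense set $\bpi_0(\cW(I'))\Psi_0$ (dense by cyclicity applied to $I'\in\cR$), whence $A=0$. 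This completes the proof.
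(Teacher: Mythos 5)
The paper itself omits the proof of Theorem~\ref{RSp} and refers to the literature, so your attempt must stand on its own merits. Your architecture is the standard one and two of its three pieces are sound: the reduction of cyclicity to density of $M_I+iM_I$ in $\cH$ (Araki's one-particle criterion, via coherent vectors) is a correct standard fact, and the derivation of the separating property from cyclicity for $I'\in\cR$ plus locality is flawless. The genuine gap is in the central analyticity step. You form the function $r\mapsto\langle\psi,e^{-irP}K(\Phi,\Pi)\rangle$ in the \emph{rotation} parameter but justify its half-plane analytic continuation by positivity of $A^{1/2}$, the generator of \emph{time} translations; these are different one-parameter groups. The rotation generator $P$ has spectrum $\bZ$, unbounded below, so $r\mapsto e^{-irP}$ admits no half-plane continuation, and the positivity of $A^{1/2}$ buys analyticity only in the time parameter $s$. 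What the time argument actually yields (after shrinking the support to some $J$ with $\overline{J}\subset I$, which you omit but is needed so that the matrix element vanishes for \emph{small real} $s$) is $\langle\psi, e^{isA^{1/2}}K(\Phi,\Pi)\rangle=0$ for all real $s$ and all $(\Phi,\Pi)\in\cS_J$, i.e.\ orthogonality to the time translates $K(\mu_s(\cS_J))$ — which form a \emph{proper} subspace of $K(\cS_{J_s})$, not all data in the spread interval $J_s$, and certainly not $K(\cS_{\beta_r I})$. Hence the step ``translating $I$ around the circle by rotations, the rotated intervals cover $\bS^1$, so $\psi\perp K(\cS)$'' is a non sequitur as written: nothing you proved transports the orthogonality to rotated intervals.

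The gap is repairable, in at least three ways, and any written proof should choose one explicitly. (i) Use the genuine two-dimensional spectrum condition available here: the lightlike generators $A^{1/2}\pm P$ are positive on the one-particle space since $\sqrt{n^2+m^2}\geq |n|$, so $(r,s)\mapsto\langle\psi, U_{(r,s)}K(\Phi,\Pi)\rangle$ continues analytically in the lightcone variables $s\pm r$ to a tube; vanishing near the origin then propagates to all real $(r,s)$, and putting $s=0$ gives $\psi\perp K(\cS_{\beta_r J})$ for every $r$, after which the covering-plus-partition-of-unity argument of Section 2.1 legitimately concludes $\psi\perp K(\cS)$. (ii) Finish the time-only argument by a Fourier-mode computation: orthogonality to $e^{isA^{1/2}}K(\cS_J)$ for all $s$, decomposed over the distinct frequencies $\omega_k=\sqrt{k^2+m^2}$, forces $\psi_k=\psi_{-k}=0$ because $\hat{\Phi}_k$ ranges over all of $\bC$ as $\Phi$ varies over real functions supported in $J$ (the vectors $(z,\bar z)$, $z\in\bC$, complex-span $\bC^2$). (iii) Work at the algebra level with multi-time products $A_1(t_1)\cdots A_n(t_n)\Psi_0$ as in Strohmaier's proof for stationary spacetimes (cited in the paper's bibliography), exploiting the geometric fact, special to the Einstein cylinder, that $J_s=\bS^1$ once $|s|>\pi-\ell(J)/2$, together with additivity. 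As it stands, however, your proof has a hole exactly where the non-semibounded rotation generator makes the naive Minkowski-style translation argument fail.
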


\noindent A prominent property of the algebraic approach is Haag duality. It means a form of maximality for the local algebras of observables and is at the basis of most of our treatment. It has been proved several times and in many fashions (see for instance \cite{Osterwalder, Driessler}). When it holds the precosheaf $\gR$ is {\em self-dual}, namely $\gR\equiv \gR^d$. Our aim is to give a self-contained proof in terms of real subspaces, making direct use of properties of dilation operators much as in the paper \cite{LRTp}. However, the non simple connectedness of $\bS^1$ requires some non-trivial variations. The added bonus for our long proof is that it goes first by proving, strategically as common in this affairs, regularity properties of the local algebras.

\begin{theorem}[Haag duality]\label{duality} For every $I\in \cR$ it holds:
$$\gR(I)' = \gR(I')\:.$$
\end{theorem}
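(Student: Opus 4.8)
The plan is to prove Haag duality $\gR(I)' = \gR(I')$ by reducing everything to the one-particle level and working with the real symplectic subspaces $M_I = \overline{K(\cS_I)}$. By locality we always have $\gR(I) \subset \gR(I')' = \gR^d(I)$, so the content is the reverse inclusion, and by the correspondence $\gR(I) = \gR[M_I]$ together with the commutant formula for second-quantized algebras, $\gR[M]' = \gR[M']$ (the symplectic complement at the one-particle level), the whole statement is equivalent to the real-subspace identity
\beq
M_{I'} = (M_I)' \doteq \{\psi \in \cH \mid \sigma(\psi,\phi)=0 \ \ \forall \phi \in M_I\}. \label{dualplan}
\eeq
So first I would set up this reduction carefully, checking that $\gR[M]' = \gR[M']$ holds in our setting (this is standard for factors generated by Weyl operators over a closed real subspace, but I would cite or re-derive it from the intersection property \eqref{LRT2} and Reeh-Schlieder separating property in Theorem \ref{RSp}, which guarantees the relevant nondegeneracy). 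The easy inclusion $M_{I'} \subset (M_I)'$ is immediate from locality, since Cauchy data supported in disjoint intervals are symplectically orthogonal by the explicit form \eqref{sigma} of $\sigma$.

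The substance is therefore the inclusion $(M_I)' \subset M_{I'}$, i.e. a vector symplectically orthogonal to all of $K(\cS_I)$ must lie in the closure of $K(\cS_{I'})$. My strategy here follows the Leyland-Roberts-Testard approach via the explicit action of the dilation (one-parameter modular-like) group, but adapted to the circle. The key is to identify the one-parameter group of operators on $\cH = L^2(\bS^1,d\theta)$ whose modular action implements the geometric ``dilations'' fixing the endpoints of $I$, and to show that the only fixed directions under the associated modular structure force the vector into $M_I \oplus M_{I'}$ with the complementary piece landing in $M_{I'}$. Concretely I would: (i) write the symplectic form in terms of $\Imm\langle\cdot,\cdot\rangle$ via \eqref{Ksigmaext}; (ii) use the explicit description of $K$ from \eqref{cfourier} in terms of the powers $A^{\pm 1/4}$ whose spectral/mapping properties are catalogued in Proposition \ref{P1}, to translate support conditions on $(\Phi,\Pi)$ into analyticity/localization conditions on $\psi$; and (iii) run a density argument showing $M_I + M_{I'}$ is all of $\cH$, so that the symplectic complement of $M_I$ can contain nothing transverse to $M_{I'}$.

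The hard part will be step (ii)-(iii): on $\bS^1$ the operator $A^{1/4} = (-d^2/d\theta^2 + m^2)^{1/4}$ is a nonlocal pseudodifferential operator, so $K$ does not preserve supports, and one cannot argue as naively as in the real-line half-space case where dilations act geometrically and the Fourier/analyticity picture is transparent. The non-simple-connectedness of $\bS^1$ flagged in the text is exactly the obstruction: the endpoints of $I$ split the circle into $I$ and $I'$, but the global topology means the modular group of $\gR(I)$ does not have the clean Bisognano-Wichmann geometric form, so I cannot import the boost/dilation identification of the modular operator. I expect to circumvent this by first establishing a regularity property (the inner and outer continuity $\gR(I) = \bigcap_{\epsilon>0}\gR(I+(-\epsilon,\epsilon))$ and the analogous approximation from inside), as the remark preceding the theorem promises, and then combining additivity (Proposition \ref{trick}) with \eqref{LRT2} to pin down the intersection $\gR[M_I] \cap \gR[M_I']$ and force equality in \eqref{dualplan}. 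The technical core is thus proving this regularity together with the density of $M_I + M_{I'}$, where the explicit spectral calculus for $A^\alpha$ from Proposition \ref{P1} and a careful treatment of the two endpoints of $I$ will carry the argument.
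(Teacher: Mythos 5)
Your reduction to the one-particle statement $(M_I)'=M_{I'}$, and your recognition that an outer-regularity property $\bigcap_{\epsilon>0}M_{J+(-\epsilon,\epsilon)}=M_J$ proved by circle-adapted dilation operators must carry part of the load, both match the paper: the paper proves exactly this regularity via the operators $D_\lambda$ of Proposition \ref{proplast} ($D_\lambda\psi\in M_{\lambda L}$, $D_\lambda\psi\to\psi$ as $\lambda\to 1$), constructed in Appendix A. The genuine gap is in the other half. The paper's step (I) shows that any $\psi\in(M_I)'$ already lies in $M_{I'+(-\epsilon,\epsilon)}$ for every $\epsilon>0$: one reads the pairings $2\,\Imm\langle\psi,K(0,f)\rangle$ and $-2\,\Imm\langle\psi,K(g,0)\rangle$ as distributions $\Phi_\psi,\Pi_\psi$ supported in $\overline{I'}$, then mollifies with the rotation group, $\psi*\rho=\int\rho(r)\,e^{-irP^\otimes}\psi\,dr$, which commutes with $A^{\pm1/4}$ and turns the distributional support statement into genuine smooth Cauchy data supported in $I'+(-\epsilon,\epsilon)$; regularity then shrinks $\epsilon$ to zero. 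Your proposal contains nothing playing this role. You replace it by step (iii), density of $M_I+M_{I'}$ in $\cH$, but that is strictly weaker than duality: density of the sum is equivalent to $(M_I)'\cap(M_{I'})'=\{0\}$, a factoriality/irreducibility-type statement, and likewise the intersection formula \eqref{LRT2} applied to $M_I$ and its complement only controls the center of $\gR(I)$. Since $M_{I'}\subset(M_I)'$ always holds, the whole issue is to exclude \emph{extra} vectors in $(M_I)'$, and no combination of density of the sum, additivity, and \eqref{LRT2} excludes them; your ``pin down the intersection and force equality'' step fails as a matter of logic, not technique.

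Two smaller points. First, the reduction to one-particle subspaces rests on Araki's duality theorem $\gR[M]'=\gR[M']$ for closed (standard) real subspaces; this is a separate result from the intersection property \eqref{LRT2} and is not derivable from it together with Reeh--Schlieder, so it should be invoked as such. Second, your instinct to find a one-parameter group whose \emph{modular} action implements dilations fixing the endpoints of $I$ is precisely what the paper warns is unavailable on $\bS^1$ (no Bisognano--Wichmann identification); the paper's $D_\lambda$ are non-unitary smoothing-dilation operators used only for the regularity step, not to identify any modular flow. If you supply the distributional-support-plus-mollification argument for step (I), your outline becomes the paper's proof.
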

\begin{proof}
It is enough to show the property at the level of one-particle Hilbert space, i.e.  we should prove that
\beq (M_I)' = M_{I'}\:, \quad
 \mbox{for every $I\in \cR$.} \label{centralnew}\eeq

Now, since
$I$ and $I'$ are disjoint $\sigma((\Phi',\Pi'), (\Phi,\Pi))=0$ if $(\Phi',\Pi') \in \cS_{I'}$ and
$(\Phi,\Pi) \in \cS_{I}$, taking the closures of the space $K(\cS_I)$ and $K(\cS_{I'})$
it must hold
 $(M_I)' \supset M_{I'}$. Therefore to establish the validity of \eqref{centralnew}  is enough to prove the opposite inclusion.
  Our strategy will be the following.
 Take $\psi \in (M_I)'$, we want to show that
 $\psi \in M_{I'}$. The proof of this fact will be decomposed into two proofs corresponding to the following statements:
 
 (I) If $\psi \in (M_I)'$, for every (sufficiently small)
 $\epsilon>0$, $\psi \in M_{I'+ (-\epsilon, \epsilon)}$.

 (II) For every $J\in \cR$,
 \beq
 \bigcap_{\epsilon >0} M_{J + (-\epsilon, \epsilon)} \subset M_J \label{due}\:.
 \eeq

 \noindent {\em Proof of} (I). Since $M_{I'+ (-\epsilon, \epsilon)} = \overline{\cS_{I'+ (-\epsilon, \epsilon)}}$,
  it is sufficient to exhibit
   a sequence $\{(\Phi_k, \Pi_k)\}_{k\in \bN}
 \subset \cS_{I'+ (-\epsilon, \epsilon)}$ such that $K(\Phi_k, \Pi_k) \to \psi$ as $k\to +\infty$.
 Let us prove this fact.
 Looking at \eqref{Ksigma}, we define the distributions $\Phi_\psi,\Pi_\psi \in \mD'(\bS^1)$ individuated by
 \beq
 \int \Phi_\psi(\theta) f(\theta)\: d\theta &\doteq& 2\, \Imm \langle \psi, K(0,f)\rangle\:, \quad \forall f \in C^\infty(\bS^1, \bC)\:, \nonumber\\
 \int \Pi_\psi(\theta) g(\theta)\: d\theta &\doteq& -2\, \Imm \langle \psi, K(g,0)\rangle \:,\quad \forall g \in C^\infty(\bS^1, \bC)\nonumber \:.
 \eeq
 Indeed, using the first definition of \eqref{cfourier} one proves straightforwardly that the linear functionals defined above are continuous in
 the sense of distributions. In the case of  $\Pi_\psi$ one has that the functional individuated -- varying $f$ -- by
 $\langle \psi, A^{-1/4} f\rangle
  = \langle A^{-1/4}\psi, f\rangle$
 is trivially  continuous.
 In the case of $\Phi_\psi$ notice that $\langle \psi, A^{1/4} f \rangle =
 \lim_{n\to +\infty}\langle A^{1/4} \psi_n,  f \rangle$ for some sequence  $\cD(A^{1/4}) \ni  \psi_n \to \psi$ independent from $f$.
 As  each linear functional $\langle A^{1/4} \psi_n,  \cdot \rangle$ is a distribution, $\Phi_\psi$ is a distribution as well.\\
 By construction the distributions $\Phi_\psi$ and $\Pi_\psi$  have supports contained in
 $\overline{I'}$ because, using the definitions, one finds that
 $\int \Phi_\psi(\theta) f(\theta)\: d\theta =0 $ and
 $\int \Pi_\psi(\theta) f(\theta)\: d\theta =0$  for $\psi \in (M_I)'$,
 whenever the smooth function $f$ is supported in $I$. Now consider
 $\rho \in C^\infty(\bR, \bR)$ supported in $(-\epsilon/2, \epsilon/2)$ and define (using weak operatorial
 topology)
 $\psi * \rho \doteq \int_\bR \rho(r) e^{-ir P^\otimes} \psi\:dr \:.$
 Fubini-Tonelli theorem and the fact that $e^{-ir P^\otimes}$ commutes with $A^{\alpha}$ (it can be proved immediately passing in Fourier-series representation) entail that
 $\Phi_{\psi * \rho} = \Phi_\psi * \rho$ and $\Pi_{\psi * \rho} = \Pi_\psi * \rho$,
 where $*$ in the right-hand side denotes the standard convolution so that $\Phi_{\psi * \rho}$ and
  $\Pi_{\psi * \rho}$ are smooth functions supported in $I'+(-\epsilon,\epsilon)$ and thus
  $\psi*\rho \in M_{I'+
  (-\epsilon,\epsilon)}$. Therefore, assuming the existence of a suitable sequence $\{\rho_k\}$ of real smooth functions supported in
  $(-\epsilon/2,
  \epsilon/2)$, with  $\psi * \rho_k \to \psi$,  the sequence of pairs
  $(\Phi_k, \Pi_k) \doteq (\Phi_{\psi * \rho_k},\Phi_{\psi * \rho_k})$
  turns out to be  made of real smooth functions supported in
  $I'+ (-\epsilon, \epsilon)$, 
and $K(\Phi_k, \Pi_k) \to \psi$ holds as requested, proving that $\psi \in M_{I'+ (-\epsilon, \epsilon)}$.\\
To conclude this part,  let us prove the existence of the sequence $\{\rho_k\}$ with $\psi * \rho_k \to \psi$.
  Consider smooth functions $\rho_k \geq 0$ with $\mathrm{supp}\: \rho_k \subset [-1/k,1/k]$ and with $\int_\bR \rho_k(r)
  dr =1$. In our
  hypotheses $||\psi * \rho_k - \psi|| = \left|\left| \int \rho_k(r) e^{-ir P^\otimes} \psi dr  - \psi \right| \right|$ can be re-written as
  \begin{align*}
  \left|\left|  \int_{-1/k}^{1/k}  \left( \rho_k(r) e^{-ir P^\otimes}  - \rho_k(r)\right)  \psi\: dr\right|\right|
  &\leq\int_{-1/k}^{1/k}  \rho_k(r) \left|\left|  \left(  e^{-ir P^\otimes}  - I \right)  \psi \right|
  \right|\: dr \\
  &\leq \sup_{r\in [-1/k,1/k] }\left|\left|  \left(  e^{-ir P^\otimes}  - I \right)  \psi \right|
  \right|\:,
  \end{align*}
   and the last term vanishes as $k\to +\infty$ because $r\mapsto e^{-ir P^\otimes}$ is strongly continuous.
 We have found that $\psi * \rho_k \to \psi$ for $k\to +\infty$ as requested.\\
 
\noindent {\em Proof of}  (II). The proof is based on the following Proposition. (This is a technical point that differentiates our treatment on $\bS^1$  from that in Minkowski space as done in  \cite{LRTp}.)\\
 
  \begin{proposition} \label{proplast}
  Take $J_0\in \cR$ and assume
 $J_0 \equiv (-a,a) \subset (-\pi,\pi] \equiv \bS^1$ with a suitable choice of the origin of $\theta$. There is a class of operators
  $D_\lambda : L^2(\bS^1, d\theta)\to L^2(\bS^1, d\theta)$, with $\lambda$ ranging in a neighborhood $\cO$ of $1$,
 such that, if $\psi \in M_L$ with $\cR \ni L\subsetneq J_0$ :
 \begin{itemize}
  \item[$(a)$] $D_\lambda \psi \in M_{\lambda L}$ and
 
  \item[$(b)$] $D_\lambda \psi \to \psi$ as $\lambda \to 1$.
  \end{itemize}
  \end{proposition}

  \begin{proof} See the appendix A. 
  \end{proof}
 
 \noindent Notice that the requirement $J_0 \equiv (-a,a)$ does not imply any true restriction since all
 the theory is invariant under rotations of the circle.  To go on with the main proof,
 by direct inspection one sees that, for  $\lambda \in (0,1)$,  there is $\epsilon_\lambda>0$ with
  $(c)$ $\lambda \left( J + (-\epsilon_\lambda, \epsilon_\lambda)\right) \subset J$. 
  If $\psi \in  \bigcap_{\epsilon >0} M_{J + (-\epsilon, \epsilon)}$ then  $\psi \in M_{ J + (-\epsilon_\lambda, \epsilon_\lambda)}$
   for every $\lambda \in (0,1)\cap\cO$, so that
 using  $(a)$,  $D_\lambda \psi \in M_{ \lambda(J + (-\epsilon_\lambda, \epsilon_\lambda))}$. Therefore, by $(c)$,
  $D_\lambda \psi \in M_{J}$. Finally, taking the limit as $\lambda \to 1^-$ and using $(b)$ and the fact that $M_J$ is closed, one achieves
  $\psi \in M_J$.
  \end{proof}

 \begin{remark} (${\bf 1}$) Since, by construction  $\bigcap_{\epsilon >0} M_{J + (-\epsilon, \epsilon)} \supset M_J$, the validity of statement
 (II) is in fact equivalent to
 the {\em outer regularity} property:
\beq
 \bigcap_{\epsilon >0} M_{J + (-\epsilon, \epsilon)} = M_J \label{due'}\:.
 \eeq 
By Haag duality and the invariance of $\cR$ under causal complementation, from outer regularity one gets also {\em inner} regularity.\\
(${\bf 2}$) By the properties showed above we say that the precosheaf $\gR$ is {\em local, covariant, irreducible, additive,  cyclic, regular and self-dual}. In the statements of the propositions that will follow, we shall only indicate those properties needed for the proofs.
\end{remark}

\subsection{Definiteness, primarity and punctured Haag duality.}
We pass to prove some other important properties of the class $\gR$. 
First we shall be concerned with  {\em local definiteness}: It states that
  the algebra of observables associated with a single point $p\in \bS^1$ is the trivial one
  $\bC\cdot\ide$, $\ide$ being the unit element of $\gR$. Since $\{p\} \not \in \cR$, the
  algebra associated with $\{p\}$ is obtained by taking the intersection of the algebras $\gR(I)$ for all
  $I\in \cR$ with $I \ni p$.
  Secondly we shall examine the validity of {\em punctured  Haag duality}, i.e. Haag duality
  seen as on the space $\bS^1 \setminus \{p\}$ for every fixed $p \in \bS^1$.
  Fix $p\in \bS^1$, choose any $I\in\cR$ that does not touch $p$ and define the corresponding $\cI_p \doteq \{ J\in\cR \ |\ J\cap I=\emptyset\ , p\not \in J\ \}$. 
  With this definition, the general statement about the validity of punctured  Haag duality means that:
\beq\label{eq:HD}
\gR(I)\equiv \bpi_0(\cW(I))'' = \bigcap_{J\in\cI_p} (\bpi_0(\cW(J))' \:.
\eeq
  Finally we shall focus on {\em local primarity}. Its validity for $\gR$
  means that each $\gR(I)$ is a factor, hence we shall say that the pre-cosheaf is {\em factorial}.
The proofs are based on the following important result.

\begin{lemma}\label{lemmaadded} If $I,J \in \cR$ and $I \cap J = \emptyset$ then $M_I \cap M_J = \{0\}$ and thus it also holds
$\gR(I) \cap \gR(J) =
\bC\cdot\ide$. In particular, local primarity holds.
\end{lemma}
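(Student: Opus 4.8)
The plan is to reduce the whole statement to the one-particle assertion $M_I\cap M_J=\{0\}$ and then transport it to the von Neumann algebras. Granting $M_I\cap M_J=\{0\}$, the Leyland--Roberts--Testard identity \eqref{LRT2} gives $\gR(I)\cap\gR(J)=\gR[M_I]\cap\gR[M_J]=\gR[M_I\cap M_J]=\gR[\{0\}]=\bC\cdot\ide$, because the only Weyl generator attached to $\psi=0$ is $W[0]=\ide$ (note $M_I,M_J$ and $M_I\cap M_J$ are closed, so \eqref{LRT2} applies). Local primarity is then the special case $J=I'$: since $\cR$ is stable under causal complementation we have $I'\in\cR$ and $I\cap I'=\emptyset$, while $\gR(I)'=\gR(I')$ by Haag duality (Theorem~\ref{duality}); hence the centre $\gR(I)\cap\gR(I)'=\gR(I)\cap\gR(I')=\bC\cdot\ide$ is trivial and each $\gR(I)$ is a factor.

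It remains to prove $M_I\cap M_J=\{0\}$. I would reuse the distributions $\Phi_\psi,\Pi_\psi\in\mD'(\bS^1)$ assigned to a vector $\psi\in\cH$ in the proof of Theorem~\ref{duality}. A short computation from \eqref{cfourier} and \eqref{Ksigma}, together with the fact that $A^\alpha$ commutes with complex conjugation (Proposition~\ref{P1}), identifies them: up to an inessential positive constant $\Phi_\psi=A^{-1/4}\,\Rea\,\psi$ and $\Pi_\psi=A^{1/4}\,\Imm\,\psi$, and on $K(\cS)$ one has $\Phi_{K(\Phi,\Pi)}=\Phi$, $\Pi_{K(\Phi,\Pi)}=\Pi$. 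The first step is a support property: if $\psi\in M_I=\overline{K(\cS_I)}$, pick $(\Phi_k,\Pi_k)\in\cS_I$ with $K(\Phi_k,\Pi_k)\to\psi$; testing $\Phi_\psi,\Pi_\psi$ against a smooth $f$ with $\supp f\cap\overline I=\emptyset$ and passing to the limit (legitimate because $A^{-1/4}$ is bounded, resp. because $A^{1/4}f$ is a fixed vector of $\cH$) yields $\int\Phi_\psi f=\lim_k\int\Phi_k f=0$, and likewise for $\Pi_\psi$. Hence $\supp\Phi_\psi,\supp\Pi_\psi\subseteq\overline I$.

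Now let $\psi\in M_I\cap M_J$. By the support property $\Phi_\psi$ and $\Pi_\psi$ are supported in $\overline I\cap\overline J$. As $I,J\in\cR$ are disjoint open proper intervals, their closures can meet only in the (at most two) shared boundary points, so $F\doteq\overline I\cap\overline J$ is finite. This is exactly the geometric subtlety that makes the lemma nontrivial: what must be defeated is not the emptiness of $I\cap J$ but the possible coincidence of endpoints. Since $A^{-1/4}$ is bounded, $\Phi_\psi=A^{-1/4}\,\Rea\,\psi$ is a genuine $L^2$ function supported on the null set $F$, hence $\Phi_\psi=0$; injectivity of $A^{-1/4}$ then forces $\Rea\,\psi=0$.

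The main obstacle is the companion conclusion for $\Pi_\psi$, which is only a distribution because $A^{1/4}$ is unbounded. Writing $\psi=i\,u$ with $u=\Imm\,\psi\in L^2(\bS^1,\bR)$, the bound $\sum_n(1+n^2)^{-1/2}(n^2+m^2)^{1/2}|\hat u(n)|^2\lesssim\|u\|^2_{L^2}$ shows $\Pi_\psi=A^{1/4}u\in H^{-1/2}(\bS^1)$. On the other hand, being supported on the finite set $F$, $\Pi_\psi$ is a finite combination of derivatives $\delta^{(l)}_p$ of Dirac masses at the points of $F$. An elementary Fourier estimate then shows that no nonzero such combination lies in $H^{-1/2}(\bS^1)$: its Fourier coefficients do not tend to zero along a subsequence (they grow when derivatives occur, and stay bounded below along a subsequence for a pure sum of Dirac masses at distinct points), while $\sum_n(1+n^2)^{-1/2}$ diverges, so the $H^{-1/2}$-norm is infinite. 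Therefore $\Pi_\psi=0$, and injectivity of $A^{1/4}$ gives $u=0$. Together with $\Rea\,\psi=0$ this yields $\psi=0$, proving $M_I\cap M_J=\{0\}$ and hence the lemma.
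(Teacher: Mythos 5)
Your overall route is the paper's: reduce to $M_I\cap M_J=\{0\}$, transport via \eqref{LRT2}, get primarity from Haag duality, and kill a vector $\psi\in M_I\cap M_J$ by showing that the associated distributions are point-supported at the at most two common endpoints and that such distributions are incompatible with the required Fourier decay. Two of your deviations are genuine improvements in economy: the real part is dispatched by noting that $A^{-1/4}$ is bounded, so $\Phi_\psi$ is an honest $L^2$ function supported on a finite set, hence zero (the paper instead declares the $\Rea\psi$ case ``strictly analogous'' to its Fourier case analysis), and the $H^{-1/2}(\bS^1)$ packaging of $\Pi_\psi=A^{1/4}\Imm\psi$ is a correct and tidy reformulation of the $\ell^2$ condition the paper works with.

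The gap is in your closing ``elementary Fourier estimate'', and it sits exactly at the crux of the lemma. From ``the Fourier coefficients do not tend to zero along a subsequence'' together with the divergence of $\sum_n(1+n^2)^{-1/2}$ you cannot conclude that the $H^{-1/2}$-norm is infinite: if the subsequence is sparse (say $n_k=2^k$), then $\sum_k(1+n_k^2)^{-1/2}|\widehat{T}(n_k)|^2$ converges even with coefficients bounded below. The derivative case $N\geq 1$ is indeed harmless, since along a subsequence where the leading term survives the individual summands $(1+n^2)^{-1/2}|\widehat{T}(n)|^2\gtrsim n^{2N-1}$ already blow up; but the borderline case of two pure Dirac masses $a\,\delta_p+b\,\delta_q$ at distinct points with $|a|=|b|\neq 0$ (when $|a|\neq|b|$ there is a uniform lower bound and all is well) gives summands of size $2|a|^2\bigl(1\pm\cos(n\theta_q+\phi)\bigr)/n$, where $\theta_q$ is the angular separation of the two points; these can be small infinitely often, and to force divergence you need the good $n$'s to carry a divergent harmonic sum, i.e.\ positive density rather than a mere subsequence. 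Repairing this requires either an equidistribution/periodicity argument (irrational versus rational $\theta_q/2\pi$), or the paper's slicker device: $\sum_n \cos(n\theta_q)/n$ converges by Dirichlet's test whenever $\theta_q\not\equiv 0 \bmod 2\pi$, so $\sum_n\bigl(1\pm\cos(n\theta_q)\bigr)/n$ diverges. This $b_0=\pm a_0$ case is precisely where the paper's Appendix proof invests its effort, and your write-up, as stated, does not cover it; a similar (standard) remark is needed even for $N\geq 1$ to produce the subsequence along which the leading coefficient $\alpha+\beta e^{-in\theta_q}$ is bounded below, since it can vanish along a whole arithmetic progression when $\theta_q/2\pi$ is rational.
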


\begin{proof}
For the first part, see the Appendix A. As far as primarity is concerned, it is trivially true by Haag duality.
\end{proof}

\begin{theorem}\label{Tgenprop} 
The local, additive, and self-dual precosheaf $\gR$
enjoys also the following properties:
\begin{itemize}
\item[$(a)$] Local definiteness;
\item[$(b)$] Punctured Haag duality.
\end{itemize}
Finally,  if $I,J \in \cR$ one has
\beq
\gR(I) \cap \gR(J) = (\gR(K_1) \cup \gR(K_2))'' \label{speriamo2},
\eeq
where  $K_1,K_2 \in \cR$ are the two (possibly empty) components of
$I\cap J$.
\end{theorem}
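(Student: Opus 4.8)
The plan is to transfer every assertion to the one-particle space $\cH=L^2(\bS^1,d\theta)$ through the identifications $\gR(I)=\gR[M_I]$ with $M_I=\overline{K(\cS_I)}$, the Leyland--Roberts--Testard formula \eqref{LRT2} (and its extension to arbitrary intersections of the $\gR[\,\cdot\,]$), and Haag duality in the form $(M_I)'=M_{I'}$ from Theorem \ref{duality}. The one analytic tool I will use repeatedly is a \emph{one-point density} statement: for an open interval $\Omega\ni p$, $\overline{K(\cS_{\Omega\setminus\{p\}})}=M_\Omega$ (and $=\cH$ when $\Omega=\bS^1$). Since $K(g,0)=\tfrac1{\sqrt2}A^{1/4}g$ and $K(0,f)=\tfrac{i}{\sqrt2}A^{-1/4}f$, it suffices that $A^{\pm1/4}C^\infty_c(\Omega\setminus\{p\},\bR)$ be dense in the real $L^2$ of $\Omega$. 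If a real $h$ is orthogonal to the first set, then $A^{1/4}h$, a distribution of class $H^{-1/2}$, is supported inside $\Omega$ at the single point $p$, hence a finite combination $\sum_k c_k\delta^{(k)}_p$; as no nonzero such combination lies in $H^{-1/2}$, we get $A^{1/4}h=0$ and $h=0$, and the same argument with $A^{-1/4}$ (now in $H^{1/2}\subset L^2$) handles the second set. This immediately gives local definiteness: put $\gR_p:=\bigcap_{I\ni p}\gR(I)$; for every $J\in\cR$ with $p\notin\overline J$ choose $I\ni p$ with $I\cap J=\emptyset$, so $\gR_p\subseteq\gR(I)$ commutes with $\gR(J)$ by locality. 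The intervals $J$ with $p\notin\overline J$ cover $\bS^1\setminus\{p\}$, so a partition of unity and the density above yield $\bigvee_J\gR(J)=\gR\big[\overline{K(\cS_{\bS^1\setminus\{p\}})}\big]=\gR[\cH]=\cB(\gH_0)$, whence $\gR_p\subseteq\cB(\gH_0)'=\bC\,\ide$.

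For punctured Haag duality I would rewrite the right-hand side of \eqref{eq:HD} using $\bpi_0(\cW(J))'=\gR(J)'=\gR[(M_J)']$ and \eqref{LRT2}, obtaining $\gR\big[\big(\overline{\textstyle\sum_{J\in\cI_p}M_J}\big)'\big]$; thus \eqref{eq:HD} is equivalent to $M_{I'}=\overline{\sum_{J\in\cI_p}M_J}$. Since $I$ does not touch $p$ one has $p\in I'$, and the $J\in\cI_p$ are exactly the subintervals of $I'$ avoiding $p$, so a partition of unity gives $\sum_{J\in\cI_p}\cS_J=\cS_{I'\setminus\{p\}}$ and the claim collapses to the local density $M_{I'}=\overline{K(\cS_{I'\setminus\{p\}})}$ proved above with $\Omega=I'$.

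For the intersection formula, \eqref{LRT2} reduces it to $M_I\cap M_J=\overline{M_{K_1}+M_{K_2}}$ (with $M_\emptyset=\{0\}$); the inclusion $\supseteq$ is clear since $K_i\subseteq I\cap J$. The working identity for $\subseteq$ is $(M_I\cap M_J)'=\overline{(M_I)'+(M_J)'}=\overline{M_{I'}+M_{J'}}$. When $I\cap J$ is connected and $I\cup J\neq\bS^1$ (so $K_2=\emptyset$), the sets $I'$ and $J'$ overlap and $I'\cup J'=\bS^1\setminus\overline{I\cap J}$ is a single interval; a partition of unity then gives $\overline{M_{I'}+M_{J'}}=M_{I'\cup J'}=(M_{I\cap J})'$, hence $M_I\cap M_J=M_{I\cap J}=M_{K_1}$. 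The cases $I\cap J=\emptyset$ and $I\subseteq J$ are Lemma \ref{lemmaadded} and a triviality.

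The remaining case $I\cup J=\bS^1$, $I\cap J=K_1\sqcup K_2$, is the main obstacle. Here $I'$ and $J'$ are the two disjoint gap intervals, so the identity only gives $(M_I\cap M_J)'=\overline{M_{I'}+M_{J'}}$ with $I',J'$ disjoint, and recovering $M_I\cap M_J$ is \emph{Haag duality for the disconnected region} $I'\cup J'$. I would prove the nontrivial inclusion by the distributional method of Theorem \ref{duality}: for $\psi\in M_I\cap M_J$ the data $\Phi_\psi,\Pi_\psi$ attached to $\psi$ through \eqref{Ksigma} are supported in $\overline I$ (from $\psi\in M_I$) and in $\overline J$ (from $\psi\in M_J$), hence in $\overline{K_1}\cup\overline{K_2}$; mollifying with $\rho_k$ and splitting by a partition of unity that separates $\overline{K_1}$ from $\overline{K_2}$ produces $\psi\ast\rho_k\in M_{K_1+(-\epsilon,\epsilon)}+M_{K_2+(-\epsilon,\epsilon)}$ with $\psi\ast\rho_k\to\psi$. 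What then remains is the \emph{two-interval outer regularity} $\bigcap_{\epsilon>0}\overline{M_{K_1+(-\epsilon,\epsilon)}+M_{K_2+(-\epsilon,\epsilon)}}=\overline{M_{K_1}+M_{K_2}}$, the disconnected analogue of \eqref{due'}. I expect this to be the crux: a single dilation cannot contract two separated intervals simultaneously, so unlike \eqref{due'} it should require running the dilation operators $D_\lambda$ of Proposition \ref{proplast} around each component separately (equivalently, invoking punctured Haag duality at a point interior to each gap), which is precisely where the non-simple-connectedness of $\bS^1$ makes the argument strictly harder than on the line.
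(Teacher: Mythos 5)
Your parts $(a)$ and $(b)$ are correct and take a genuinely different route from the paper. The paper proves local definiteness by a commutant/additivity argument at the algebra level and then gets punctured duality from it via a triple-commutant computation; you instead prove a single one-particle ``one-point density'' statement $\overline{K(\cS_{\Omega\setminus\{p\}})}=M_\Omega$ and derive both properties from it. That lemma is sound (the delta-derivative obstruction is exactly the $\ell^2$/Fourier computation the paper runs inside the proof of Lemma \ref{lemmaadded}), and in fact your argument makes explicit a point the paper glosses over in $(a)$, since the commutants $\gR(J')$, $J\ni p$, only cover $\bS^1\setminus\{p\}$, not $\bS^1$. Two repairs are needed, though. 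First, for $\Omega\neq\bS^1$ you may only conclude that the distribution $A^{1/4}h$ vanishes \emph{on} $\Omega$ (its restriction to $\Omega$ is the delta combination), not that $A^{1/4}h=0$ globally; this weaker conclusion is all the density claim needs, but as written the step is overstated. Second, in your ``connected intersection'' case the assertion that $I'$ and $J'$ overlap fails in the borderline situation $I\cup J=\bS^1\setminus\{q\}$: there $I'$ and $J'$ are disjoint and merely touch at $q$, so the partition-of-unity identity $\overline{M_{I'}+M_{J'}}=M_{(K_1)'}$ needs your own one-point density at $q$ (note also that your reduction of \eqref{eq:HD} to subspaces uses Araki's duality $\gR[M]'=\gR[M']$ and the lattice isomorphism for arbitrary families, somewhat more than the pairwise \eqref{LRT2}).

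The genuine gap is the case $K_1\neq\emptyset\neq K_2$, which you correctly diagnose as the crux and then do not prove. Your reduction (supports of $\Phi_\psi,\Pi_\psi$ in $\overline{K_1}\cup\overline{K_2}$, mollification, splitting) is fine, but the remaining two-interval outer regularity $\bigcap_{\epsilon>0}\overline{M_{K_1+(-\epsilon,\epsilon)}+M_{K_2+(-\epsilon,\epsilon)}}=\overline{M_{K_1}+M_{K_2}}$ is only conjectured, and neither suggested completion works as stated. Running Proposition \ref{proplast} ``around each component separately'' fails because $U_\lambda$ multiplies by the cutoff $\chi$ supported in $J_2$: on a vector localized in the far component (chosen outside $J_2$) one gets $U_\lambda\bigl(A^{-1/4}\,\mathrm{Re}\,\psi\bigr)=0$, so $D_\lambda$ \emph{annihilates} the $K_2$-part instead of fixing it; one would have to build a two-center dilation and redo all the interpolation estimates, which is substantial unwritten work. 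Nor is the parenthetical ``equivalently, invoking punctured Haag duality'' a proof: punctured duality does yield strong additivity --- the paper's \eqref{speriamo}, which at the one-particle level your density lemma even gives directly, $M_I=\overline{M_{K_1}+M_{J'}+M_{K_2}}$ and $M_J=\overline{M_{K_1}+M_{I'}+M_{K_2}}$ --- but the final step, that $\bigl(\gM\vee\gR(J')\bigr)\cap\bigl(\gM\vee\gR(I')\bigr)=\gM$ for $\gM=\gR(K_1)\vee\gR(K_2)$, is not a lattice triviality ($(X+Y)\cap(X+Z)=X+(Y\cap Z)$ fails in general for subspaces and for von Neumann algebras). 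This is precisely where the paper injects Lemma \ref{lemmaadded} beyond the disjoint case: triviality of $\gR(I')\cap\gR(J')$ and factoriality are what allow the extra factors to be peeled off after the strong-additivity decomposition. Your proposal never supplies that input, so the hardest case of \eqref{speriamo2} remains open.
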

   
\begin{proof}
$(a)$ We have to show that $\bigcap_{J \in \cR, J\ni p} \gR(J)= \bC\cdot\ide$. This is easily done by using additivity and Haag duality. Indeed, the commutant of the algebra $\bigcap_{J \in \cR, J\ni p} \gR(J)$ is the von Neumann algebra generated by the union over the class of intervals $\{J\in\cR, J\ni p\}$, but this covers $\bS^1$ and by additivity this algebra coincides with $\cB(L^2(\bS^1, d\theta))$. Hence the thesis follows.\\
$(b)$ For punctured Haag duality, if one takes the (triple) commutant of
\eqref{eq:HD}, one gets $\gR(I)' = (\cup_{J\in\cI_p} \gR(J))''$. Hence it is enough, by using Haag duality, to show that it holds
 $$\gR(I')= \left(\gR(I'_1) \cup \gR(I'_2)\right)'' \ ,$$
whenever $I'_1, I'_2 \in \cR$ are disjoint sets, not containing $p$,
such that $I' = I'_1 \cup I'_2 \cup \{p\}$.
 By additivity
$\gR(I') \subset (\gR(I'_1) \cup \gR(I'_2) \cup \gR(J))''$
where $J\in \cR$ is any open set with $J \ni p$. Since it holds for every choice of such $J$, taking the intersection with all such algebras and using $(a)$ one gets
 $$
 \gR(I') \subset \left(\gR(I_1')\cup \gR(I_2')\right)''\ .
 $$
The other inclusion is trivially true by locality and Haag duality, and the proof is over.\\
To conclude, let us prove (\ref{speriamo2}). We have three possible cases: (1) $K_1=K_2=\emptyset$, (2) $K_1\equiv K\neq\emptyset$, $K_2=\emptyset$, or vice-versa, and finally (3) $K_1\neq\emptyset$, $K_2\neq\emptyset$.\\
Now, case (1) represents Lemma \ref{lemmaadded}. We prove case (2). In case $I\cap J=K$, we have two different possibilities. Either $K$ coincides with $I$ or, similarly, with $J$, or $K$ is a genuine subset of both. In the first case there is nothing to prove. In the second case, we have that $\gR(K)\subset \gR(I)\cap\gR(J)$. However, the opposite inclusion is also true, indeed, 
\beq
\gR(K)\supset \gR(I)\cap\gR(J)\ \Longleftrightarrow\ \gR(K)' \subset (\gR(I)' \cup \gR(J)')''\:,
\eeq
and this last relation is obvious using Haag duality and additivity, when one notices that $I'$ and $J'$ cover $K'$. We now prove case (3). We shall do it by proving first a stronger statement than additivity, being it what is called in the literature {\em strong additivity}. Notice that, if $I \in \cR$ and $I_1,I_2,\ldots,I_n \in \cR$ are pairwise disjoint subsets of $I$ with $\Int\left(\cup_{i=1}^n \overline{I_i}\right) = I$ then
\beq 
\left(\bigcup_{i=1}^n \gR(I_i)\right)'' = \gR(I)\:,\label{speriamo}
\eeq
the proof for $n=2$ is a straightforward consequence of Haag duality and punctured Haag duality used together.
One can iterate the procedure getting the general case with $n$ arbitrary but finite.
Now, coming back to case (3), using \eqref{speriamo} one gets \eqref{speriamo2} by the following procedure.
We first realize that according to the previous decomposition of elements of $\cR$, we can decompose $I$ and $J$ in terms of disjoint intervals as (for instance) $I=\Int \left ( \overline{K_1}\cup \overline{K_2} \cup \overline{J^\prime}\right)$
and $J= \Int \left( \overline{K_1} \cup \overline{K_2} \cup \overline{I^\prime}\right)$. By \eqref{speriamo} and Lemma \ref{lemmaadded} one gets the thesis by taking the intersection of the algebras $\gR(I)$ and $\gR(J)$.  

\end{proof}

\begin{remark}\label{rem:trivial} $({\bf 1})$ The result in \eqref{speriamo2} has been used by Roberts \cite{Roberts1}, 
together with another property (see, \eqref{eq:citp}), as a mean to proof the absence of superselection sectors. It will be used in the following subsection and in the next section.\\
$({\bf 2})$ By strong additivity one easily derives that the pre-cosheaf is $n$-regular for any $n\in\NN$.\\
$({\bf 3})$ Assuming punctured Haag duality, it is possible to derive Haag duality and local definiteness.
\end{remark}

\subsection{Split property and Borchers' Property B}
In this final part we point out some further remarkable properties of the model, namely, the split property and its standard version \cite{DL}, and the Property B \cite{Borchers}.\\
Split property for inclusions of elements of the precosheaf $\gR$ means that, for any given pair of intervals $I_1, I_2\in\cR$ such that the closure of the interval $I_1$ is contained in the interior of the interval $I_2$, there exists an intermediate factor $\gM$ of type I such that $\ \gR(I_1)\subset \gM\subset\gR(I_2)$. The standard split property means that, referring for instance to the situation above, the class satisfies the split property and that the vacuum is cyclic for the algebras $\gR(I_1), \gR(I_2)$, and the relative commutant $\gR(I_1)'\cap\gR(I_2)$.  
Property B of Borchers refers to a feature of the models always present in quantum field theory, namely, that the local algebras are (purely) infinite. This is described by saying that, given an inclusion $\gR(I)\subset\gR(J)$ and a (non-zero and orthogonal) projection operator $E\in\gR(I)$, then $E\sim \ide\ \mathrm{mod}\ \gR(J)$. The condition amounts to say, informally, that in quantum field theory is not possible to lower the degrees of freedom by a local projection.\\

\noindent We have

\begin{lemma} For the local, cyclic and self-dual pre-cosheaf $\gR$ there hold:
\begin{itemize}
\item[$(a)$] Split property;
\item[$(b)$] Standard split property;
\item[$(c)$] Borchers' Property B.
\end{itemize}
\end{lemma}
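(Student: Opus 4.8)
The plan is to prove the three statements in sequence, using the standard fact that the split property for a pair of local algebras is equivalent to the existence of a suitable nuclearity or modular-nuclearity condition, but since the excerpt has already established the concrete structure of the one-particle space and the generator $A^{1/2}$, I would instead work directly at the one-particle level and invoke the known characterization of the split property via the operator $e^{-\beta H}$. First I would recall that, by Theorem \ref{thoremvacuum1}(d), the Hamiltonian is $H^\otimes$, the second quantization of $A^{1/2}$, whose spectrum is $\{\sqrt{n^2+m^2}\mid n\in\bZ\}$ by Proposition \ref{P1}(a). The crucial input is that $e^{-\beta A^{1/2}}$ is trace class for every $\beta>0$: indeed $\mathrm{Tr}\,e^{-\beta A^{1/2}}=\sum_{n\in\bZ}e^{-\beta\sqrt{n^2+m^2}}<\infty$ since the summand decays like $e^{-\beta|n|}$. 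This is precisely the energy-level (trace-class) condition that, by the Buchholz--Wichmann nuclearity criterion, guarantees the split property; having $m>0$ is what makes the sum converge and is the reason the massive case is tractable.

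For part $(a)$ I would argue as follows. Given $I_1,I_2\in\cR$ with $\overline{I_1}\subset\Int(I_2)$, one has $M_{I_1}\subset M_{I_2}$ with a genuine gap between the supports, so by the regularity established in \eqref{due'} and the one-particle reformulation $\gR(I)=\gR[M_I]$ there is room to insert an intermediate real subspace. The plan is to combine the trace-class property of $e^{-\beta A^{1/2}}$ with the nuclearity estimate to produce a type I intermediate factor $\gM$ with $\gR(I_1)\subset\gM\subset\gR(I_2)$; concretely one uses the nuclearity of the map that sends local elements into the Fock space via $e^{-\beta H^\otimes}$ and the standard functional-analytic machinery of Buchholz, D'Antoni and Longo relating nuclearity to split. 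I would cite \cite{DL} for the reduction of split to the existence of the intermediate type I factor and for the fact that, once split holds together with the Reeh--Schlieder (cyclicity) property already proved in Theorem \ref{RSp}, one upgrades to the standard split property, giving part $(b)$ immediately: the vacuum is cyclic for $\gR(I_1)$, for $\gR(I_2)$ and, because the relative commutant $\gR(I_1)'\cap\gR(I_2)$ contains the local structure on the annular region between the two intervals, also for that relative commutant.

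For part $(c)$, Borchers' Property B, the plan is to derive it as a consequence of the split property together with local definiteness and the type I structure, which is the classical route: Property B follows whenever the local algebras are properly infinite and the net satisfies the split property, and properly infiniteness here is guaranteed because each $\gR(I)$ is a factor (local primarity, Lemma \ref{lemmaadded}) acting on the infinite-dimensional Fock space $\gH_0$ with the vacuum cyclic and separating. Concretely, given $\gR(I)\subset\gR(J)$ and a nonzero projection $E\in\gR(I)$, one chooses an intermediate type I factor $\gM$ with $\gR(I)\subset\gM\subset\gR(J)$ and observes that in a type I factor every nonzero projection is Murray--von Neumann equivalent to the identity modulo the larger algebra, which yields $E\sim\ide\ \mathrm{mod}\ \gR(J)$; I would reference \cite{Borchers} for the original argument.

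The hard part will be verifying the nuclearity estimate rigorously rather than merely asserting the trace-class property of $e^{-\beta A^{1/2}}$: one must control the nuclear norm of the concrete map built from the local Weyl operators and $e^{-\beta H^\otimes}$ uniformly, and translate the one-particle trace bound into a bound on the second-quantized map on Fock space. Because the spatial slice is the compact circle $\bS^1$ rather than the line, the mode sum is genuinely discrete and the estimate is in fact cleaner than in the Minkowski case, so I expect this obstacle to be technical bookkeeping rather than a conceptual difficulty, and I would relegate the detailed nuclear-norm computation to an appendix.
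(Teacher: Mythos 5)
Your parts $(a)$ and $(b)$ follow essentially the paper's own route. For $(a)$ the paper likewise reduces the split property to the trace-class (``Boltzmann factor'') condition $\mathrm{Tr}(e^{-\beta H^\otimes})<\infty$ for all $\beta>0$, citing \cite{DLR} for the implication and \cite{BRII} for the standard reduction of the second-quantized trace condition to the one-particle one; your explicit mode sum $\sum_{n\in\bZ}e^{-\beta\sqrt{n^2+m^2}}<\infty$ is exactly what the paper dismisses as ``obviously true by inspection,'' so the detailed nuclear-norm appendix you anticipate is unnecessary --- the cited literature already contains it. For $(b)$ your sketch is the paper's argument, except that where you say the relative commutant ``contains the local structure on the annular region,'' the paper makes this precise: by Haag duality and \eqref{speriamo2} one has $\gR(I_1)'\cap\gR(I_2)=\gR(I_1')\cap\gR(I_2)=(\gR(K_1)\cup\gR(K_2))''$, where $K_1,K_2$ are the two gap intervals, and cyclicity follows from Theorem \ref{RSp} applied to the $\gR(K_i)$; note that on $\bS^1$ the gap region is a disjoint pair of intervals, not an annulus, and that the equality (not just inclusion) rests on the duality machinery already established.

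Part $(c)$ is where you have a genuine gap. The paper simply invokes Roberts (Corollary 10.2 in \cite{Roberts}), which derives Property B from the \emph{standard} split property of $(b)$. Your attempt to spell the mechanism out rests on two false or unjustified steps. First, it is not true that in a type I factor every nonzero projection is Murray--von Neumann equivalent to the identity: a finite-rank projection in $\cB(\gH)$ is equivalent to no infinite projection, so picking an intermediate type I factor $\gM$ and quoting ``type I'' does not by itself yield $E\sim\ide$; one must rule out that $E\in\gR(I)$ is finite relative to $\gM$, and that is precisely where the cyclicity of the relative commutant (i.e.\ standardness, your part $(b)$, not bare split) enters in Roberts' proof. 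Second, your justification of proper infiniteness --- ``each $\gR(I)$ is a factor acting on the infinite-dimensional Fock space with the vacuum cyclic and separating'' --- does not suffice: the hyperfinite $\mathrm{II}_1$ factor in its standard form acts on an infinite-dimensional space with a cyclic and separating (trace) vector, yet is finite. (And proper infiniteness alone would still not give $E\sim\ide$ for \emph{every} nonzero $E$; that requires type III, which the paper explicitly treats as a \emph{consequence} of Property B, so arguing from type III-ness would be circular.) Your fallback citation of \cite{Borchers} does rescue the statement, but by a different mechanism (spectrum condition and weak additivity, not split), so as written the proposal neither reproduces the paper's one-line derivation from $(b)$ nor supplies a self-contained correct substitute for it.
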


\begin{proof}
$(a)$ One uses the arguments in \cite{DLR}, that say that split property holds whenever the trace-class condition holds. This means that the ``Boltzmann factor" should be trace-class, i.e.,
$$
\mathrm{Tr}(e^{-\beta H^\otimes})<\infty\ , \qquad \forall\beta>0\ .
$$
By standard arguments (see, e.g., \cite{BRII}), the trace-class condition for the Boltzmann factor for the Hamiltonian in second quantization is implied by the same condition for the single-particle Hamiltonian. In our case, this is obviously true by inspection. Hence split property holds. \\
$(b)$ By the validity of the Reeh-Schlieder property combined with the split property, the inclusions are also standard. Namely, the local algebras $\gR(I_1), \gR(I_2)$ have cyclic and separating vacuum. The relative commutant
$\gR(I_1)'\cap\gR(I_2)$ can be written by Haag duality and property \eqref{speriamo2} as $\gR(I_1')\cap\gR(I_2)= (\gR(K_1)\cup\gR(K_2))''$, and since the algebras $\gR(K_i)$ are cyclic their union is such.\\
$(c)$ This part follows by $(b)$ and arguments given, for instance, by Roberts (Corollary 10.2) in \cite{Roberts}.
\end{proof}

\begin{remark}\label{rem:canonical} $({\bf 1})$ Property B implies that the local algebras are factors of type III. We do not investigate whether they are even type $\mathrm{III}_1$.\\
$({\bf 2})$ As a matter of fact the property of inclusions being split yields that if the triple $(\gR(I),\gR(J),\Psi_0)$ is standard split, then there exists a normal and faithful product state on the von Neumann algebra generated by $\gR(I)$ and $\gR(J)'=\gR(J')$. This entails that the last algebra is {\em canonically} isomorphic to the (von Neumann) tensor product, i.e. 
\beq\label{eq:citp}
(\gR(I)\cup\gR(J'))'' \approx \gR(I)\otimes\gR(J')\ .
\eeq
$({\bf 3})$ So finally, collecting together all properties,  our precosheaf $\gR$ is {\em local, irreducible, additive, cyclic, regular, self-dual, definite, factorial and split}.
\end{remark}

\section{Superselection sectors} \label{sec:sss}
Following \cite{BR2}, in this section we show how  the nontrivial topology
of $\bS^1$ individuates nontrivial superselection sectors. As a fact, that will be proved shortly, sectors of Doplicher, Haag and Roberts kind are absent, besides that of the vacuum. Nonetheless, one can construct explicit examples of charges of topological origin.

%

\subsection{Generalized representations and cohomology}
>From now on we adopt definitions and conventions in \cite{RuzziRMP} and  \cite{BR2} specialized
to our case, concerning
net cohomology of posets on $\bS^1$. The employed poset will be $\cR$ equipped with the partial ordering relation $\subseteq$.
Our {\em reference net of  observables}
is the net 
of
unital $C^*$-algebras  $$\cW : \cR \ni I \to \cW(I)\:.$$
If $\widetilde{I} \subseteq I$, the natural isometric  $*$-homomorphisms given by inclusion maps of $\cW\left(\widetilde{I}\right)$ into 
$\cW(I)$ will be denoted by $j_{I\widetilde{I}}$ and are the {\em inclusion morphisms} as in \cite{BR2}. The coherence requirement $j_{I'I} = j_{I'\tilde{I}} j_{\tilde{I}I}$ for ${I}\subseteq \widetilde{I} \subseteq I'$ is trivially fulfilled. \\

\noindent{\em Generalized representations}.   A {\em unitary generalized representation} on $\gH_0$ (for the pre-cosheaf $\cW$) in the sense of \cite{BR2}
is a pair $\{\pi,\psi\}$, where $\pi$ denotes a function that associates
a representation $\pi_I$ of  $\cW(I)$ on the fixed common Hilbert space $\gH_0$ 
 with any $I \in \cR$; $\psi$ denotes a function
that associates a unitary linear operator $\psi_{I\tilde{I}}\in\gB(\gH_0)$
 with any pair $I,\widetilde{I}\in \cR$, with
$\widetilde{I}\subseteq I$. The functions $\pi$ and $\psi$ are required to satisfy the following relations
\beq
\psi_{I\tilde{I}} \pi_{\tilde{I}}(A) = \pi_I j_{I\tilde{I}}(A) \psi_{I\tilde{I}}\:, \quad
A \in \cW\left(\widetilde{I}\right)\:, \: \widetilde{I} \subseteq I\:, \quad  \mbox{and}
\quad \quad  \psi_{I'{I}} \psi_{I\tilde{I}} = \psi_{I'\tilde{I}} \:, \quad
\widetilde{I} \subseteq I \subseteq I' \:.\label{UNR}
\eeq

\remark {\em A priori}, different Hilbert spaces can be used for each representation $\pi_I$ \cite{BR2}. However, we are interested here to base all generalized representations on the GNS Hilbert space of the vacuum, so we make this choice just from the beginning. Similarly the unitarity requirement on the operators $\psi_{I\tilde{I}}$ may be dropped (see \cite{BR2} for the general case).\\

\noindent An {\em intertwiner} from $\{\pi,\psi\}$ to $\{\rho, \phi\}$ is a function $T$ associating a bounded operator
$T_I \in\gB(\gH_0)$ with any $I \in \cR$, and satisfying the relations
\beq
T_I \pi_{I} = \rho_I T_I\:, \quad \quad \mbox{and}
\quad \quad \quad T_I \psi_{I\tilde{I}}  = \phi_{I\tilde{I}}  T_I \:, \quad
\widetilde{I} \subseteq I  \:. \label{intertREP}
\eeq
We denote the set of intertwiners from $\{\pi,\psi\}$ to $\{\rho, \phi\}$ by the symbol  $(\{\pi,\psi\},\{\rho, \phi\})$,
and say that the net representations are {\em unitarily equivalent} if they have a unitary
intertwiner $T$, that is, $T_I$ is a unitary operator for any $I\in \cR$.
$\{\pi,\psi\}$ is {\em irreducible} when the unitary elements of $(\{\pi,\psi\},\{\pi, \psi\})$  are of the form $c\ide$ with 
$c\in \bC$ and $|c|=1$.
Motivations for the given definitions can be found in \cite{BR2} and in the literature quoted therein.\\

\noindent {\em Simplices and cocycles}.  Let us pass to introduce $1$-cocycles of $\cB(\gH_0)$.
In the following $\Sigma_k(\cR)$ will denote the class of singular $k$-simplices of $\cR$ (with $\Sigma_0(\cR)= \cR$).
The inclusion maps $d_i^n: \Delta_{n-1} \to \Delta_n$ between standard simplices -- where,
 following \cite{RuzziRMP}, $\Delta_n$ is the standard $n$-simplex --
are extended to maps $\partial^n_i: \Sigma_n(\cR) \to \Sigma_{n-1}(\cR)$, 
called {\em boundaries}, between singular simplices by 
setting
$\partial_i^n f \doteq f\circ d^n_i$, where $f: \Delta_n \to \cR$ is the order preserving map defining the 
singular $n$-simplex
 of $\cR$. One can easily check, by the definition of $d_i^n$ \cite{RuzziRMP}, that the following relations hold:
 $$\partial_i^{n-1}\circ \partial_j^n = \partial_j^{n-1}\circ \partial_{i+1}^n\:, \quad i\geq j\:.$$
 From now on, we will omit the superscripts from the symbol $\partial_i^n$.
A {\em path} $p$ from $I_0 \in \cR$ to $I_1\in \cR$ is an ordered set $\{b_1,b_2,\ldots,b_n\} \subset \Sigma_1(\cR)$ (where $n$
is an arbitrarily fixed integer depending on the path)
such that $\partial_0b_1=I_0$, $\partial_1b_n=I_1$ and $\partial_0 b_k = \partial_1 b_{k-1}$, for the remaining cases.
$P(I_0,I_1)$ denotes  the class of paths from $I_0 \in \cR$ to $I_1\in \cR$.
  $\pi_1(\cR)$ will indicate the {\em fundamental group} of $\cR$ as established in Definition 2.4 of \cite{RuzziRMP}
  making use of the above-defined notion of paths of  $P(I_0,I_0)$ for some fixed basepoint $I_0 \in \cR$,
   taking patwise-connection of $\cR$ into account. Its definition follows straightforwardly from the analogous definition
 based on the notion of continuous path in a topological space. In particular it arises that $\pi_1(\cR)$ does not depend on the basepoint $I_0$.
   In view of Theorem 2.18 in \cite{RuzziRMP}, since $\bS^1$ is Hausdorff, arcwise connected and $\cR$ is a topological base of
$\bS^1$, it turns out that $\pi_1(\cR)$ coincides with the fundamental group of $\bS^1$, i.e. $\pi_1(\cR) = \bZ$ in our case.
 Finally, it is worth
remarking that every irreducible unitary representation of $\bZ$, $\{\lambda_x(n)\}_{n\in\bZ}$ is one-dimensional, $\lambda_x(n) : \bC \to \bC$,
as the group is Abelian. All those representations are one-to-one labeled by $x\in \bR$ and have the form:
\beq \lambda_x(n) : \bC \ni \alpha \mapsto e^{i n x} \alpha\:,\quad\mbox{for all $n \in \bZ$}. \label{lambdax}\eeq
A $1$-cocycle in $\cB(\gH_0)$
is a field $z : \Sigma_1(\cR) \ni b \mapsto z(b) \in \cB(\gH_0)$ of unitary operators satisfying the {\em $1$-cocycle identity}:
\beq
z(\partial_0 c) z(\partial_2c) = z(\partial_1c)\:, \quad \mbox{for all $c\in \Sigma_2(\cR)$.} \label{one-cocycle}
\eeq
A $1$-cocycle $z$ is said to be a {\em coboundary} if it can be written as $z(b)= W^*_{\partial_0b } W_{\partial_1b }$, $b \in \Sigma_1(\cR)$, for some field of unitaries $\cR \ni I \mapsto W(I) \in \cB(\gH_0)$.
The space of $1$-cocycles will be indicated by $Z^1(\cR, \cB(\gH_0))$.
 Following \cite{RuzziRMP} we say that 
$z, z_1 \in Z^1(\cR, \cB(\gH_0))$ are {\em equivalent in $\cB(\gH_0)$} if  they admits a unitary {\em intertwiner}, i.e.  a 
field $V: \cR \ni I \mapsto V_I \in \cB(\gH_0)$
of unitary operators such that
\beq
V_{\partial_0 b} z(b) = z_1(b) V_{\partial_1 b}\:, \quad  \quad \mbox{for all $b\in \Sigma_1(\cR)$.} \label{equivalence}
\eeq
The set of fields $V$ satisfying (\ref{equivalence}) is denote by $(z,z_1)$.
A $1$-cocycle is said to be {\em trivial} if it is equivalent in $\cB(\gH_0)$ to the  cocycle
$z :  \Sigma_1(\cR) \ni b \mapsto I$, and this is equivalent to say that $z$ is a coboundary.
A $1$-cocycle $z \in Z^1(\cR, \cB(\gH_0))$ is said to be {\em irreducible} if
there are no non-trivial unitary intertwiners  in $(z,z)$. \\ 
The family whose objects are cocycles and whose arrows associated with cocycles  $z$ and $z_1$
are the intertwiners of $(z,z_1)$ forms a category denoted by $Z^1(\cR, \cB(\gH_0))$.\\
Given a unitary generalized representation $\{\pi,\psi\}$ of $\cW$ over $\gH_0$ define
\beq 
\zeta^\pi(b) \doteq \psi^*_{|b|,\partial_0b} \psi_{|b|,\partial_1b}\:, \quad b\in \Sigma_1(\cR)\:,\label{zetapi}
\eeq
as usual $|b|\in \cR$ denotes the support of the simplex $b$.
One can check that $\zeta^\pi$ is a $1$-cocycle of $Z^1(\cR, \cB(\gH_0))$. 
$\{\pi,\psi\}$ is said to be {\em topologically trivial} if $\zeta^\pi$ is trivial.\\
It can be proven \cite{BR2} that
 if the unitary generalized representations $\{\pi,\psi\}$ and $\{\rho,\phi\}$ are unitarily equivalent, then the corresponding
$1$-cocycles $\zeta^\pi$ and $\zeta^\phi$ are equivalent in $\cB(\gH_0)$; moreover, if
the unitary generalized representation $\{\pi,\psi\}$ is topologically trivial, then it is equivalent to one of the form $\{\rho,\ide\}$, where  all
 $\ide_{\tilde{I} I}$ are  the identity operators.\\
Finally we remind that
the relation between $Z^1(\cR, \cB(\gH_0))$ and $\pi_1(\gR)$ is obtained as follows (Theorem 2.8 in \cite{RuzziRMP} specialized to the case of $\cR$).  If $p=\{b_1,\ldots,b_n\} \in P(I_0,I_1)$, one defines
$z(p)\doteq z(b_n)z(b_{n-1})\cdots z(b_1)$.  \\

\begin{theorem} \label{teoruzzi} Consider $z\in Z^1(\cR, \cB(\gH_0))$ and
fix a $I_0\in \cR$. For every path $p\in P(I_0,I_0)$ and the associated element $[p] \in \pi_1(\gR)$, define
\beq
\pi_z([p]) \doteq z(p)\:, \label{piz}
\eeq
The map $Z^1(\cR, \cB(\gH_0)) \ni z \mapsto \pi_z$ is well defined and maps $1$-cocycles $z$ to unitary
representations $\pi_z$ of $\pi_1(\cR)$ in $\gH_0$.
If $z,z_1 \in Z^1(\cR, \cB(\gH_0))$ are equivalent in $\cB(\gH_0)$ the corresponding representations $\pi_z, \pi_{z_1}$ of $\pi_1(\cR)$ are unitarily
equivalent. Finally, up to equivalence, the map $Z^1(\cR, \cB(\gH_0)) \ni z \mapsto \pi_z$ is injective.
\end{theorem}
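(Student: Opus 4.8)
The plan is to establish the four assertions as a specialisation of Theorem~2.8 of \cite{RuzziRMP}: the hypotheses are met here because we have already identified $\pi_1(\cR)=\bZ$ and $\cR$ is pathwise connected. Everything rests on the single fact that the cocycle identity \eqref{one-cocycle} forces $z(p)$ to be a homotopy invariant of the path $p$. I would begin by recording two elementary consequences of \eqref{one-cocycle} to be used throughout. Evaluating \eqref{one-cocycle} on a degenerate $2$-simplex all of whose $1$-faces coincide with a degenerate edge $b$ (the constant map onto a single $I\in\cR$) gives $z(b)z(b)=z(b)$, hence $z(b)=\ide$ on degenerate edges; and evaluating it on a $2$-simplex whose faces are an edge $b$, its orientation reversal $\bar b$, and a degenerate edge gives $z(b)z(\bar b)=\ide$, so $z(\bar b)=z(b)^{*}$.

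With these in hand, parts $(a)$ and $(b)$ are bookkeeping. Homotopy of paths in $\cR$ is generated by the elementary moves in which a two-edge stretch $\{\partial_2 c,\partial_0 c\}$ is replaced by the single edge $\partial_1 c$ of a $2$-simplex $c\in\Sigma_2(\cR)$, together with insertion and deletion of degenerate edges. The identity \eqref{one-cocycle}, read as $z(\partial_0 c)z(\partial_2 c)=z(\partial_1 c)$, says exactly that the ordered product $z(p)$ is unchanged under the first kind of move, while $z(b)=\ide$ on degenerate edges handles the second; hence $\pi_z([p])\doteq z(p)$ in \eqref{piz} depends only on $[p]\in\pi_1(\cR)$, which is $(a)$. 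For $(b)$, concatenation of loops at $I_0$ is juxtaposition of the defining edge sequences, so $z$ turns concatenation into operator multiplication and $\pi_z$ is a homomorphism; the constant loop goes to $\ide$ and $\pi_z([p]^{-1})=z(\bar p)=z(p)^{*}=\pi_z([p])^{*}$ by the reversal relation. Since each $z(b)$ is unitary, so is each $\pi_z([p])$, and $\pi_z$ is a unitary representation of $\pi_1(\cR)$.

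For $(c)$, let $V\in(z,z_1)$ be a unitary intertwiner, so that \eqref{equivalence} holds for every edge. A short induction on the length of a loop $p$ based at $I_0$, applying \eqref{equivalence} edge by edge, shows that the boundary unitaries attached to the interior vertices cancel in consecutive pairs and the single unitary $V_{I_0}$ is carried through the whole product, yielding $\pi_{z_1}([p])=V_{I_0}\,\pi_z([p])\,V_{I_0}^{*}$ for all $[p]\in\pi_1(\cR)$. Thus $V_{I_0}$ implements a unitary equivalence $\pi_z\cong\pi_{z_1}$, which is exactly the statement that equivalent cocycles give unitarily equivalent representations.

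Finally, $(d)$ is the converse of $(c)$ and is the only point requiring real work: I must show that $\pi_z\cong\pi_{z_1}$ forces $z\sim z_1$. Given a unitary $U$ with $U\pi_z([p])=\pi_{z_1}([p])U$ for every loop at $I_0$, I would reconstruct an intertwiner by parallel transport: choose, for each $I\in\cR$, a path $q_I$ from $I_0$ to $I$ (possible since $\cR$ is pathwise connected) and set $V_I\doteq z_1(q_I)\,U\,z(q_I)^{*}$, a unitary. The relation \eqref{equivalence} for an edge $b$ then follows by comparing $V_{\partial_0 b}$ with $V_{\partial_1 b}$ computed along the path $q_{\partial_0 b}$ prolonged by $b$, using the multiplicativity of $z$ and $z_1$ under concatenation. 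The genuine obstacle is the well-definedness of $V_I$: passing from $q_I$ to a second path $q_I'$ changes the defining expression by $z_1$ and $z$ evaluated on the loop obtained by going out along $q_I$ and back along $q_I'$, and the two changes cancel precisely because $U$ intertwines $\pi_z$ and $\pi_{z_1}$ on that loop. This is the step in which the hypothesis is fully used, and it is what makes the induced map on equivalence classes injective, completing $(d)$.
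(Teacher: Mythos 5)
Your proof is correct and complete: homotopy invariance of $z(p)$ from the cocycle identity together with the degenerate-simplex relation $z(b)=\ide$ and the reversal relation $z(\bar b)=z(b)^*$, conjugation by $V_{I_0}$ for the direction ``equivalent cocycles give unitarily equivalent representations'', and the parallel-transport construction $V_I\doteq z_1(q_I)\,U\,z(q_I)^*$ (with the loop $\overline{q_{\partial_0 b}}\ast b\ast q_{\partial_1 b}$ doing the work) for injectivity. The paper itself gives no proof of this statement --- it imports it wholesale from Theorem 2.8 of \cite{RuzziRMP} --- and your blind reconstruction is essentially the standard argument of that reference; the only caveat is that the cancellations in your intertwiner computation presuppose the consistent orientation convention of \cite{RuzziRMP} ($\partial_1 b_1=I_0$, $\partial_0 b_n=I_1$, $\partial_1 b_{k+1}=\partial_0 b_k$), whereas the convention as literally printed in this paper has $\partial_0$ and $\partial_1$ swapped and would not cancel, an apparent typo rather than a flaw in your argument.
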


\noindent Notice that, as a consequence, $z\in Z^1(\cR, \cB(\gH_0))$ is trivial if and only if the associated representation of
$\pi_z$ is the trivial.\\

\noindent {\em Topological superselection sectors}.  
 Let us pass to the selection criterion and the topological superselection sectors introduced in 
\cite{BR2}.
 Consider the unitary generalized representation $({\bpi_0}, \ide)$ of $\cW$ over $\gH_0$. It enjoys the following properties: It is faithful and defined over a (complex infinite dimensional) separable Hilbert space $\gH_0$ and we have seen that the pre-cosheaf of von Neumann algebras $\gR$ is irreducible, cyclic, self-dual, regular and split. 
Finally $({\bpi_0}, \ide)$  is topologically trivial since the cocycle $\zeta^{\bpi_0}$ associated with $({\bpi_0}, \ide)$ is the simplest co-boundary (hence the associated unitary representation of $\pi_1(\bS^1)$ is trivial, too).
All these requirement are those assumed in \cite{BR2} to define a {\em reference representation} and state 
a selection criterion which generalizes DHR criterion.\\
Following \cite{BR2} we say that a unitary generalized representation $\{\pi,\psi\}$ over $\gH_0$
is {\em a sharp excitation of the reference representation} $\{{\bpi_0}, \ide\}$, if for any $O\in \cR$ and for any simply connected open set $N \subset \bS^1$, such that $\overline{O} \subset N$, there holds
\beq \{\pi,\psi\}\rest_{O'\cap N} \quad \cong  \quad \{{\bpi_0}, \ide\}\rest_{O'\cap N}\:. \label{SS}\eeq
This amounts to saying that there is a family $W^{NO} 
\doteq \{W^{NO}_I \:|\:  \overline{I} \subset N\:, I \subset  O'\}$ of 
unitary operators in $\gH_0$ such that
\begin{itemize}
\item[$(1)$]   $W^{NO}_I\pi_I = {\bpi_0}_I W^{NO}_I$;
\item[$(2)$]    $W^{NO}_I\psi_{I\widetilde{I}} = W^{NO}_{\widetilde{I}}$, for all $\widetilde{I} \subset I$;
\item[$(3)$]   $W^{NO}  = W^{N_1O}$  for any simply connected open set  $N_1$ with $N \subset  N_1$.
\end{itemize}

\noindent These three requirements represent the {\em selection criterion}.
It turns out that $W^{NO}$ is independent form the region $N$. 
The unitary equivalence classes of irreducible unitary generalized representations satisfying the selection criterion are the 
{\em superselection sectors} and the analysis of their charge structure and
topological content, in the case of a generic globally hyperbolic spacetime  with dimension $\geq 3$
was the scope of the work \cite{BR2}. We are dealing with a (particular) $2$-dimensional spacetime, so we expect that some of the results found there cannot apply.\\

\noindent{\em Localized cocycles}. One of the most important result, established in Theorem 4.3 \cite{BR2}, is that, for globally hyperbolic spacetimes with dimension $\geq 3$, the $C^*$-category whose objects 
are {\em sharp excitations} of $\{{\bpi_0}, \ide\}$, with arrows given by 
intertwiners (\ref{intertREP}), is equivalent to the subcategory of $1$-cocycles whose objects and arrows fulfill 
a natural  localization property, as far as applications to quantum field theory are concerned, with respect to ${\bpi_0}$. Let us define this category specializing to our case\footnote{Where, however, there is no guarantee for the validity of the equivalence theorem.}. \\
We define now the category of (localized) cocycles $Z^1(\gR)$ with respect to the reference representation $\{{\bpi_0}, \ide\}$
and the associated pre-cosheaf of von Neumann algebras $\gR$. The objects of $Z^1(\gR)$ are 
$1$-cocycles $z\in Z^1(\cR,\cB(\gH_0))$ fulfilling the further localization requirement 
\beq
z(b) \in  \gR(|b|)\ ,
\eeq
 for every $b\in \Sigma_1(\cR)$, and whose arrows are the unitary intertwiners $V$ fulfilling the analogous  localization requirement 
 \beq
 V_I \in \gR(I)\ ,
 \eeq
for every $I \in \cR$.
In particular, we say that two cocycles
$z, z_1 \in Z^1(\cR)$ are {\em equivalent} if they are equivalent in $\cB(\gH_0)$
by means of a unitary intertwiner which satisfies the localization requirement mentioned before.
A $1$-cocycle $z \in Z^1(\gR)$ is said to be {\em irreducible} if there are no unitary intertwiners  in $(z,z)$ satisfying
 the localization requirement different from $c\ide$ for $c\in \bC$ with $|c|=1$. Notice that irreducibility in $Z^1(\cR,\cB(\gH_0))$ is much stronger than irreducibility 
 in $Z^1(\gR)$. \\ From now on we consider cocycles in $Z^1(\gR)$ only.
 In the following we establish the existence of nontrivial elements of $Z^1(\gR)$. Afterwards we show that every 
 such $1$-cocycle individuates a class of (unitarily equivalent) generalized representations of $\gR$ verifying 
the selection criterion. 
 
\remark The absence of  irreducible cocycles different from characters when the fundamental group of the manifold is Abelian, as established in  Corollary 6.8 in \cite{BR2}, no longer holds in low dimension. 
Indeed, our whole business, later on, will be on showing explicit examples.

\subsection{Triviality of $Z^1_t(\gR)$}
We wish to deal with the intuitive idea that in our model there are no sectors of DHR type besides that of the vacuum. 
This is based on the fact, proven in \cite{BR2}, that the new selection criterion includes the one of Doplicher, Haag and Roberts. Indeed, if we consider representations of the form $(\bpi, \ide)$ satisfying 
\eqref{SS}, it is easy to show that the intertwiners $W$ between the representations $\bpi$ and $\bpi_0$, do not depend on the choice of the regions $N$ and $I$, as shown in the requirements $(1)$,$(2)$ and $(3)$, following \eqref{SS}. Hence, we are back the criterion originally introduced by the cited authors, i.e. that $\pi$ is {\em locally} unitarily equivalent to the vacuum representation $\bpi_0$, 
\beq\label{DHRC}
\pi\rest_{O'}\  \cong\   \bpi_0 \rest_{O'}\ ,\qquad O\in\cR.
\eeq

One way to prove that in our model this entails that the two representations are {\em globally} unitarily equivalent, would be to show that the first cohomology associated with representations of the form $(\bpi,\ide)$, satisfying \eqref{DHRC}, is (quasi-) {\em trivial}, following the germinal idea of Roberts (see, e.g., \cite{Roberts1,Roberts}). The subcategory of $Z^1(\gR)$ composed by objects as 1-cocycles associated with such representations is 
termed $Z_t^1(\gR)$. As recalled in the introduction, we wish to emphasize that M\"uger \cite{MM}, respectively Ciolli \cite{FC}, had independently used similar ideas to prove the absence of non trivial sectors in the case of general massive scalar field theories, respectively free massless scalar field theories, on two dimensional Minkowski spacetime.\\

We cite for record the criterion of Roberts, rephrased for our purposes. If $p=\{b_1,\dots,b_n\}$ is any path we define $\partial b$ to be the ordered set $\{\partial_0 b_1,\partial_1 b_n\}$, and we identify any 1-simplex $b$ with a path 
$\{b\}$. One has;

\begin{theorem} Let $\gR$ be a pre-cosheaf of von Neumann algebras over $\gR$ satisfying for each $b\in\Sigma_1(\gR)$ the following conditions:
\begin{itemize}
\item[$(a)$] $\displaystyle \bigcap_{\partial p=\partial b} \gR(|p|) = (\gR(\partial_0 b)\cup\gR(\partial_1 b))''\ ,$
\item[$(b)$] If $\overline{\partial_0 b} \subset (\partial_1 b)'$, then the von Neumann algebra generated by $\gR(\partial_0 b)$ and $\gR(\partial_1 b)$ is canonically isomorphic to the von Neumann tensor product $\gR(\partial_0 b)\otimes\gR(\partial_1 b)$.
\end{itemize} 
Then, for any $z\in\ Z^1_t (\gR)$ there are associated unique Hilbert spaces with support $\ide$, $H(I)\in\gR(I)$, $I\in\cR$, such that $z(b)H(\partial_1 b) = H(\partial_0 b)$, $b\in\Sigma_1(\gR)$. In particular, any object of $Z^1_t(\gR)$, is a direct sum of trivial 1-cocycles.
\end{theorem}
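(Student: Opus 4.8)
The plan is to reconstruct Roberts' decomposition argument in our concrete setting, where the structural hypotheses $(a)$ and $(b)$ are exactly the properties already established for $\gR$. First I would observe that hypothesis $(a)$ is precisely the statement \eqref{speriamo2} of Theorem~\ref{Tgenprop} read along a $1$-simplex: for $b\in\Sigma_1(\cR)$ the paths $p$ with $\partial p=\partial b$ sweep out the intervals containing both $\partial_0 b$ and $\partial_1 b$, so the left-hand intersection collapses to $(\gR(\partial_0 b)\cup\gR(\partial_1 b))''$. Likewise hypothesis $(b)$ is the canonical tensor product identification \eqref{eq:citp} furnished by the standard split property. Thus the two hypotheses are verified for free, and the real work is to produce the Hilbert spaces $H(I)$.

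The central mechanism is to turn a cocycle $z\in Z^1_t(\gR)$ into a coherent family of isometric intertwiners. For a fixed basepoint $I_0$ and any $I\in\cR$, I would choose a path $p\in P(I_0,I)$ and set tentatively $W_I\doteq z(p)$; the cocycle identity \eqref{one-cocycle} guarantees that this is path-independent \emph{up to} the holonomy captured by $\pi_z$ from Theorem~\ref{teoruzzi}. Because $z$ lies in the subcategory $Z^1_t(\gR)$ of cocycles coming from DHR-type representations $(\bpi,\ide)$ satisfying \eqref{DHRC}, the associated representation of $\pi_1(\cR)=\bZ$ is forced to be trivial, so the would-be cochain $I\mapsto W_I$ is genuinely well defined and yields $z(b)=W_{\partial_0 b}W^*_{\partial_1 b}$. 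The localization requirement $z(b)\in\gR(|b|)$ together with hypothesis $(a)$ then confines the relevant spectral projections: the support projection $H(I)$ of $W_I$ must lie in $\bigcap_{\partial p=\partial b}\gR(|p|)=(\gR(\partial_0 b)\cup\gR(\partial_1 b))''$, and intersecting over the admissible simplices pins $H(I)$ inside $\gR(I)$. The tensor factorization $(b)$ is what lets one split off these supports multiplicatively, so that $z(b)H(\partial_1 b)=H(\partial_0 b)$ holds as an identity of partial isometries with matching source and range projections.

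I expect the main obstacle to be the passage from the formally-defined intertwiners to genuine \emph{Hilbert spaces with support $\ide$} inside each $\gR(I)$ — that is, verifying that the $W_I$ are isometries with the correct initial and final projections and that these projections are minimal/orthogonal in the sense required to call $H(I)$ a Hilbert space in the von Neumann-algebraic sense. This is where the triviality of $\pi_z$ must be leveraged carefully: without it the family $\{W_I\}$ is only defined up to a unitary character of $\bZ$, and the supports would not close up consistently around the circle. The factoriality and split property (Lemma in the previous subsection) are the tools that rule out the pathological partial-isometry behaviour. Once the $H(I)$ are in hand, the final assertion — that any object of $Z^1_t(\gR)$ is a direct sum of trivial $1$-cocycles — follows by decomposing the identity of $\gR(I_0)$ into the orthogonal support projections and observing that $z$ restricted to each summand is a coboundary, hence trivial in the sense of \eqref{equivalence}.
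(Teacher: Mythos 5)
You should first be aware that the paper does not prove this theorem at all: it is Roberts' theorem, quoted ``for record'' from \cite{Roberts1,Roberts}, and the paper's entire contribution at this point is the verification that the hypotheses $(a)$ and $(b)$ hold for its precosheaf --- $(a)$ via \eqref{speriamo2} of Theorem~\ref{Tgenprop}, $(b)$ via the canonical tensor identification \eqref{eq:citp} from the standard split property --- after which the reader is explicitly referred to the cited literature for the proof of the statement itself. Your first paragraph reproduces exactly this verification and matches the paper. Everything after that, however, is your own attempted reconstruction of Roberts' proof, and it contains a genuine gap at precisely the point where the theorem has content.

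Your construction $W_I \doteq z(p)$ for a path from a basepoint produces \emph{global} unitaries and establishes only that $z$ is a coboundary in $\cB(\gH_0)$. This is essentially automatic for $z \in Z^1_t(\gR)$: such cocycles arise as $z(b) = V_{\partial_0 b}V^*_{\partial_1 b}$ from the DHR intertwiners of \eqref{DHRC}, so they are coboundaries in $\cB(\gH_0)$ and $\pi_z$ is trivial by Theorem~\ref{teoruzzi}. The theorem asserts something much stronger: the existence of Hilbert spaces \emph{inside the local algebras}, $H(I) \subset \gR(I)$, in Roberts' sense --- norm-closed subspaces of isometries with $\psi^*\phi \in \bC\ide$ and support $\ide$, i.e.\ $\sum_i \psi_i\psi_i^* = \ide$ for an orthonormal basis --- transported coherently by the cocycle. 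Your phrase ``the support projection $H(I)$ of $W_I$'' is incoherent: $W_I$ is unitary, its support projection is $\ide$, and no intersection over simplices converts the global $W_I$ into localized data; you are conflating Roberts' Hilbert spaces with spectral projections of a partial isometry. The correct use of hypothesis $(a)$ applies to $z(b)$ itself: since $z(b) = z(p)$ for every path with $\partial p = \partial b$ (this is where triviality of $\pi_z$ is genuinely needed on the circle, where paths with equal endpoints need not be homotopic) and $z(p) \in \gR(|p|)$ by the localization of the cocycle, one gets $z(b) \in \bigcap_{\partial p = \partial b}\gR(|p|) = (\gR(\partial_0 b)\cup\gR(\partial_1 b))''$; then, for $1$-simplices with $\overline{\partial_0 b} \subset (\partial_1 b)'$, hypothesis $(b)$ identifies this algebra with $\gR(\partial_0 b)\otimes\gR(\partial_1 b)$, and Roberts' argument extracts from the unitary $z(b)$ an expansion $z(b) = \sum_i \psi_i\phi_i^*$ with $\{\psi_i\} \subset \gR(\partial_0 b)$ and $\{\phi_i\} \subset \gR(\partial_1 b)$ orthonormal bases of the sought Hilbert spaces, the cocycle identity then yielding coherence, uniqueness, and the direct-sum decomposition. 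You yourself flag this passage as ``the main obstacle,'' but invoking factoriality and split to ``rule out pathological partial-isometry behaviour'' is not an argument --- it is exactly the missing core of the proof, which the paper, for its part, honestly delegates to \cite{Roberts1,Roberts} rather than supplying.
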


\noindent We shall focus on proving the hypothesis of the Theorem, and address the reader to the cited literature for the understanding of the many details connected with its statement. \\

We notice immediately that the conditions $(a)$ and $(b)$ have been already verified for our {\em local, irreducible, additive, cyclic, regular, self-dual, definite, factorial and split} pre-cosheaf $\gR$. Indeed, as far as the first condition $(a)$ is concerned, it suffices to prove that it holds $ \gR(\partial_0 b)\cap\gR(\partial_1 b) = (\gR(\partial_0 b)\cup\gR(\partial_1 b))'' $, which however we recall to be condition \eqref{speriamo2}, proven in Theorem~\ref{Tgenprop}; whilst the second one is the property \eqref{eq:citp}, coming from the split property.\\

\noindent The above proves that any 1-cocycle corresponding to the representation $\pi$ (non necessarily irreducible), satisfying the DHR criterion is either a trivial, or a direct sum of trivial 1-cocycles.

 \subsection{Existence and properties of nontrivial elements of $Z^1(\gR)$.}  Let us construct localized $1$-cocycles w.r.t. the reference representation $\{{\bpi_0}, \ide\}$. \\
To this end we need a preliminary construction. 
First of all, let us fix an orientation (for instance, anti-clockwise) to $\bS^1$ and, in the following, we shall refer to that orientation for assigning the initial and final points
to $0$-simplices.
Afterwards, assign smooth functions to $0$-simplices
\beq  
\chi: \cR \ni I \mapsto \chi^{I} \in C^\infty(I, \bR) \label{assignchi}
\eeq
such that, referring to the chosen orientation of $\bS^1$:
\begin{itemize}
\item[$(i)$]  $\chi^{I}(\theta) \in [0,1]$,
\item[$(ii)$] $\chi^{I}(\theta) =0$ in a  neighborhood of the initial point of $I$,
\item[$(iii)$] $\chi^{I}(\theta) =1$ in a  neighborhood  of the final point  of $I$. 
\end{itemize}
Now, consider a $1$-simplex $b$. Extend
$\chi^{(\partial_1b)}$ and $\chi^{(\partial_0b)}$ smoothly and uniquely as constant functions over $|b| \setminus \partial_1b$ and $|b| \setminus \partial_0b$. The functions so extended over the whole $|b|$ will be denoted by $\chi^{(\partial_1b)}$ and $\chi^{(\partial_0b)}$ again. Finally, for every $b\in \Sigma_1(\cR)$ we define the function $\chi^{(b)} \in C^\infty(|b|, \bR)$
\beq \chi^{(b)} \doteq \chi^{(\partial_1b)} -  \chi^{(\partial_0b)} \label{assignchi2}\:.\eeq
notice that this function vanishes in a neighborhood of each endpoint of $|b|$. Therefore 
$\chi^{(b)}$ can be extended uniquely to a smooth function defined on the whole circle $\bS^1$ and supported in $|b|$. We shall
denote by $\chi^{(b)}$ again this unique extension.\\
Let us come to $1$-cocycles. 
We define (changing slightly notation) 
\beq
Z(f,g)\doteq {\bpi_0}\left(W(f,g) \right) =  W\left[2^{-1/2} (A^{1/4} f + iA^{-1/4} g)\right]\:, \quad \mbox{for $(f,g) \in \cS$}\:. \label{Vf}
\eeq

In the following, to define a $1$-cocycle localized at $b\in \Sigma_1(\cR)$, we shall replace the arguments $f$ 
and $g$ with $\chi^{(b)}$-smeared restrictions of those functions to 
$0$-simplices $|b|$ for any $1$-simplex $b$. The restriction is necessary in order to fulfill the localization requirement of $1$-cocycles.
 The smearing procedure is necessary too, at least for the entry of $A^{1/4}$, whose domain generally
does not includes elements $\chi_{|b|}g$,
$\chi_{|b|}$ being the characteristic function of the set $|b|$.
 It, however, includes  every smoothed function
 $\chi^{(b)} g$ when $(f,g) \in \cS$. \\
We are now in place to state our first result, showing the existence of topological $1$-cocycles. The following theorem also establishes 
the independence from 
$\chi$, up to equivalence, the irreducibility of cocycles and the fact that they are inequivalent 
if $f\neq f'$ or $g\neq g'$. \\

\begin{theorem} \label{propcocycle}
Fix an orientation of $\bS^1$, an assignment   $\chi : \cR \ni I \mapsto \chi^{I}$ as
in (\ref{assignchi}) and define $\chi^{(b)}$ as in (\ref{assignchi2}).
For every choice of $(f,g) \in \cS$ the map 
\beq
z^{(\chi)}_{(f,g)} : \Sigma_1(\cR) \ni b \mapsto Z\left( \chi^{(b)}f, \chi^{(b)}g\right) \label{cocyclef}\:,
\eeq
 is a  $1$-cocycle of $Z^1(\gR)$. The following further facts hold.
 \begin{itemize}
\item[$(a)$] Every $1$-cocycle $z^{(\chi)}_{(f,g)}$ is irreducible.
\item[$(b)$]  For fixed $(f,g)\in \cS$, but different assignments $\chi_1,\chi_2$, 
 $z^{(\chi_1)}_{(f,g)}$ and $z^{(\chi_2)}_{(f,g)}$ are equivalent.
\item[$(c)$]   For a fixed assignment $\chi$,  $z^{(\chi)}_{(f,g)}$ and $z^{(\chi)}_{(f',g')}$ are equivalent if and only 
 if $f=f'$ and $g=g'$.
\item[$(d)$] If the assignment $\chi : \cR \ni I \mapsto \chi^{I}$ is covariant\footnote{Covariant assignments $\chi : \cR \ni I \mapsto \chi^{I}$
with respect to the isometry group of $\theta$-displacement on $\bS^1$ do exist
 as the reader can easily prove.}
with respect to the isometry group of $\theta$-displacement on $\bS^1$:
\beq\chi^{(\beta_r(I))} = \beta^*_r(\chi^{(I)}) \quad \forall I \in \cR, \forall r\in \bR, \label{covset} \eeq
then, for every $(f,g) \in \cS$  and for every $r\in\bR$ and $b\in \Sigma_1(\cR)$,
$$U_{(r,0)} z^{(\chi)}_{(f,g)}(b) U_{(r,0)}^* = z_{(\beta^*_r (f) ,\beta^*_r(g))} (\beta_r(b))\:,$$
where $U_{(r,0)}$ is the one-parameter unitary group implementing $\theta$-displacements $\beta_r$ and leaving the vacuum invariant, introduced in 
theorem \ref{thoremvacuum1} and $\beta_r^*$ is the pull-back action of $\theta$-displacements on functions defined on $\bS^1$.
\end{itemize}
\end{theorem}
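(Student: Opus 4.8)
The plan is to reduce every claim to the Weyl relations for the unitaries $Z(\cdot,\cdot)$ of \eqref{Vf}, together with two elementary facts about the functions $\chi^{(b)}$: their additivity across a simplex, and their winding around $\bS^1$. I first check that \eqref{cocyclef} defines a localized $1$-cocycle. Localization is immediate, since $\chi^{(b)}$ supported in $|b|$ forces $(\chi^{(b)}f,\chi^{(b)}g)\in\cS_{|b|}$ and hence $z^{(\chi)}_{(f,g)}(b)\in\gR(|b|)$. For the cocycle identity \eqref{one-cocycle} on a $2$-simplex $c$, I expand the three functions $\chi^{(\partial_i c)}$ through $\chi^{(b)}=\chi^{(\partial_1 b)}-\chi^{(\partial_0 b)}$; inside the single arc $|c|$ the initial/final ordering is unambiguous, so the vertex contributions telescope to the pointwise identity $\chi^{(\partial_0 c)}+\chi^{(\partial_2 c)}=\chi^{(\partial_1 c)}$. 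Because both entries of each $Z$ are built from the \emph{same} pair $(f,g)$, the symplectic phase produced by the Weyl relation, namely $\sigma\big((\chi^{(\partial_0 c)}f,\chi^{(\partial_0 c)}g),(\chi^{(\partial_2 c)}f,\chi^{(\partial_2 c)}g)\big)$, has integrand $\chi^{(\partial_0 c)}\chi^{(\partial_2 c)}(fg-fg)=0$ and so vanishes. This cancellation of all pairwise phases is the recurring mechanism of the proof.

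The quantitative input for (c) is the holonomy along the generator of $\pi_1(\cR)=\bZ$. For a loop $p=\{b_1,\dots,b_n\}$ winding once, the same cancellation gives $z^{(\chi)}_{(f,g)}(p)=Z(\Xi f,\Xi g)$ with $\Xi\doteq\sum_k\chi^{(b_k)}$, and a direct inspection of the telescoping steps $\chi^{(b_k)}=\chi^{(\partial_1 b_k)}-\chi^{(\partial_0 b_k)}$ shows they tile the circle and sum to the \emph{constant} $\pm1$, the sign recording the orientation of the loop. Thus $z(p)=Z(\pm f,\pm g)$ is nontrivial exactly when $(f,g)\neq(0,0)$. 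For (c), suppose a localized unitary intertwiner $V$, with $V_I\in\gR(I)$, relates $z^{(\chi)}_{(f,g)}$ and $z^{(\chi)}_{(f',g')}$. Chaining \eqref{equivalence} around $p$ and using $z(p)=Z(\pm f,\pm g)$ yields $V_{I_0}Z(f,g)V_{I_0}^{*}=Z(f',g')$. Testing this against $W[\psi]$ for $\psi\in M_{I'_0}$, which commutes with $V_{I_0}\in\gR(I_0)=\gR[M_{I_0}]$ by locality, and using \eqref{Ksigmaext} reduces the identity to $\sigma\big(K(f-f',g-g'),\psi\big)=0$ for all such $\psi$; by Haag duality \eqref{centralnew} this means $K(f-f',g-g')\in(M_{I'_0})'=M_{I_0}$. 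Since the basepoint $I_0$ is arbitrary, taking two disjoint intervals and invoking Lemma \ref{lemmaadded} ($M_{I_0}\cap M_{J_0}=\{0\}$) forces $K(f-f',g-g')=0$, whence $f=f'$ and $g=g'$ by injectivity of $K$.

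For (a) I exploit that on any simply connected $N\subsetneq\bS^1$ the cocycle is a coboundary: extending each $\chi^{I}$ to $\widehat\chi^{\,I}$ on $N$ and putting $W_I\doteq Z(\widehat\chi^{\,I}f,\widehat\chi^{\,I}g)$, the same phase cancellation gives $z(b)=W_{\partial_0 b}^{*}W_{\partial_1 b}$. A localized self-intertwiner $V$ then satisfies $W_{\partial_0 b}V_{\partial_0 b}W_{\partial_0 b}^{*}=W_{\partial_1 b}V_{\partial_1 b}W_{\partial_1 b}^{*}$, so $\widetilde V_I\doteq W_IV_IW_I^{*}$ is constant on the connected region $N$. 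Since conjugation by the Weyl unitary $W_I$ maps $\gR(I)=\gR[M_I]$ onto itself, $\widetilde V\in\gR(I)$ for every $I\subset N$, in particular for intervals shrinking to a point; local definiteness (Theorem \ref{Tgenprop}(a)) then gives $\widetilde V=c\,\ide$ with $|c|=1$, hence $V_I=c\,\ide$. For (b) I propose the explicit localized unitary $V_I\doteq Z(\delta^I f,\delta^I g)$, where $\delta^I\doteq\chi_1^{I}-\chi_2^{I}$ is supported in $I$; the relation \eqref{equivalence} then reduces, after the usual phase cancellation, to the identity $\delta^{\partial_0 b}-\delta^{\partial_1 b}=\chi_2^{(b)}-\chi_1^{(b)}$, which is just the defining formula for $\chi^{(b)}$ applied to $\chi_1$ and $\chi_2$.

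Part (d) is a direct covariance computation. Since $U_{(r,0)}$ implements $\beta_r$ on $\bpi_0$ (Theorem \ref{thoremvacuum1}), one has $U_{(r,0)}Z(\Phi,\Pi)U_{(r,0)}^{*}=Z(\beta_r^{*}\Phi,\beta_r^{*}\Pi)$; multiplicativity of the pullback $\beta_r^{*}$ together with the covariance hypothesis \eqref{covset} and $\partial_i(\beta_r b)=\beta_r(\partial_i b)$ gives $\beta_r^{*}\chi^{(b)}=\chi^{(\beta_r b)}$, and the asserted identity follows. I regard the cocycle identity and (d) as routine; the genuine difficulty lies in the holonomy computation $\Xi\equiv\pm1$, the one step where the non-simple connectedness of $\bS^1$ enters and where the global winding must be separated cleanly from the purely local telescoping, and in (c), where the single operator equation $V_{I_0}Z(f,g)V_{I_0}^{*}=Z(f',g')$ must be converted into a support statement, for which Haag duality and the triviality of $M_{I_0}\cap M_{J_0}$ are essential. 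The irreducibility argument in (a) is delicate only in that it must rest on local definiteness at the one-particle level rather than on any modular (Bisognano--Wichmann) input, in keeping with the paper's self-imposed constraints.
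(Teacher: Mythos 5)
Your proof is correct in substance, but parts $(a)$ and $(c)$ take genuinely different routes from the paper, while the cocycle identity, $(b)$ and $(d)$ coincide with the paper's arguments (same telescoping of the $\chi$'s, same vanishing of the symplectic phase via $fg-gf=0$, same intertwiner $V_I=Z(\delta^I f,\delta^I g)$, same covariance computation). For $(c)$ the paper never leaves the local level: it conjugates the relation $V_{\partial_0b}\widetilde V^*_{\partial_1b}=z^{(\chi)}_{(f'-f,g'-g)}(b)e^{i\vphi}$ by Weyl generators $Z(r,s)$ supported in the middle interval $I_b$ where $\chi^{(b)}=1$, reads off $f=f'$, $g=g'$ on $I_b$ from the arbitrariness of $r,s$, and moves $I_b$ around the circle. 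You instead go global: you compute the holonomy $z(p)=Z(\pm f,\pm g)$ first (which the paper only establishes later, in Theorem \ref{propcocycles}, via one explicit two-simplex loop plus homotopy invariance from Theorem \ref{teoruzzi} --- your telescoping-on-the-cover argument is a sound self-contained substitute, since chaining lifts makes the telescoping exact for an arbitrary loop), then convert the single relation $V_{I_0}Z(f,g)V_{I_0}^*=Z(f',g')$ into the support statement $K(f-f',g-g')\in (M_{I_0'})'=M_{I_0}$ via one-particle Haag duality \eqref{centralnew}, and kill it with Lemma \ref{lemmaadded} by varying the basepoint. This buys a clean separation of the topological content at the price of invoking duality and Lemma \ref{lemmaadded}, where the paper's computation is purely algebraic; both are valid, and no circularity arises. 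Likewise for $(a)$: the paper shows directly, by writing $V_{\partial_1b}$ as a strong limit of combinations of Weyl generators and noting that conjugation by $z(b)$ only produces phases, that $V_{\partial_0b}\in\gR(\partial_1b)$, and then intersects; your local-coboundary trick plus local definiteness (Theorem \ref{Tgenprop}$(a)$) is a legitimate alternative resting on a property the paper has established.

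One step of your $(a)$ would, however, fail as literally written. You set $W_I\doteq Z(\widehat\chi^{\,I}f,\widehat\chi^{\,I}g)$ with $\widehat\chi^{\,I}$ the constant extension of $\chi^I$ over the simply connected region $N$; but $Z$ in \eqref{Vf} requires its first entry to lie in $\cD(A^{1/4})=H^{1/2}(\bS^1)$, and a function that sits at the value $1$ on one side of the puncture (or of $\partial N$) and $0$ on the other has a jump there, whose Fourier coefficients decay like $1/|k|$, so $\widehat\chi^{\,I}f\notin\cD(A^{1/4})$ for generic $f$ and $W_I$ is ill-defined. The repair is routine: cap all the $\widehat\chi^{\,I}$ smoothly down to $0$ inside one fixed small interval around the puncture, common to every $I$ under consideration; the cap cancels in every difference $\widehat\chi^{\,\partial_1b}-\widehat\chi^{\,\partial_0b}$, so the coboundary identity $z(b)=W^*_{\partial_0b}W_{\partial_1b}$ survives for all simplices supported away from the cap, which suffices since for each $I\in\cR$ you may place the cap inside $I'$ and patch the constants over overlapping regions $N$. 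With this emendation, and noting that in $(c)$ you should also record the trivial converse ($f=f'$, $g=g'$ gives equivalence via the identity intertwiner), your proof is complete.
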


\begin{proof}
Let us prove that \eqref{cocyclef}  defines a localized $1$-cocycle. First we notice that the unitary operator
$$Z(\chi^{(b)}f,\chi^{(b)}g)\doteq W\left[2^{-1/2} (A^{1/4} \chi^{(b)}f + iA^{-1/4} \chi^{(b)}g)\right]$$
is an element of $\gR(|b|)$ since $\supp (\chi^{(b)}) \subset |b|$ as noticed previously.
 So, the identity \eqref{one-cocycle} remains to be proved. Let us consider a 
 $2$-simplex $c$. To simplify the notation we define $b_k \doteq \partial_k c$ for $k=0,1,2$.
 Since $|c|$ cannot coincide with the whole circle (and this is the crucial point),
 all functions $\chi^{(\partial_1 b_j)}$ can be extended, uniquely and smoothly, to functions defined on $|c|$ as constant functions outside their original domain. The extension procedure does not affect the definition of the functions $\chi^{(b_i)}$. We shall exploit this extension from now on.
 We have to show that $z^{(\chi)}_{(f,g)}(b_0)z^{(\chi)}_{(f,g)}(b_2) = z^{(\chi)}_{(f,g)}(b_1)$, that is
 $$Z\left(\chi^{(b_0)}f, \chi^{(b_0)} g\right)
Z\left(\chi^{(b_2)}f, \chi^{(b_2)} g\right) = Z\left(\chi^{(b_1)}f, \chi^{(b_1)} g\right)$$
that is, in turn, 
 $$Z\left((\chi^{(\partial_1b_0)}-\chi^{(\partial_0b_0)})f,(\chi^{(\partial_1b_0)}-\chi^{(\partial_0b_0)})g\right)
Z\left(
(\chi^{(\partial_1b_2)}-\chi^{(\partial_0b_2)})f,(\chi^{(\partial_1b_2)}-\chi^{(\partial_0b_2)})g\right)$$
\beq  = Z\left((\chi^{(\partial_1b_1)}-\chi^{(\partial_0b_1)})f,(\chi^{(\partial_1b_1)}-\chi^{(\partial_0b_1)})g\right)\:.\label{idcocycle}\eeq
Now notice that, in view of the definition of a $2$-simplex, $\partial_1 b_1= \partial_1b_2$, 
$\partial_0 b_0= \partial_0b_1$ and $\partial_0 b_2 = \partial_1 b_0$, so that the left-hand side of (\ref{idcocycle})
can be rewritten as
$$ Z\left((\chi^{(\partial_1b_1)}-\chi^{(\partial_0b_2)})f,(\chi^{(\partial_1b_1)}-\chi^{(\partial_0b_2)})g\right)
Z\left(
(\chi^{(\partial_0b_2)}-\chi^{(\partial_0b_1)})f,(\chi^{(\partial_0b_2)}-\chi^{(\partial_0b_1)})g\right)\:,$$
where {\em all} functions $\chi^{(\partial_ib_j)}$ are now defined on the whole $|c|$
and the differences $\chi^{(\partial_0b_2)}-\chi^{(\partial_0b_1)}$, $\chi^{(\partial_1b_1)}-\chi^{(\partial_0b_2)}$ are  defined everywhere on $\bS^1$ and compactly supported in $|c|$. Finally, making use of Weyl relations, taking the definition (\ref{Vf}) of $Z(f,g)$ into account, we find that the terms $\pm\chi^{(\partial_0b_2)}$ cancel each other in the final exponent, and the left-hand side of (\ref{idcocycle}) is:
$$Z\left((\chi^{(\partial_1b_1)}-\chi^{(\partial_0b_1)})f,(\chi^{(\partial_1b_1)}-\chi^{(\partial_0b_1)})g\right)
e^{i \int_{\bS^1} (\chi^{(\partial_1b_1)}-\chi^{(\partial_0b_2)})(\chi^{(\partial_0b_2)}-\chi^{(\partial_0b_1)})(fg-gf)d\theta} \:.$$
Since the phase vanishes, we have found the right-hand side of  \eqref{idcocycle}.\\
$(a)$ Let us pass to the irreducibility property of the defined cocycles.
Let $V: \cR \ni I \mapsto V_I \in \gR(I)$ be a field of
 unitary operators such that
\beq
V_{\partial_0 b}  = z^{(\chi)}_{(f,g)}(b) V_{\partial_1 b} z^{(\chi)*}_{(f,g)}(b)\:, \quad  \quad \mbox{for all $b\in \Sigma_1(\cR)$.} \label{equivalence3}
\eeq
Since $V_{\partial_1 b} \in \gR(\partial_1b)$, then $V_{\partial_1 b} = \sum_i c_i\pi( W(r_i,s_i))$ where $r_i,s_i$ are smooth real 
functions 
supported in $\partial_1b$, $c_i \in \bC$, and the series converges in the strong operatorial topology. Therefore, using Weyl relations, and the continuity of multiplications for the strong operatorial topology,
$$ z^{(\chi)}_{(f,g)}(b) V_{\partial_1 b} z^{(\chi)*}_{(f,g)}(b) = \sum_k c_k z^{(\chi)}_{(f,g)}(b)\pi(W(r_k,s_k)) z^{(\chi)}_{(f,g)}(b) 
= \sum_k c_k \pi(W(r_k,s_k)) \exp\{i\vphi_k\}\:,$$
for some $\vphi_k\in\bR$. The final series converges in the strong operatorial topology, too.
Since $\pi(W(r_k,s_k)) \in \gR(\partial_1 b)$ for hypotheses, 
$c_k\pi(W(r_k,s_k)) \exp\{i\vphi_k\}\in \gR(\partial_1 b)$ for every $k$, and thus we also have
$ z^{(\chi)}_{(f,g)}(b) V_{\partial_1 b} z^{(\chi)*}_{(f,g)}(b) \in  \gR(\partial_1 b)$. This entails
$$V_{\partial_0 b}  = z^{(\chi)}_{(f,g)}(b) V_{\partial_1 b} z^{(\chi)*}_{(f,g)}(b) \in \gR(\partial_1 b)\:.$$
Since $\partial_0b, \partial_1 b \in \Sigma_0$ and $V_{\partial_0 b}$ are generic, we have found that 
$V_a \in \cap_{I \in \cR} \gR(I) = \bC\ide$ in view of the irreducibility property of the pre-cosheaf.\\
$(b)$ Let us establish the equivalence of cocycles associated to different maps $\chi$ but with the same $(f,g)$.
 If $\chi_1$ and $\chi_2$ are defined as in (\ref{assignchi}), for every $I\in \cR$ the map
 $\Delta\chi^{a} \doteq \chi_1^{a}-\chi_2^{a}$ is smooth and compactly supported in the open set  $I$, so that it can  be extended uniquely
 as a smooth function over $\bS^1$ compactly supported in $I$. As usual,  we indicate by $\Delta \chi^{a}$ this unique extension.
Define the field of unitaries $V: \cR \ni I \mapsto V_I \doteq  Z\left( f\Delta \chi^{a} , g\Delta \chi^{a}\right) \in \gR(I)$.
For every $b\in \Sigma_1$, we get
\begin{align*}
V_{\partial_0b} z^{(\chi_1)}_{(f,g)}(b) &=  Z\left( f\Delta \chi , g\Delta \chi \right)
 Z\left((\chi_1^{(\partial_1 b)}- \chi_1^{(\partial_0 b)})f, (\chi_1^{(\partial_1 b)}- \chi_1^{(\partial_0 b)}) g\right)\\
 &=Z\left((\chi_1^{(\partial_1 b)}- \chi_1^{(\partial_0 b)} + \chi_1^{(\partial_0 b)} - \chi_2^{(\partial_0 b)})f, 
(\chi_1^{(\partial_1 b)}- \chi_1^{(\partial_0 b)} + \chi_1^{(\partial_0 b)} - \chi_2^{(\partial_0 b)}) g\right)\\
&= Z\left((\chi_1^{(\partial_1 b)}- \chi_2^{(\partial_0 b)})f, 
(\chi_1^{(\partial_1 b)} - \chi_2^{(\partial_0 b)}) g\right)
\end{align*}
where, passing from the first to the second line, we have omitted a phase arising from Weyl relations, since it vanishes as before.
With an analogous computation we similarly find:
$$ z^{(\chi_2)}_{(f,g)}(b) V_{\partial_1b} = Z\left((\chi_1^{(\partial_1 b)}- \chi_2^{(\partial_0 b)})f, 
(\chi_1^{(\partial_1 b)} - \chi_2^{(\partial_0 b)}) g\right)\:,$$
so that $V_{\partial_0b} z^{(\chi_1)}_{(f,g)}(b) =  z^{(\chi_2)}_{(f,g)}(b) V_{\partial_1b}$. \\
Let us pass to prove $(c)$. Within the hypotheses as in $(c)$ consider a positively oriented $1$-simplex
$b$ with $\overline{\partial_0 b}$ disjoint from  $\overline{\partial_1 b}$; let us indicate by $I_b\in \cR$ the open proper 
segment lying between $\partial_0 b$ and  $\partial_1 b$.
If  $z^{(\chi)}_{(f,g)}$ and $z^{(\chi)}_{(f',g')}$ are equivalent, we may write
$V_{\partial_0b}  =  z^{(\chi)}_{(f',g')}(b) V_{\partial_1b} z^{(\chi)*}_{(f,g)}(b)\:,$
 for some unitaries $V_{\partial_j b} \in \gR(\partial_jb)$. Therefore
 $V_{\partial_0b}  =  z^{(\chi)}_{(f',g')}(b)  z^{(\chi)*}_{(f,g)}(b) \widetilde{V}_{\partial_1b}\:,$ 
 and thus $V_{\partial_0b} \widetilde{V}_{\partial_1b}^*  =  z^{(\chi)}_{(f',g')}(b) z^{(\chi)*}_{(f,g)}(b)$
where we have introduced the unitary operator $\widetilde{V}_{\partial_1b}\doteq z^{(\chi)}_{(f,g)}(b)
 V_{\partial_1b} z^{(\chi)*}_{(f,g)}(b)$. As ${V_{\partial_1b}} z^{(\chi)}_{(f,g)}(b) \in \gR(\partial_1b)$, 
 following the argument as in the proof 
 of $(a)$, we achieve $\widetilde{V}_{\partial_1b} z^{(\chi)}_{(f,g)}(b) \in  \gR(\partial_1b)$ and so
 $\widetilde{V}_{\partial_1b} z^{(\chi)*}_{(f,g)}(b) \in  \gR(\partial_1b)$. 
 The term $z^{(\chi)}_{(f',g')}(b) z^{(\chi)*}_{(f,g)}(b)$ can be computed and, in view of Weyl relations, it finally arises
\beq V_{\partial_0b} \widetilde{V}_{\partial_1b}^*  =  z^{(\chi)}_{(f'-f,g'-g)}(b) e^{i\vphi} \:,\label{centralN}\eeq
where $\vphi \in \bR$ depends on $f,f',g,g', \chi$.
Now consider two real smooth functions $r,s$
supported in $I_b$. The Weyl generator $Z(r,s)$ belongs to $\gR(I_b)$ and thus it commutes with
both $\widetilde{V}_{\partial_1b}^*$ and $\widetilde{V}_{\partial_0b}$ so that (\ref{centralN}) produces (notice that $\chi^{(b)}=1$
on $I_b$)
\begin{align*}
V_{\partial_0b} \widetilde{V}_{\partial_1b}^*  &= Z(r,s) z^{(\chi)}_{(f'-f,g'-g)}  Z(r,s)^* e^{i\vphi}\\
&= z^{(\chi)}_{(f'-f,g'-g)} e^{i\vphi} \exp\left\{i \int_{\bS_1} ((f'-f)s - (g'-g)r) d\theta \right\}\:.
\end{align*}
Comparing with (\ref{centralN}) we conclude that it must be 
$$\exp\left\{i \int_{\bS_1} ((f'-f)s - (g'-g)r) d\theta \right\}  = 1\:.$$
Arbitrariness of the smooth functions $r,s$ implies that $f'-f= 0$, $g'- g =0$ on $I_b$. Since the procedure can be implemented choosing 
$I_b$ as a sufficiently small neighborhood of every point on $\bS^1$, we conclude that $f=f'$ and $g=g'$ everywhere on $\bS^1$.\\
Let us conclude the proof by demostrating statement $(d)$. Referring to Theorem \ref{thoremvacuum1} one finds 
\begin{align*}
U_{(r,0)} z^{\chi}_{(f,g)}(b) U_{(r,0)}^* &= e^{-ir P^\otimes} z^{\chi}_{(f,g)}(b) e^{ir P^\otimes}  \\
&=W\left[e^{-ir P^\otimes} 2^{-1/2} (A^{1/4} \chi^{(b)}  f + iA^{-1/4} \chi^{(b)} g)\right]\\
 &= W\left[ 2^{-1/2} A^{1/4} e^{-ir P}   \chi^{(b)}  f
+ i 2^{-1/2} A^{-1/4} e^{-ir P}  \chi^{(b)}  g\right]\:,
\end{align*}
where we used the fact that $e^{-ir P}$ and the spectral measure of $A$ commute.  On the other hand, one sees that $(e^{-ir P} h)(\theta) = h(\theta -r)\doteq (\beta^*_r(h))(\theta)$
for every $h\in L^2(\bS^1,d\theta)$, by working in Fourier representation. This, together with \eqref{covset}, implies the validity 
of the thesis imediately.
\end{proof}

 \subsection{Representations of $\pi_1(\bS^1)$} 
 Let us state and prove  some properties of the   representations of $\pi_1(\bS^1)$ associated to the previously constructed cocycles.
 
 \begin{theorem} \label{propcocycles} Consider the $1$-cocycle $z^{(\chi)}_{(f,g)}$ (\ref{cocyclef}) defined in Theorem \ref{propcocycle} and the associated 
 representation (\ref{piz}) of $\pi_1(\bS^1)\equiv \bZ$. The representation reads, in this case,
 \beq
 \pi_{(f,g)} : \bZ \ni n \mapsto Z(nf,ng) \label{repzf}\:,
 \eeq
 where it is manifest that it does not depend on the choice of $\chi$.
The following further facts hold.
\begin{itemize}
\item[$(a)$] 
$\pi_{(f,g)}$ is trivial
-- equivalently $z^{(\chi)}_{(f,g)}$ is trivial -- if and only if $f= g = 0$.\\
\item[$(b)$]  for every pair  $(f,g), (f',g') \in \cS\times \cS$ with $(f,g) \neq (0,0) \neq (f',g')$,
 the unitary representations $\pi_{(f,g)}$ and $\pi_{(f',g')}$ are unitarily
equivalent.\\
\item[$(c)$] For every $(f,g) \in \cS$,  every $r\in \bR$ and $b\in \Sigma_1(\cR)$, and very $n\in \bZ$, it holds
$$U_{(r,0)} \pi_{(f,g)}(n) U_{(r,0)}^* = \pi_{(\beta^*_r (f) ,\beta^*_r(g))} (n)\:,$$
where $U_{(r,0)}$ is the one-parameter unitary group implementing $\theta$-displacements $\beta_r$ and leaving the vacuum invariant, introduced in 
Theorem \ref{thoremvacuum1} and $\beta_r^*$ is the pull-back action of $\theta$-displacements on functions defined on $\bS^1$. 
\item[$(d)$] If $(0,0) \neq (f,g)\in \cS$, the space $\gH_0 = \gF(\cH)$ decomposes as a countably infinite Hilbert sum of closed pairwise orthogonal 
subspaces
$
\gH_0 = \bigoplus_{k=0}^{+\infty} {\gH_0}^{(f,g)}_k
$
such that the following holds for $k\in \bN$.
\begin{itemize}
\item[$(i)$] ${\gH_0}^{(f,g)}_k$ is invariant under $\pi_{(f,g)}$.

\item[$(ii)$] There is a unitary map $U^{(f)}_k : {\gH_0}^{(f)}_k \to L^2(\bR,dx)$ such that $\pi_{(f,g)} \rest_{{\gH_0}_k}$
admits a direct integral  decomposition into one-dimensional irreducible representations $\lambda_x$ of $\bZ$ (\ref{lambdax})
as
\beq U^{(f)*}_k \pi_{(f,g)} \rest_{{\gH_0}_k} U^{(f,g)}_k = \int_{\bR}^{\oplus} dx\:\: \lambda_x  \:, \label{dirdec}\eeq
where $L^2(\bR,dx)= \int_{\bR}^{\oplus} dx\:\:\cH_x $, with $\cH_x \doteq \bC$ and $dx$ being the Lebesgue measure on $\bR$.
\end{itemize}
\end{itemize}
\end{theorem}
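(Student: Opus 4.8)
The plan is to reduce everything to the spectral theory of a single self-adjoint operator, after first establishing the closed form \eqref{repzf}. First I would compute $\pi_{(f,g)}=\pi_{z}$ for $z=z^{(\chi)}_{(f,g)}$ through Theorem \ref{teoruzzi}. For the generator $n=1$ I pick a loop $p=\{b_1,\dots,b_m\}\in P(I_0,I_0)$ and form $z(p)=\prod_k Z(\chi^{(b_k)}f,\chi^{(b_k)}g)$. Exactly as in the cocycle-identity computation inside the proof of Theorem \ref{propcocycle}, the Weyl phases cancel, leaving $Z\bigl((\sum_k\chi^{(b_k)})f,(\sum_k\chi^{(b_k)})g\bigr)$. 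The crucial point is that the telescoping of $\chi^{(b_k)}=\chi^{(\partial_1 b_k)}-\chi^{(\partial_0 b_k)}$ along the loop does \emph{not} vanish: every intermediate ramp cancels, but because $\bS^1$ is not simply connected the sum collapses to the \emph{constant} function equal to the winding number of $p$, i.e.\ $\sum_k\chi^{(b_k)}\equiv 1$ for the generator (with the orientation conventions fixed above). This is the one place where the nontrivial topology enters. Hence $\pi_{(f,g)}(1)=Z(f,g)$, and since $\sigma((f,g),(f,g))=0$ the Weyl relations give $\pi_{(f,g)}(n)=Z(f,g)^n=Z(nf,ng)$, manifestly independent of $\chi$.

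Writing $\psi\doteq K(f,g)\in\cH$ and $W[n\psi]=Z(nf,ng)=e^{inX_\psi}$ with $X_\psi\doteq-i\,\overline{a(\psi)-a^*(\psi)}$ self-adjoint, the whole representation is generated by the single unitary $e^{iX_\psi}$, and (a)--(d) become statements about $X_\psi$. Part (a) is then immediate: $W[n\psi]=\ide$ for all $n$ forces $\psi=0$, and injectivity of $K$ (Proposition \ref{H1}(b)) gives $(f,g)=(0,0)$; conversely $(0,0)$ yields the identity cocycle, whose associated representation is trivial by Theorem \ref{teoruzzi}. Part (c) is inherited from Theorem \ref{propcocycle}(d): since conjugation by $U_{(r,0)}=e^{-irP^\otimes}$ sends $W[\psi]\mapsto W[e^{-irP}\psi]$ and $e^{-irP}K(f,g)=K(\beta^*_r f,\beta^*_r g)$ (because $e^{-irP}$ commutes with the powers of $A$ and acts as $\beta^*_r$), one gets $U_{(r,0)}Z(nf,ng)U_{(r,0)}^*=Z(n\beta^*_r f,n\beta^*_r g)=\pi_{(\beta^*_r f,\beta^*_r g)}(n)$.

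The heart of the matter, carrying both (b) and (d), is the spectral analysis of $X_\psi$ for $\psi\neq 0$. Using the exponential law $\gF_+(\cH)\cong\gF_+(\bC\psi)\otimes\gF_+(\psi^\perp)$, the operator $W[t\psi]$ acts only on the first factor as the one-mode Weyl operator tensored with $\ide$; in the Schr\"odinger realization $\gF_+(\bC\psi)\cong L^2(\bR,dx)$ the generator $X_\psi$ becomes multiplication by $\kappa x$ with $\kappa>0$ proportional to $\|\psi\|$. Thus $X_\psi$ has purely Lebesgue spectrum filling $\bR$, and $e^{iX_\psi}$ has absolutely continuous spectrum covering the whole unit circle with uniform infinite multiplicity; since this spectral invariant does not depend on $\psi\neq 0$, any two nonzero data give unitarily equivalent representations, proving (b). For (d) I would choose an orthonormal basis $\{\eta_k\}_{k\geq 0}$ of the separable infinite-dimensional space $\gF_+(\psi^\perp)$ and set ${\gH_0}^{(f,g)}_k\doteq L^2(\bR,dx)\otimes\bC\eta_k$; these are pairwise orthogonal, sum to $\gH_0$, are invariant, and carry $\pi_{(f,g)}(n)$ as multiplication by $e^{in\kappa x}$. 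A rescaling (dilation) unitary $U^{(f)}_k$ on $L^2(\bR,dx)$ then turns this into multiplication by $e^{inx}$, which is precisely the direct integral $\int_\bR^\oplus dx\,\lambda_x$ of \eqref{dirdec}.

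The main obstacle is the identification of the one-mode field operator $X_\psi$ with multiplication by $\kappa x$ on $L^2(\bR)$, i.e.\ that it has simple Lebesgue spectrum on the single mode; this is the one genuinely analytic step and it underlies both (b) and (d), but it is standard once the exponential law and the Schr\"odinger (Stone--von Neumann) picture of the one-mode CCR are in place. The winding computation $\sum_k\chi^{(b_k)}\equiv 1$ is the one genuinely topological step. Everything else --- the phase cancellations, the injectivity of $K$, the choice of basis, and the dilation --- is routine.
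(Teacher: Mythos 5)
Your proposal is correct, and while \eqref{repzf} and parts $(a)$, $(c)$ run essentially along the paper's lines, your treatment of the spectral heart of the theorem, parts $(b)$ and $(d)$, is a genuinely different route. Two small remarks on the common part first: the paper evaluates $z(p)$ for $n=1$ on one convenient loop made of just two $1$-simplices $b,b'$ with $\chi^{(b)}+\chi^{(b')}=1$, using the path-independence of $z(p)$ within a homotopy class (Theorem \ref{teoruzzi}); your blanket claim that $\sum_k\chi^{(b_k)}$ telescopes to the winding-number constant for an \emph{arbitrary} loop is true but would need an extra argument (the $1$-simplices of a loop need not be coherently oriented, so backtracking cancellations must be accounted for, e.g.\ by lifting to the universal cover); since you may, and implicitly do, fix a single positively oriented loop, this is a presentational looseness rather than a gap. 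The real divergence is this: for $(b)$ the paper first treats the equal-norm case $\|K(f,g)\|=\|K(f',g')\|$ by a second-quantized (Bogoliubov-type) unitary matching two maximal orthogonal systems, and then rescales norms via Lemma \ref{lemmaE} --- a one-parameter ``squeezing'' group $E^{(\psi)}_\lambda$ with $E^{(\psi)}_\lambda W[\psi]E^{(\psi)*}_\lambda=W[e^\lambda\psi]$, proved in Appendix A by a fairly delicate analytic-vector computation --- while for $(d)$ it invokes the Stone--von Neumann--Mackey theorem abstractly and establishes the countably infinite multiplicity by a separate computation with the Gaussian-smeared projector $P=\frac{1}{2\pi}\int e^{-(u^2+v^2)/4}U(u,v)\,du\,dv$ tested against vacuum and one-particle vectors. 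You instead factorize once and for all through the exponential law $\gF_+(\cH)\cong\gF_+(\bC\psi)\otimes\gF_+(\psi^\perp)$ and the Schr\"odinger realization of the single mode, so that $\pi_{(f,g)}(n)$ becomes multiplication by $e^{in\kappa x}$ on $L^2(\bR,dx)$ tensored with the identity, $\kappa$ proportional to $\|K(f,g)\|>0$. This buys you $(b)$ from elementary spectral multiplicity theory (absolutely continuous spectrum filling the circle with uniform infinite multiplicity, an invariant independent of $\psi\neq 0$), with the paper's Lemma \ref{lemmaE} replaced by an innocuous dilation of $L^2(\bR,dx)$, and it buys you $(d)$ with the infinitude of the decomposition for free from $\dim\gF_+(\psi^\perp)=\infty$, eliminating the projector computation. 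What the paper's route buys in exchange is that it stays entirely within the abstract Weyl--Fock calculus (Stone--von Neumann--Mackey being its only structural input), and Lemma \ref{lemmaE} has independent interest. Both arguments are complete; yours is shorter, unifies $(b)$ and $(d)$ in a single spectral picture, and makes the relevant invariants explicit.
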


\begin{proof}
Let us first prove \eqref{repzf} for $n=1$. Since we know that \eqref{piz} gives rise to a group representation of $\pi_1(\bS)$
when $z^{(\chi)}_{(f,g)}$ is a cocycle, to  prove \eqref{repzf} for  $n=1 \in \bZ = \pi_1(\bS^1)$
 i.e.,  
 \beq z^{(\chi)}_{(f,g)}(p) = Z(f,g)\:, \quad \mbox{for $p\in 1$} \label{z1}\eeq
 it is enough to prove it for a fixed path  $p\in 1$, because the result must not depend on the particular 
 path in $1$. 
 To this end, if $\bS^1 = [-\pi,\pi]$ where $-\pi\equiv \pi$, consider the path $p\in 1$
made of the 1-simplices $b$, with
$|b|\doteq (-\frac{\pi}{2}-\epsilon, \frac{\pi}{2} + \epsilon)$, $\partial_1 b \doteq (-\frac{\pi}{2}-\epsilon, -\frac{\pi}{2} + \epsilon)$, 
$\partial_0 b \doteq (\frac{\pi}{2}-\epsilon, \frac{\pi}{2} + \epsilon)$
and  $b'$ with $|b'|\doteq (\frac{\pi}{2}-\epsilon, \pi] \cup [-\pi, -\frac{\pi}{2} + \epsilon)$,
$\partial_1 b' \doteq (\frac{\pi}{2}-\epsilon, \frac{\pi}{2} + \epsilon)$, $\partial_0 b' \doteq (-\frac{\pi}{2}-\epsilon, -\frac{\pi}{2} + \epsilon)$, where $\epsilon>0$ is so small that $\partial_0 b \cap \partial_1 b =\emptyset$. Using the definition of $\chi^{(b)}$
and $\chi^{(b')}$, it follows immediately that $\chi^{(b)} + \chi^{(b')} =1$ everywhere on $\bS^1$. Therefore we have that $z^(\chi)_{(f,g)}(p)$ equals 
\begin{align*}
Z(\chi^{(b')}f,\chi^{(b')}g)Z(\chi^{(b)}f,\chi^{(b)}g) &= 
Z((\chi^{(b')}+ \chi^{(b)})f,(\chi^{(b')}+ \chi^{(b)})g) e^{i\int_{\bS^1}
\chi^{(b')}\chi^{(b)}(fg-gf) d\theta}\\
&= Z(f,g)\:.
\end{align*}
We have established \eqref{z1}, i.e.~\eqref{piz} for $n=1$. Let us generalize the result for $n\in \bZ$.
By the definition of $Z$ and making  use of Weyl commutation 
relations one gets  \beq
Z(nf,ng)Z(mf,mg) = Z((n+m)f, (n+m)g)\:, \forall n,m\in \bZ\label{abelrel}\:.\eeq
Using the fact that $\pi^(\chi)_{z_{(f,g)}}$ as defined in \eqref{piz} is a group representation
of  $\pi_1(\bS^1)=\bZ$, which is Abelian and generated by $1$, one has that \eqref{abelrel} and
\eqref{z1} together  yield \eqref{piz} in the general case.\\
 Let us pass to prove $(a)$.
 As a consequence of \eqref{repzf},
  it is clear that this representation is trivial, that is $z^{(\chi)}_{(f,g)}$ is such, due to Theorem \ref{teoruzzi},
 if and only if $Z(f,g)=\ide$. It is equivalent to say $W\left[2^{-1/2}(A^{1/4}f+ iA^{-1/4} g)\right] = \ide$. By Theorem \ref{thoremvacuum1}
 we know that
 $$\left\langle \Psi, W\left[2^{-1/2} A^{1/4} f +  i2^{-1/2} A^{-1/4} g\right] \Psi \right\rangle = e^{-\frac{1}{4}\left(\langle f, A^{1/2} f\rangle
 + \langle g, A^{-1/2} g\rangle
  \right)} \quad \mbox{for all $f,g \in C^\infty(\bS^1, \bR)$}\:.$$
 Since $||\Psi||=1$ we have finally that $Z(f,g)=\ide$ entails  $\langle f, A^{1/2} f\rangle
 + \langle g, A^{-1/2} g\rangle =0$ and so $f,g =0$ because $A^{-1/4}$ and $A^{1/4}$
 are strictly positive. We have found that triviality of $z^\chi_{(f,g)}$ implies $f,g=0$. The converse is obvious and so
  the proof of $(a)$ is concluded. \\
  Let us demonstrate $(b)$.
 Assume $||2^{-1/2}(A^{1/4}f + iA^{-1/4}g)|| = ||2^{-1/2}(A^{1/4}f' + iA^{-1/4}g')|| = a \neq 0$ (the case equal to $0$ being obvious). Defining
 $\psi_1 \doteq 2^{-1/2}(A^{1/4}f + iA^{-1/4}g)$
 we can complete this vector to a maximal orthogonal system $\{\psi_n\}_{n\in \bN}$ of $L^2(\bS^1,d\theta)$
where $||\psi_n||=a$ for every $n\in \bN$. Similarly, defining $\phi_1 \doteq 2^{-1/2}(A^{1/4}f' + iA^{-1/4}g')$
 we can complete this vector to a maximal orthogonal system $\{\phi_n\}_{n\in \bN}$ of $L^2(\bS^1,d\theta)$,
where $||\phi_n||=a$ for every $n\in \bN$.  There is a unique unitary operator $U: L^2(\bS^1,d\theta) \to L^2(\bS^1,d\theta)$
completely individuated by the requirements $U\psi_n = \phi_n$ for every $n\in \bN$.
It is a known property of Weyl generators $W[\psi] = e^{\overline{a(\psi)-a^*(\psi)}}$ that
$$V_\otimes W[\psi] V_\otimes^* = W[V\psi]$$
where the unitary operator $V_\otimes$ in the Fock space is defined by tensorialization of the unitary operator $V$
in the one-particle
space, with the requirement that $V_\otimes$ reduces to the identity acting on the vacuum vector.
As a consequence $U_\otimes W[\psi_1] U_\otimes^* = W[U\psi_1] = W[\phi_1]$ or, equivalently, $U_\otimes Z(f,g) U^*_\otimes
= Z(f',g')$ and thus $U_\otimes Z(nf,ng) U^*_\otimes
= Z(nf',ng')$, making use of (\ref{abelrel}). We have found that $\pi_{(f,g)}$ and $\pi_{(f',g')}$ are unitarily equivalent.
 Let us pass to the case
 $0 \neq ||2^{-1/2}(A^{1/4}f + iA^{-1/4}g)|| \neq ||2^{-1/2}(A^{1/4}f' + iA^{-1/4}g')||  \neq 0$
 and define the real number
$r \doteq ||2^{-1/2}(A^{1/4}f + iA^{-1/4}g)||/||2^{-1/2}(A^{1/4}f' + iA^{-1/4}g')||$.
With the procedure used in the former case one achieves the existence of a unitary operator $V$ on the Fock space such that
$$W\left[2^{-1/2}(A^{1/4}f + iA^{-1/4}g)\right] = VW\left[r2^{-1/2}(A^{1/4}f' + iA^{-1/4}g')\right]V^*\:.$$
To conclude it is sufficient to establish the existence of a second unitary operator $E$ (depending on the considered $g$ and $r$)
with
$$W\left[2^{-1/2}(A^{1/4}f' + iA^{-1/4}g')\right] = EW\left[r \: 2^{-1/2}(A^{1/4}f' + iA^{-1/4}g')\right]E^*\:.$$
This fact is an immediate consequence of the following result proved in the Appendix A. \\

\begin{lemma}\label{lemmaE} Let $\cH$ be a complex Hilbert space with associated bosonic Fock space $\gF_+(\cH)$. Define the unitary Weyl
generators $W[\psi]$ as in (\ref{weylgen}) for every $\psi \in \cH$. For every fixed $\psi\in \cH$ with $||\psi||=1$,
there is a strongly
continuous one-parameter group of unitary operators $\{E^{(\psi)}_\lambda\}_{\lambda\in \bR}$ such that
\beq
E^{(\psi)}_\lambda W[\psi] E^{(\psi)*}_\lambda = W\left[e^\lambda\psi\right]\:, \quad \mbox{for all $\lambda \in \bR$.}
\label{exp}
\eeq
\end{lemma}
The proof of $(c)$ follows immediately from $(d)$ in Theorem \ref{propcocycle} taking the independence from $\chi$
into account.\\
Finally we prove $(d)$. Fix $(f,g)\in \cS$. In view of the Weyl commutation relations for operators
$W[\psi]$,  the unitary operators
$$U(a,b)\doteq Z\left(\frac{(a+ib)f}{|| A^{1/4} f + iA^{-1/4}g||}, \frac{(a+ib)g}{|| A^{1/4} f + iA^{-1/4}g||}\right)\:,\quad  (a,b)\in \bR^2\:,$$ fulfill the one-dimensional Weyl  relations
$$U(a,b)U(a',b')= U(a+a',b+b') e^{-i(ab'-a'b)/2}\:,\quad U(a,b)^* = U(-a,-b)\:.$$
Due to the uniqueness property in the Stone - von Neumann -Mackey Theorem, the space $\gH_0$ decompose into a direct sum of 
pairwise orthogonal closed subspaces ${\gH_0}_k$ where each ${\gH_0}_k$ is unitarily equivalent to $L^2(\bR,dx)$ and the relevant 
unitary map satisfies $$U^{(f,g)*}_k U(a,b) \rest_{{\gH_0}_k} U^{(f,g)}_k = \exp{i\{\overline{aX+bP}\}}\:,$$ $X,P$ being the standard position and momentum
operators on the real line ($aX+bP$ is defined on the core given by the Schwartz space). As a consequence
$$U^{(f,g)*}_k \pi_{(f,g)}(n) \rest_{{\gH_0}_k} U^{(f,g)}_k = U^{(f,g)*}_k Z(nf,ng) U^{(f,g)}_k = e^{in c X}\:, $$
with $c= ||2^{-1/2} (A^{1/4} f + i A^{-1/4} g)||>0$ constant.
Then the spectral decomposition of $cX$ gives rise to \eqref{dirdec} immediately. To end the proof the only thing to show  is that the number of spaces ${\gH_0}_k$ is infinite. Since $\gH_0$ is separable that infinite must be countable.
It is known by the general theory of Weyl algebras on finite-dimensional symplectic spaces that the spaces ${\gH_0}_k$ can be obtained as follows. Using weak operator topology, define the operator
$$P=\frac{1}{2\pi}\int_{\bR^2} e^{- (u^2 +v^2)/4} U(u,v)\: dudv$$ which turns out to be a nonvanishing orthogonal projector.
If $\{\phi_k\}_{k\in G}$ is a Hilbert basis for the subspace $P(\gH_0)$, for any fixed $k\in G$, ${\gH_0}_k$ is the closed space generated by all of $U(a,b)\phi_k$ as $a,b \in \bR$. To conclude it is sufficient to prove that $G$ must be infinite. To this end consider a Hilbert basis in $\cH$, 
$\psi_1= (A^{1/4} f + i A^{-1/4} g)/||A^{1/4} f + i A^{-1/4} g||$, $\psi_2$, $\psi_3$, $\ldots$ and an associated orthonormal (not necessarily complete) system in $\gH_0$:
$\Psi_1\doteq\Psi$ (the vacuum), $\Psi_2 \doteq a^*(\psi_2)\Psi$, $\Psi_3 \doteq a^*(\psi_3)\Psi$, $\ldots$.
 By construction, one can verify that
 $$(P \Psi_h|P\Psi_k) = \frac{1}{2\pi}\int_{\bR^2} e^{- (u^2 +v^2)/4} (\Psi_h|U(u,v)\Psi_k)\: dudv
= \frac{\delta_{hk}}{2\pi}\int_{\bR^2} e^{- (u^2 +v^2)/4} e^{- (u^2 +v^2)/4}\: dudv\:.$$
Therefore, up to normalization, $P \Psi_1, P\Psi_2, \ldots \in P(\gH_0)$ is an infinite orthonormal system in $P(\gH_0)$. This means that
$P(\gH_0)$ admits an infinite Hilbert base.
\end{proof}

 \subsection{Examples of topological superselection sectors}
 In this section we show how to associate every localized cocycles $z^{(\chi)}_{(f,g)}$
  with a sharp excitation of the reference vacuum representation $\{{\bpi_0}, \ide\}$. In case we have a pair of unitarily inequivalent  cocycles, they would provide with a pair of  unitarily 
 inequivalent generalized representations fulfilling the selection criterion, and thus two different superselection sectors. The idea is similar to that exploited to define a relevant functor in the proof of Theorem 4.3 in \cite{BR2}. However there are two important differences. First of all, here we are dealing with a proper subset of 
cocycles and not with the whole category $Z^1(\gR)$. Secondly, as we shall see into details shortly, the map that associates 
cocycles to generalized representations in the proof of Theorem 4.3 in \cite{BR2} does not work in our lower dimensional case and needs a modification. \\
Consider $z^{(\chi)}_{(f,g)} \in Z^1(\gR)$ and, for $I, \widetilde{I} \in \cR$ with $\widetilde{I} \subseteq I$ define
\beq
\pi^{z^{(\chi)}_{(f,g)}}_I(A) &\doteq& z^{(\chi)}_{(f,g)}(b_I) {{\bpi_0}}_I(A) z^{(\chi)}_{(f,g)}(b_I)^*\:, \quad A \in \cW(I)\label{1pi}\\
\psi^{z^{(\chi)}_{(f,g)}}_{I,\tilde{I}} &\doteq& z^{(\chi)}_{(f,g)}\left(I,\widetilde{I}\right) \label{1psi}\:.
\eeq
above, $b_I$ is a $1$-simplex with final point $\partial_0 b_I \doteq I$ and initial point $\partial_1 b_I \doteq J$ where $J\subset I'$ and, finally,  $b_I$ is positively oriented  w.r.t. the chose orientation of $\bS^1$; the $1$-simplex  $(I,\widetilde{I})$
is that with $\partial_1\left(I,\widetilde{I}\right) = \widetilde{I}$ and  $\partial_0\left(I,\widetilde{I}\right) =I =
\left|\left(I,\widetilde{I}\right)\right|$.\\
Finally define
\beq \pi^{z^{(\chi)}_{(f,g)}} : \cR \ni I \mapsto \pi^{z^{(\chi)}_{(f,g)}}_I \:, \quad\mbox{and}
\quad \psi^{z^{(\chi)}_{(f,g)}} : \cR\times \cR \ni (I,\widetilde{I}) \mapsto \psi^{z^{(\chi)}_{(f,g)}}_{I,\tilde{I}}\quad 
\mbox{for $I,\widetilde{I} \in \cR$ and 
$\widetilde{I} \subseteq I$} \label{1pipsi}\:.\eeq
We are going to establish that $\{\pi^{z^{(\chi)}_{(f,g)}}, \psi^{z^{(\chi)}_{(f,g)}}\}$ is a net representation which satisfies 
the selection criterion. \\

\begin{remark} The definition given in (\ref{1pi}) and (\ref{1psi}) are the same as that used in Theorem 4.3 
in \cite{BR2} with the only difference that $b_I$ is now a (positively oriented) $1$-simplex  rather than a path. This is due to the fact that, if we adopted the definition as in \cite{BR2}, the defined objects would depend on the chosen path, differently from the higher dimensional case. We shall come back to this issue later.
\end{remark}

\noindent We have the following theorem which explain how to associate cocycles $z^{(\chi)}_{(f,g)}$ with net representations 
verifying the selection criterion introduced above.\\

\begin{theorem}If $z^{(\chi)}_{(f,g)}\in Z^1(\gR)$,  the  pair
 $\{\pi^{z^{(\chi)}_{(f,g)}}, \psi^{z^{(\chi)}_{(f,g)}}\}$   defined as in (\ref{1pipsi})
  is a unitary
  net representation
 of $\cW$ over $\gH_0$, which is independent from the choice of the simplices $b_I$ adopted in (\ref{1pi}).
 The further following results hold true.
 \begin{itemize}
\item[$(a)$] $\{\pi^{z^{(\chi)}_{(f,g)}}, \psi^{z^{(\chi)}_{(f,g)}}\}$ is irreducible and satisfies the selection criterion and thus defines a 
sharp excitation of the reference vacuum net representation $\{{\bpi_0}, \bI\}$, giving rise to a superselection sector.
\item[$(b)$] $\{\pi^{z^{(\chi)}_{(f,g)}}, \psi^{z^{(\chi)}_{(f,g)}}\}$ and 
$\{\pi^{z^{(\chi')}_{(f',g')}}, \psi^{z^{(\chi')}_{(f',g')}}\}$ belong to the same superselection sector 
(i.e. they are unitarily equivalent) if and only if  $f=f'$ and $g=g'$. 
\item[$(c)$] The $1$--cocycle associated with the net representation $\{\pi^{z^{(\chi)}_{(f,g)}}, \psi^{z^{(\chi)}_{(f,g)}}\}$ 
as in (\ref{zetapi}) coincides with $z^{(\chi)}_{(f,g)}$ itself.
\end{itemize}
\end{theorem}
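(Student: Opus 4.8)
The plan is to treat the four assertions in turn, exploiting throughout that we are dealing with the \emph{explicit} cocycles of Theorem~\ref{propcocycle}, so that $z^{(\chi)}_{(f,g)}(b)=Z(\chi^{(b)}f,\chi^{(b)}g)$ and every manipulation reduces to Weyl relations combined with Haag duality (Theorem~\ref{duality}) and irreducibility of $\gR$.

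\emph{Net representation and independence of $b_I$.} The second relation in \eqref{UNR}, $\psi^{z}_{I'I}\psi^{z}_{I\tilde I}=\psi^{z}_{I'\tilde I}$, is exactly the cocycle identity \eqref{one-cocycle} for the $2$-simplex built on the chain $\tilde I\subseteq I\subseteq I'$, hence immediate. For the intertwining relation I would unwind the definitions \eqref{1pi}--\eqref{1psi} to reduce it to the statement that $U\doteq z^{(\chi)}_{(f,g)}(b_I)^{*}\,z^{(\chi)}_{(f,g)}(I,\tilde I)\,z^{(\chi)}_{(f,g)}(b_{\tilde I})$ commutes with $\bpi_0(\cW(\tilde I))$. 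Writing each factor as a Weyl operator and using $\chi^{(b)}=\chi^{(\partial_1 b)}-\chi^{(\partial_0 b)}$, the interior cutoffs attached to $I$ and $\tilde I$ telescope, and one is left (up to a phase produced by the Weyl relations) with a single Weyl operator whose smearing function is the difference of the cutoffs of the initial points $J,J'$ of $b_I,b_{\tilde I}$; this difference is supported in $\tilde I'$ because $\tilde I'$ is a single interval containing both $J$ and $J'$. Hence $U\in\gR(\tilde I')=\gR(\tilde I)'$ by Haag duality, which is precisely the required commutation. The identical telescoping, applied to $z(\hat b_I)^{*}z(b_I)$ for two positively oriented simplices ending at $I$, produces an element of $\gR(I')=\gR(I)'$, which commutes with $\bpi_0(\cW(I))$ and therefore leaves $\pi^{z}_I$ unchanged; this is where it is essential that the supports do not wrap around $\bS^1$, i.e.\ that one uses $1$-simplices and not paths, as stressed in the preceding Remark.

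\emph{Part (a).} For the selection criterion I would fix $O$ and a simply connected $N\supset\overline O$ and, using the independence just established, evaluate $\pi^{z}$ on a \emph{coherent} family of simplices $b_I$ (for $\overline I\subset N$, $I\subset O'$) issued from a common interval in $O'\cap N$ and nested through one another. Setting $W^{NO}_I\doteq z^{(\chi)}_{(f,g)}(b_I)^{*}$, condition $(1)$ is immediate from \eqref{1pi}, condition $(2)$ follows from the cocycle identity for the coherent choice, and independence of $N$ is clear since the $b_I$ only involve cutoffs inside $O'\cap N$. For irreducibility, a unitary self-intertwiner $T$ gives $T_I\in\pi^{z}_I(\cW(I))'$, so $S_I\doteq z(b_I)^{*}T_Iz(b_I)\in\gR(I)'$; compatibility with $\psi^{z}$ (via \eqref{intertREP}) forces the family $S_I$ to be constant in $I$, whence $S_I\in\bigcap_{I\in\cR}\gR(I)'=\big(\bigcup_{I}\gR(I)\big)'=\bC\ide$ by irreducibility of $\gR$. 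Thus $\{\pi^{z},\psi^{z}\}$ is an irreducible sharp excitation and defines a sector.

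\emph{Parts (c) and (b).} Part (c) is direct: by \eqref{zetapi} and \eqref{1psi}, $\zeta^{\pi^{z}}(b)=z(|b|,\partial_0 b)^{*}z(|b|,\partial_1 b)$, and the cocycle identity for the $2$-simplex whose faces are $b$, $(|b|,\partial_0 b)$ and $(|b|,\partial_1 b)$ gives $\zeta^{\pi^{z}}(b)=z^{(\chi)}_{(f,g)}(b)$. For (b), the \emph{if} direction uses Theorem~\ref{propcocycle}(b): the localized cocycle intertwiner $V$ found there induces, through $T_I\doteq z^{(\chi')}_{(f',g')}(b_I)\,V_I\,z^{(\chi)}_{(f,g)}(b_I)^{*}$, a unitary intertwiner of the two generalized representations, so they lie in the same sector when $f=f'$, $g=g'$ (the dependence on $\chi$ being irrelevant up to equivalence). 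For the \emph{only if} direction I would argue directly, by the device used at the end of the proof of Theorem~\ref{propcocycle}(c): from a unitary intertwiner $T$ one extracts, after absorbing the $z(b_I)$'s and the telescoping phases, that $z^{(\chi)}_{(f'-f,\,g'-g)}(b)$ is implemented by operators commuting with every $Z(r,s)$ supported in the gap between $\partial_0 b$ and $\partial_1 b$; Weyl relations then give $\int_{\bS^1}\big((f'-f)s-(g'-g)r\big)\,d\theta=0$ for all such $r,s$, and arbitrariness of $r,s$ together with covering $\bS^1$ by such gaps forces $f=f'$ and $g=g'$.

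\emph{Main obstacle.} The genuine difficulty, and the point at which the circle departs from Minkowski space, is the first paragraph: verifying the intertwining relation and the independence of $\pi^{z}_I$ from $b_I$, which both hinge on certifying that the telescoped cutoffs are supported in $\tilde I'$ (resp.\ $I'$) rather than wrapping around $\bS^1$, so that the transport operators land in $\gR(\tilde I)'$ (resp.\ $\gR(I)'$). This non-winding is exactly what the use of $1$-simplices secures, and the direct argument for the \emph{only if} part of (b) is of the same delicate nature.
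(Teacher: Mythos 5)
Much of your proposal is sound and, for the first part, even takes a different route from the paper: you localize the ``mismatch'' unitaries $z^{(\chi)}_{(f,g)}(b_I)^{*}\,z^{(\chi)}_{(f,g)}(I,\widetilde{I})\,z^{(\chi)}_{(f,g)}(b_{\tilde I})$ in $\gR(\widetilde{I}')$ by telescoping cutoffs, whereas the paper computes the adjoint action on Weyl generators explicitly, arriving at formula (\ref{pp}), whose right-hand side visibly depends only on $I$; that computation sidesteps the bookkeeping your cancellation needs (your support claim is true, but the reason is that the differently extended copies of $\chi^{(I)}$ and $\chi^{(\tilde I)}$ agree on a neighborhood of $\overline{\widetilde{I}}$, not merely that $J,J'\subset \widetilde{I}'$ as you assert). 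Your derivation of part (c) via the cocycle identity for the $2$-simplex with faces $b$, $(|b|,\partial_0 b)$, $(|b|,\partial_1 b)$ is a correct variant of the paper's direct computation, and your treatment of the selection criterion matches the paper's choice $W^{NO}_I=z^{(\chi)}_{(f,g)}(b_I)^{*}$ with $|b_I|\subset N$.

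The genuine gaps are in irreducibility and in the ``only if'' half of (b), and both stem from the same missing construction. For irreducibility you set $S_I\doteq z^{(\chi)}_{(f,g)}(b_I)^{*}T_Iz^{(\chi)}_{(f,g)}(b_I)\in\gR(I)'$ and assert that compatibility with $\psi^{z}$ ``forces the family $S_I$ to be constant in $I$''. It does not: the relation $T_I\psi^{z}_{I\tilde I}=\psi^{z}_{I\tilde I}T_{\tilde I}$ yields only $S_{\tilde I}=U^{*}S_IU$ with $U\in\gR(\widetilde{I}')$ the mismatch unitary above, and since $S_I\in\gR(I')$ and $U\in\gR(\widetilde{I}')$ sit in algebras of overlapping intervals they need not commute; so constancy, and with it membership in $\bigcap_{I}\gR(I)'$, does not follow. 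The same defect undermines your ``only if'' argument: the extraction device at the end of the proof of Theorem~\ref{propcocycle}$(c)$ requires the intertwiners to lie in $\gR(\partial_j b)$, so that they commute with the probe operators $Z(r,s)\in\gR(I_b)$; from representation-intertwining you only obtain $z^{(\chi)}_{(f,g)}(b_I)^{*}T_Iz^{(\chi)}_{(f,g)}(b_I)\in\gR(I)'=\gR(I')$, and an element of $\gR(I')$ does not in general commute with $\gR(I_b)\subseteq\gR(I')$. The paper repairs exactly this point by transporting from the causal complement: $t_I\doteq z^{(\chi')}_{(f',g')}(b_{OI})^{*}\,T_O\,z^{(\chi)}_{(f,g)}(b_{OI})$ with $O\subseteq I'$, proving independence of the choice of $O$ by chains of $1$-simplices, localizing $t_I\in\gR(I)$ by Haag duality (it commutes with $\bpi_0(\cW(O))$ for every $O\subseteq I'$), and verifying $t_{\partial_0 b}\,z^{(\chi)}_{(f,g)}(b)=z^{(\chi')}_{(f',g')}(b)\,t_{\partial_1 b}$; only then do parts $(a)$ and $(c)$ of Theorem~\ref{propcocycle} apply, yielding irreducibility and the only-if direction. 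Finally, in your ``if'' direction the formula $T_I\doteq z^{(\chi')}_{(f',g')}(b_I)\,V_I\,z^{(\chi)}_{(f,g)}(b_I)^{*}$ has an index slip and fails to intertwine (a commutator $[V_I,\bpi_0(A)]$ survives): the correct choice is simply $T_I=V_I$, which by the cocycle relation equals $z^{(\chi')}_{(f',g')}(b_I)\,V_{\partial_1 b_I}\,z^{(\chi)}_{(f,g)}(b_I)^{*}$ --- again a transport from the complement, as in the paper.
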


\begin{proof}
 First of all we have to show that (\ref{UNR}) are fulfilled. By direct inspection, exploiting the definition
of $z^{(\chi)}_{(f,g)}$, we find
\beq
 \psi^{z^{(\chi)}_{(f,g)}}_{I,\tilde{I}} =
  Z\left((\chi^{(\tilde{I})}- \chi^{(I)})f,(\chi^{(\tilde{I})}- \chi^{(I)})g \right) \label{zz}\:,
\eeq
where the function $\chi^{(\tilde{I})}$ has  been extended to the whole larger interval $I$
 as a constant function as beforehand, and similarly, the so 
 obtained function $\chi^{(\tilde{I})}- \chi^{(I)}$, which is compactly supported in $I$, has been extended to the null 
 function outside $I$. With this definition the second identity in (\ref{UNR}) arises from (\ref{zz}) and Weyl identities 
 straightforwardly. Let us pass to the former identity in (\ref{UNR}).
By linearity and continuity,  
this can be done by verifying the first statement in (\ref{UNR})
with the involved function applied to local 
Weyl generators $A= W(\Phi,\Pi)$ with $\Phi,\Pi$ supported in ${I}$. Remind that, in our case, $j_{I\tilde{I}}$
can be omitted interpreting the elements of the local Weyl algebras working as elements of the global Weyl algebra $\cW$.
 By direct inspection, employing ${\bpi_0}\left(W(\Phi,\Pi)\right) = Z\left(\Phi,\Pi\right)$, 
  making use of Weyl relations and employing the definition of $z^{(\chi)}_{(f,g)}$
 one finds that, if $\Phi,\Pi$ are supported in ${I}$,
 \beq \pi^{z^{(\chi)}_{(f,g)}}_{I}\left( W(\Phi,\Pi)\right)= Z(\Phi,\Pi)  
 \exp\left\{i\sigma\left((\Phi,\Pi),((1-\chi^{(I)}) f,(1-\chi^{(I)}) g)\right)\right\}\:.\label{pp}\eeq
 Notice that only $I$ appears in the right-hand side, so that different choices for $b_{I}$ yields 
the same result and the choice of $b_{I}$ is immaterial.
\eqref{pp} and \eqref{zz} entail, in view of Weyl identities
\begin{align*}
\psi^{z^{(\chi)}_{(f,g)}}_{I,\tilde{I}} \pi^{z^{(\chi)}_{(f,g)}}_{\tilde{I}}&\left( W(\Phi,\Pi)\right)
\psi^{z^{(\chi)}_{(f,g)}*}_{I,\tilde{I}} \\
&= 
Z(\Phi,\Pi)  
 e^{i\sigma\left((\Phi,\Pi),((1-\chi^{(\tilde{I})}) f,(1-\chi^{(\tilde{I})}) g)\right)}
  e^{i\sigma\left((\Phi,\Pi),((\chi^{(\tilde{I})}-\chi^{(I)}) f,(\chi^{(\tilde{I})}-\chi^{(I)}) g)\right)}\\
&= Z(\Phi,\Pi)  
 e^{i\sigma\left((\Phi,\Pi),((1-\chi^{(I)}) f,(1-\chi^{(I)}) g)\right)}\\
&= \pi^{z^{(\chi)}_{(f,g)}}_{I}\left( W(\Phi,\Pi)\right)\:.
\end{align*}
This result implies the first identity in \eqref{UNR}.\\
Let us prove 
$(a)$. If $O \in \cR$ let  $N \subset \bS^1$ a (connected) simply connected open set (so that either $N\in \cR$
or $N= \bS^1 \setminus \{p\}$ for some $p\in \bS^1$) with $\overline{O} \subset N$.
Fix $I\in \cR$ with both $\overline{I} \subset N$ and $I \subset O'$. We can define $W_I^{NO}$ as
\beq
W_I^{NO} \doteq z^{(\chi)}_{(f,g)}(b_I)^*\label{WINO}\:,
\eeq
where $b_I\in \Sigma_1(\cR)$ is chosen as in the \eqref{1pi}
but  $|b_I| \subset N$. With the definition \eqref{WINO} the three requirements under \eqref{SS} turn out to be valid.
The first requirement is verified automatically in view of \eqref{1pi}, the remaining two have straightforward proofs
based on  Weyl relations and proceeding as above. 
The proof of the irreducibility of $\{\pi^{z^{(\chi)}_{(f,g)}}, \psi^{z^{(\chi)}_{(f,g)}}\}$ will be postponed at the end of the proof of $(b)$.\\
$(b)$ In view of $(b)$ and $(c)$ in Theorem \ref{propcocycle}, the thesis  is equivalent to say that 
$\{\pi^{z^{(\chi)}_{(f,g)}}, \psi^{z^{(\chi)}_{(f,g)}}\}$ and $\{\pi^{z^{(\chi')}_{(f',g')}}, \psi^{z^{(\chi')}_{(f',g')}}\}$
are unitary equivalent if and only if $z^{(\chi)}_{(f,g)}$ and $z^{(\chi')}_{(f',g')}$ are unitarily equivalent. Let us prove the thesis in this 
second form. Suppose that $T \in (z^{(\chi)}_{(f,g)}, z^{(\chi')}_{(f',g')})$ is unitary, as a consequence
$T \in (\{\pi^{z^{(\chi)}_{(f,g)}}, \psi^{z^{(\chi)}_{(f,g)}}\},\{\pi^{z^{(\chi')}_{(f',g')}}, \psi^{z^{(\chi')}_{(f',g')}}\})$.
Indeed take $A\in\cW(I)$ and remind that $T_{\partial_1b_I}\in \gR(\partial_1b_I)$ and thus 
$T_{\partial_1b_I}$ and $T^*_{\partial_1b_I}$
 commute
with $\bpi_0(A)$ because $\partial_1b_I \subseteq I'$. Hence,

\begin{align*}
T_I \ \pi^{z^{(\chi)}_{(f,g)}}(A)\ T^*_I &= T_I\ z^{(\chi)}_{(f,g)}(b_I) {\bpi_0}(A)
z^{(\chi)*}_{(f,g)}(b_I)\ T^*_I \\
&=  z^{(\chi ')}_{(f',g')}(b_I) T_{\partial_1b_I}{\bpi_0}(A)
(T_I z^{(\chi)}_{(f,g)}(b_I))^*\\
&= z^{(\chi ')}_{(f',g')}(b_I) {\bpi_0}(A)
 T_{\partial_1b_I} (z^{(\chi ')}_{(f',g')}(b_I) T_{\partial_1b_I})^*\\
&= z^{(\chi ')}_{(f',g')}(b_I) {\bpi_0}(A)
 T_{\partial_1b_I} T^*_{\partial_1b_I} z^{(\chi ')*}_{(f',g')}(b_I)\\
&= z^{(\chi ')}_{(f',g')}(b_I) {\bpi_0}(A) z^{(\chi ')*}_{(f',g')}(b_I) \\
&= \pi^{z^{(\chi')}_{(f',g')}}(A)\:.
\end{align*}
Similarly, directly by the definition of $\psi^{z^{(\chi)}_{(f,g)}}$ one also gets, if $\widetilde{I}\subseteq I$,
$$T_{I}\ \psi^{z^{(\chi)}_{(f,g)}}_{I,\tilde{I}} =
T_I\ z^{(\chi)}_{(f,g)}(I,\widetilde{I}) = z^{(\chi')}_{(f',g')}(I,\widetilde{I})\  T_{\tilde{I}}
=  \psi^{z^{(\chi')}_{(f',g')}}_{I,\tilde{I}}\ T_{\tilde{I}}\:.$$
The obtained result implies that equivalence of cocycles entails unitary
 equivalence of the associated generalized representations. 
Let us prove the converse. To this end suppose that 
$$T \in (\{\pi^{z^{(\chi)}_{(f,g)}}, \psi^{z^{(\chi)}_{(f,g)}}\},\{\pi^{z^{(\chi')}_{(f',g')}}, \psi^{z^{(\chi')}_{(f',g')}}\})$$
is unitary.
For every $I\in \cR$ define the unitary operator
\beq
t_I \doteq z^{(\chi')}_{(f',g')}(b_{OI})^* \ T_O\  z^{(\chi)}_{(f,g)}(b_{OI})\:,
\eeq
where $b_{OI}\in \Sigma_1(\cR)$ is a positive oriented simplex 
such that $\partial_1  b_{OI} =I$, $\partial_0  b_{OI} =O$ and $I\subseteq O'$. We want to prove that $t_I$ defines a 
localized intertwiner for the cocycles associated with the representations we are considering.
First of all we notice that $t_I$ does not depend on the chosen $O\subset I'$ because, if $\widetilde{O} \subseteq O$ one has
\begin{align*}
t_I &\doteq z^{(\chi')}_{(f',g')}(b_{OI})^* T_O z^{(\chi)}_{(f,g)}(b_{OI})\\
& =
z^{(\chi')}_{(f',g')}(b_{OI})^* T_O z^{(\chi)}_{(f,g)}(b_{O\tilde{O}})
z^{(\chi)}_{(f,g)}(b_{\tilde{O}I})\\
&= z^{(\chi')}_{(f',g')}(b_{OI})^* T_O \psi^{z^{(\chi)}_{(f,g)}}(b_{O\tilde{O}})
z^{(\chi)}_{(f,g)}(b_{\tilde{O}I})\\
& = z^{(\chi')}_{(f',g')}(b_{OI})^*  \psi^{z^{(\chi')}_{(f',g')}}(b_{O\tilde{O}})
T_{\tilde{O}} z^{(\chi)}_{(f,g)}(b_{\tilde{O}I})\\
&= z^{(\chi')}_{(f',g')}(b_{OI})^* z^{(\chi')}_{(f',g')}(b_{O\tilde{O}})
T_{\tilde{O}} z^{(\chi)}_{(f,g)}(b_{\tilde{O}I})\\
 &= z^{(\chi')}_{(f',g')}(b_{\tilde{O}I})^* 
T_{\tilde{O}} z^{(\chi)}_{(f,g)}(b_{\tilde{O}I})\:.
\end{align*}
Using a suitable chain of $1$-simplices and using the identity above, one can pass from the initial $O \subseteq I'$ to any other $O_1\subseteq I'$. Now notice that, if $B\in \cW(O)$
\begin{align*}
t_{I} {\bpi_0}_O(B) &= z^{(\chi')}_{(f',g')}(b_{OI})^* T_O z^{(\chi)}_{(f,g)}(b_{OI}){{\bpi_0}}_O(B)  z^{(\chi)*}_{(f,g)}(b_{OI})\\
&= z^{(\chi')}_{(f',g')}(b_{OI})^* T_O  \pi^{z^{(\chi)}_{(f,g)}}_O(B) z^{(\chi)*}_{(f,g)}(b_{OI})\\
&=z^{(\chi')}_{(f',g')}(b_{OI})^*  \pi^{z^{(\chi')}_{(f',g')}}_O(B) T_O z^{(\chi)*}_{(f,g)}(b_{OI}) \\
& = {\bpi_0}_O(B) t_I\:.
\end{align*}
So $t_I(A) \in {\bpi_0}_O(\cW(O))'$.
By Haag duality, and using the fact that $O\subseteq I'$ is generic, we conclude that $t_I(A) \in {\bpi_0}_I(\cW(I))'' = \gR(I)$ for every
$A \in \cW(I)$,
as wanted. Finally, let us prove that $t$ is an intertwiner between cocycles. Consider $b\in \Sigma_1(\cR)$ with $\partial_0b = I$.
Fix $O\subseteq |b|'$ in such a way that there are two 
 positively oriented $1$-simplices with $O$ as end point and starting, respectively, from $\partial_0b$ and $\partial_1b$.
 Then we can write
\begin{align*}
t_{\partial_0b} z^{(\chi)}_{(f,g)}(b) &= z^{(\chi')}_{(f',g')}(b_{O\partial_0b})^* T_O z^{(\chi)}_{(f,g)}(b_{O\partial_0b}) z^{(\chi)}_{(f,g)}(b)\\
& = z^{(\chi')}_{(f',g')}(b_{O\partial_0b})^* T_O z^{(\chi)}_{(f,g)}(b_{O\partial_1b})\\
&=z^{(\chi')}_{(f',g')}(b_{O\partial_0b})^* z^{(\chi')}_{(f',g')}(b_{O\partial_1b}) z^{(\chi')*}_{(f',g')}(b_{O\partial_1b}) T_O z^{(\chi)}_{(f,g)}(b_{O\partial_1b})\\
&= z^{(\chi')}_{(f',g')}(b_{O\partial_0b})^* z^{(\chi')}_{(f',g')}(b_{O\partial_1b})  t_{\partial_1b}\\
&=z^{(\chi')}_{(f',g')}(b)t_{\partial_1b}\:.
\end{align*}
Let us now prove, as claimed, that $\{\pi^{z^{(\chi)}_{(f,g)}}, \psi^{z^{(\chi)}_{(f,g)}}\}$ is irreducible.
Suppose there is a unitary intertwiner $U \in (\{\pi^{z^{(\chi)}_{(f,g)}}, \psi^{z^{(\chi)}_{(f,g)}}\},\{\pi^{z^{(\chi)}_{(f,g)}}, \psi^{z^{(\chi)}_{(f,g)}}\})$\:.
As a consequence the  operators $t_I\doteq z^{(\chi)}_{(f,g)}(b_{OI})^* U_O z^{(\chi)}_{(f,g)}(b_{OI})$, where $I,O\in \cR$, 
$O \subseteq I'$ and the direction from $I$ to $O$ is positive, define a unitary intertwiner
$t\in (z^{(\chi)}_{(f,g)}, z^{(\chi)}_{(f,g)})$. The statement $(a)$ in Theorem \ref{propcocycle}
implies that the $t_I$ are all of the form $c \ide$ with $c\in \bC$ and $|c|=1$. Therefore the $U_O$ have the same form and
since $O$ can be chosen arbitrarily in $\cR$, 
$\{\pi^{z^{(\chi)}_{(f,g)}}, \psi^{z^{(\chi)}_{(f,g)}}\}$ is irreducible.\\
$(c)$ The statement is an immediate consequence of (\ref{zz}) and Weyl relations.
\end{proof}

\begin{remark} The definition of $\{ \pi^{z_{(f,g)}^{(\chi)}}, \psi^{z_{(f,g)}^{(\chi)}}\}$ can be modified 
changing the requirements on the simplex $b_I$. These changes do not affect the results in $4$ dimension as 
established in Theorem 4.3 in \cite{BR2} where $b_I$ can be replaced by any path $p_I$ ending on $I$
but starting from $\partial_1p \subseteq I'$. Remarkably,  the situation  is different here.
Replacing the 1-simplex $b_I$ in (\ref{1pi}) with a path $p_I$
ending in $I$ which winds $n\in \bZ$ times around the circle before reaching $I$ and such that the final $1$-simplex
ending on $I$ is positively oriented, with the initial point in $I'$, 
\beq
\rho^{z^{(\chi)}_{(f,g)}}_I(A) &\doteq& z^{(\chi)}_{(f,g)}(p_I) {\bpi_0}_I(A) z^{(\chi)}_{(f,g)}(p_I)^*\:, \quad A \in \cW(I)\label{1pi'}\\
\phi^{z^{(\chi)}_{(f,g)}}_{I,\tilde{I}} &\doteq& z^{(\chi)}_{(f,g)}\left(I,\widetilde{I}\right) \label{1psi'}\:.
\eeq
define a generalized representation which is not  in the class of representations considered in the theorem just proved.
 However this new representation turns out to be unitarily equivalent to  $\{{\pi}^{z_{(f,g)}^{(\chi)}}, {\psi}^{z_{(f,g)}^{(\chi)}}\}$, 
\beq
{\rho}^{z_{(f,g)}^{(\chi)}}(A) =  Z(nf,ng) \pi^{z_{(f,g)}^{(\chi)}}(A) Z(nf,ng)^*\:, \quad  
{\phi}^{z_{(f,g)}^{(\chi)}} = Z(nf,ng)^* {\psi}^{z_{(f,g)}^{(\chi)}} Z(nf,ng)= {\psi}^{z_{(f,g)}^{(\chi)}}\:.
\eeq
Another, more radical change may be performed in the definition \eqref{1pi}, if one assumes that the $1$-simplex $b_I$ with end points  $I$ and $J$ 
is {\em negatively} oriented. In this case one is committed to  replace also $z^{(\chi)}_{(f,g)}\left(I,\widetilde{I}\right)$ with 
$z^{(\chi)*}_{(f,g)}\left(I,\widetilde{I}\right)$ in the definition (\ref{1psi}), in order to 
obtain a generalized representation. With these changes definitions \eqref{1pi} and \eqref{1psi} work anyway 
and give rise to a different representation
$\{ \tilde{\pi}^{z_{(f,g)}^{(\chi)}}, \tilde{\psi}^{z_{(f,g)}^{(\chi)}}\}$.
Also this representation is not included in the class of representations considered in the theorem.
However that new representation is globally unitarily equivalent to a representation as those in the theorem, {\em but associated with a different cocycle}.
In fact it turns out to be unitarily equivalent to $\{\pi^{z_{(-f,-g)}^{(\chi)}}, \psi^{z_{(-f,-g)}^{(\chi)}}\}$, where we stress that the signs in front of $f$ and $g$, and thus the cocycle, has changed.
Indeed, one finds after a trivial computation based on the explicit form of cocycles:
\beq
\tilde{\pi}^{z_{(f,g)}^{(\chi)}}(A) =  Z(f,g) \pi^{z_{(-f,-g)}^{(\chi)}}(A) Z(f,g)^*\:, \quad  
\tilde{\psi}^{z_{(f,g)}^{(\chi)}} = Z(f,g)^* \tilde{\psi}^{z_{(-f,-g)}^{(\chi)}} Z(f,g)
= \tilde{\psi}^{z_{(-f,-g)}^{(\chi)}}\:.
\eeq
\end{remark}

\section{Conclusions and outlook}
In this paper we showed the first direct construction of 1-cocycles of topological nature, originally defined abstractly in four dimensional spacetimes in \cite{BR2}, in what we believe to be the easiest possible case, namely massive free scalar free fields on two dimensional Einstein spacetime. 

Contrary to the theorem proved in \cite{BR2} for the case of abelian fundamental groups, in our situation the constructed 1-cocycles are not just characters of the group. We addressed ourselves to the very preliminary and basic constructions, and we left open many questions, like the completeness of the found sectors, the relation between the category of sectors and that of generalized representations satisfying the selection requirement \eqref{SS}, whose proof of equivalence \cite{BR2} holds only in the four dimensional case, and many other possibilities.
   
\noindent This opens the door to many new directions of research. The one we are trying first is on the investigation of the full spacetime construction. It requires some variations from what we discussed in the body of the paper. The second possible direction is on trying to see whether we can reach the completeness of the topological superselection sectors. A third one consists in generalizing the construction to higher dimensions, in both the abelian and non-abelian cases of fundamental groups of the Cauchy surfaces. Another would be the investigation of the case of charged bosons.  A more ambitious goal would be to export the construction of topological cocycles in the case of massive interacting quantum field theories on Einstein or the two dimensional de Sitter spacetimes. 
 
\section*{Acknowledgments} We gratefully acknowledge discussions with Sebastiano Carpi, Klaus Fredenhagen, Giuseppe Ruzzi, and Rainer Verch. We are also grateful to Erwin Schr\"odinger Institute, Vienna, for the nice scientific environment that helped much to conclude the paper.

 \appendix
 
  \section{Proof of some propositions} \label{AppA}

\begin{proof} [Proof of Proposition~\ref{proplast}]
 Since all the theory is invariant under
 translation of $\bS^1 = (-\pi, \pi]$ (with $\pi \equiv -\pi$), we can always assume $J_0= (-a,a)$ with $0<|a|< \pi$.
We also select  two other elements $J_1, J_2 \in \cR$ with $(-\pi, \pi) \supset \overline{J_2}$,
$J_2 \supset \overline{J_1}$ and $J_1 \supset \overline{J_0}$.
As a further ingredient we fix an open neighborhood of $1$, $\cO = (e^{-\omega}, e^{\omega})$ with $\omega >0$  so small that
(1) $\lambda \overline{J_0} \subset J_1$, (2) $\lambda \overline{J_2} \subset (-\pi, \pi)$ for all $\lambda \in \cO$. Notice that
$\lambda \in \cO$ iff $\lambda^{-1} \in \cO$.
With these definitions, let $\chi \in C^\infty(\bS^1,\bR)$
 such that $0 \leq \chi(\theta) \leq 1$ for $\theta \in \bS^1$ and, more precisely,
 $\chi(\theta) =1$ for $\theta \in J_1$ but $\chi(\theta) =0$ in $\bS^1 \setminus J_2$.
Now  consider the class of operators $U_\lambda : L^2(\bS^1, d\theta) \to L^2(\bS^1, d\theta)$, with $\lambda \in \cO$, defined by:
  $$(U_\lambda f)(\theta) = \frac{\chi(\theta)}{\sqrt{\lambda}} f(\theta/\lambda)\:, \quad \forall \theta \in (-\pi,\pi]\:.$$
   Using the presence of the smoothing function $\chi$ and using a trivial change of variables where appropriate one proves the following features of $U_\lambda$:
  \begin{eqnarray}
    U_\lambda (C^\infty(\bS^1,\bR))&\subset& C^\infty(\bS^1,\bR)\:, \quad \forall  \lambda \in \cO\:,\label{smoothU}\\
  ||U_\lambda|| &\leq& 1\label{normU}\:, \quad \forall  \lambda \in \cO\:,\\
  U_1\rest_{L^2(J_0, d\theta)} &=& \ide \label{IU}\:,\\
   U_\lambda f &\to& f\quad \mbox{for $\lambda \to 1$ if $f \in C^\infty_0(J; \bC)$\:.} \label{sadded}
  \end{eqnarray}
  By direct inspection one also finds that:
  \beq
  (U^*_\lambda f)(\theta) &=&  \sqrt{\lambda} \chi(\lambda \theta) f(\lambda \theta)\:, \quad \forall f\in
L^2(\bS^1, d\theta)\:, \:\: \forall \theta \in (-\pi,\pi]
\:\:\mbox{and}\:\: \lambda \in \cO\:.\label{U*U}
  \eeq
  Then properties analogous to that found for $U_\lambda$ can be straightforwardly established using the expression given above for $U^*_\lambda$:
   \begin{eqnarray}
    U^*_\lambda (C^\infty(\bS^1,\bR))&\subset& C^\infty(\bS^1,\bR)\:, \quad \forall  \lambda \in \cO\:,\label{smoothU*}\\
  ||U^*_\lambda|| &\leq& 1\label{normU*}\:, \quad \forall  \lambda \in \cO\:,\\
  U^*_1\rest_{L^2(J_0, d\theta)} &=& \ide \label{IU*}\:,\\
   U^*_{1/\lambda} \rest_{L^2(J_0, d\theta)} &=&  U_{\lambda} \rest_{L^2(J_0, d\theta)} \label{UU*}\:, \quad \forall  \lambda \in \cO\:,\\
    U^*_\lambda f &\to& f\quad \mbox{for $\lambda \to 1$ if $f \in C^\infty_0(J)$\:.} \label{ssadded}
  \end{eqnarray}
  \begin{remark} \label{REMARK}
In view of the definition of $U_\lambda$ and (\ref{UU*}), if $f \in C_0^\infty(J_0,\bR)$ then $$\supp (U_\lambda f)= \supp (U^*_{1/\lambda} f)= \lambda \:\supp f\:.$$
\end{remark}

\noindent  Remembering this remark and  taking the first definition in (\ref{cfourier})  into account, one realizes that a candidate for $D_\lambda$ is the operator, initially defined on $C^\infty(\bS^1,\bC)$:
\beq D^{(0)}_\lambda \psi \doteq A^{1/4} U_\lambda A^{-1/4} \: Re \psi \:+\:  i  A^{-1/4} U^*_{1/\lambda} A^{1/4}  \:Im \psi\:,\quad \mbox{for all 
$\lambda \in \cO$ and $\psi \in M_{J_{0}}$.}\label{candidate}\eeq
  The right hand side is in fact  well-defined if $\psi \in K(\cS_{L})$ with $\cR \ni L \subsetneq J_0$, indeed
$A^{-1/4} \: Re \psi$ and $A^{1/4}  \:Im \psi$ belong to $C_0^\infty(J_0, \bR)$ so that they define elements in the domain of
$A^{1/4}$ and $A^{-1/4}$ respectively due to (\ref{smoothU}) and (\ref{smoothU*}).
Moreover it fulfills (a) in the thesis since $D^{(0)}_\lambda \psi \in K(\cS_{\lambda L})$ due to remark \ref{REMARK}.
However both operators $A^{1/4} U_\lambda A^{-1/4}$ and $A^{-1/4} U^*_{1/\lambda} A^{1/4}$ are well defined on $C^\infty(\bS^1,\bC)$.
To extend the validity of (a)  to every space
 $M_{L} \doteq \overline{K(\cS_{L})}$
with $L \subsetneq J_0$ as requested in the thesis,
it is sufficient  to prove that the  operators $A^{1/4} U_\lambda A^{-1/4}$ and $A^{-1/4} U^*_{1/\lambda} A^{1/4}$ are bounded on $C^\infty(\bS^1,\bC)$ and to extend them and $\cD^{(0)}_\lambda$ by continuity on the whole space $L^2(\bS^1, d\theta)$. The restriction
$\cD_\lambda$ to $M_{L}$ of the so obtained continuous extension will satisfy (a) by construction.\\
 To do it we use an argument based on an interpolation theorem.
 Consider $f \in C^\infty(\bS^1,\bC)$ and define $\chi_\lambda(\theta) \doteq \chi(\lambda \theta)$.
     By direct inspection one finds that
     $||A U_\lambda f||_{L^2}^2 \leq \lambda^{-4}||A_{\lambda m} (\chi_\lambda f)||_{L^2}^2$,
  where $A_{\lambda m}$ is $A$ with the mass $m$ replaced by $\lambda m$. By direct inspection one finds also that,
   for $\lambda <1$, $||A_{\lambda m} g||_{L^2}^2$ is bounded by
  $||A g||^2$ otherwise by $\lambda^4 ||A g||^2$. Summarizing
  $$||A U_\lambda f||_{L^2} \leq \: \sup_{\lambda \in \cO}\{1,\lambda^{-4}\}\: ||A (\chi_\lambda f)||^2_{L^2} \:.$$
  We can improve this upper bound as follows expanding $A (\chi_\lambda f)$.
 \beq ||A (\chi_\lambda f)||_{L^2} \leq ||\chi_\lambda A f||_{L^2} + \left|\left|\frac{d^2 \chi_\lambda}{d\theta^2} f \right|\right|_{L^2}
+ 2\left|\left| \frac{d \chi_\lambda}{d\theta} \frac{d f}{d\theta} \right|\right|_{L^2}\:.\label{long}\eeq
    Now, using the expression of the norm and using integration per parts where appropriate:
    \begin{eqnarray}
    ||\chi_\lambda A f||_{L^2} &\leq & ||\chi_\lambda||_\infty ||A f||_{L^2} = ||A f||_{L^2}\:, \nonumber\\
    \left|\left|\frac{d^2 \chi_\lambda}{d\theta^2} f \right|\right|_{L^2} &\leq & \left|\left|\frac{d^2 \chi_\lambda}{d\theta^2}\right|\right|_{\infty}|| f||_{L^2} \:, \nonumber\\
    \left|\left| \frac{d \chi_\lambda}{d\theta} \frac{d f}{d\theta} \right|\right|_{L^2} &\leq &  \left|\left| \frac{d \chi_\lambda}{d\theta}  \right|\right|_{\infty}
    \left|\left| \frac{d f}{d\theta} \right|\right|_{L^2} \leq  \left|\left|\frac{d \chi_\lambda}{d\theta}\right|\right|_{\infty}
   \sqrt{ \left\langle \overline{f} ,\frac{d^2 f}{d\theta^2} \right\rangle} \leq  \left|\left|\frac{d \chi_\lambda}{d\theta}\right|\right|_{\infty} \sqrt{ \left|\left| f \right|\right|_{L^2} \left|\left| \frac{d^2 f}{d\theta^2} \right|\right|_{L^2} }\nonumber
    \end{eqnarray}
  Now notice that  $A \geq \lambda_0 \ide$ where $\lambda_0 >0$ is the least eigenvalue of $A$ (which is strictly positive also for $A_0$) and thus $||Af||_{L^2} \geq \lambda_0 ||f||_{L^2}$. Similarly $A \geq  -\frac{d^2}{d\theta^2}$ and thus $||A f||_{L^2} \geq ||d^2 f/ d\theta^2||_{L^2}$, therefore:
   \begin{eqnarray}
    \left|\left| \frac{d \chi_\lambda}{d\theta} \frac{d f}{d\theta} \right|\right|_{L^2} &\leq &  \lambda_0^{-1/2}\left|\left|\frac{d \chi_\lambda}{d\theta}\right|\right|_{\infty} ||Af||_{L^2} \nonumber
    \end{eqnarray}
  Using these estimates in (\ref{long}) we finally obtains:
\beq ||A U_\lambda f||_{L^2}  \leq C ||Af||_{L^2}\:, \quad \mbox{for all $\lambda \in \cO$ and $f\in C^\infty(\bS^1, d\theta)$,}\label{stimaM}\eeq
 where
 $$C =  \sup_{\lambda \in \cO}\{1,\lambda^{-4}\} \sup_{\lambda \in \cO} \left\{1 + \left|\left|\frac{d^2 \chi_\lambda}{d\theta^2}\right|\right|_{\infty} +
\lambda_0^{-1/2}\left|\left|\frac{d \chi_\lambda}{d\theta}\right|\right|_{\infty}\right\} \:.$$
 $C$ is finite: It can be proved by shrinking $\cO$ and noticing the the function $(\lambda, \theta) \mapsto\chi_\lambda(\theta)$
and its derivatives are bounded in the compact $\overline{\cO} \times \bS^1$ since they are continuous. Since $C^\infty_0(\bS^1, d\theta)$ is a core for the self-adjoint (and thus closed) operator $A$, as a byproduct (\ref{stimaM})implies:
\begin{eqnarray}
U_\lambda (\cD(A)) &\subset& \cD(A)\quad \mbox{for all $\lambda \in \cO$ and} \nonumber\\
||A U_\lambda f||_{L^2}  &\leq& C ||Af||_{L^2}\:, \quad \mbox{for all $\lambda \in \cO$ and $f\in \cD(A)$.} \nonumber
\end{eqnarray}
The proof is immediate noticing that if $f \in \cD(A)$ there is a sequence $C^\infty_0(\bS^1, d\theta) \ni f_n \to f$ with
$Af_n \to Af$ ad , in view of continuity of $U_{\lambda}$,  $\{U_{\lambda}f_n\}_{n\in \bN}$ is Cauchy and, in view of (\ref{stimaM}) $\{AU_{\lambda}f_n\}_{n\in \bN}$
is Cauchy too. Closedness of $A$ implies that $Uf_n \to Uf \in \cD(A)$ and $A(Uf_n) \to A(Uf)$. This also proves that (\ref{stimaM})
is still valid in $\cD(A)$ by continuity. As $A\geq 0$ and (\ref{normU}) is valid, Proposition 9 cap IX.5 in Reed-Simon vol.2
used twice implies that 
 \begin{eqnarray}
U_\lambda (\cD(A^{1/4})) &\subset& \cD(A^{1/4})\quad \mbox{for all $\lambda \in \cO$ and} \nonumber\\
||A^{1/4} U_\lambda f||_{L^2}  &\leq& C^{1/4} ||A^{1/4}f||_{L^2}\:, \quad \mbox{for all $\lambda \in \cO$
and $f\in \cD(A^{1/4})$,}\nonumber
\end{eqnarray}
so that, since $Ran(A^{-1/4}) = \cD(A^{1/4})$ and
$\cD(A^{-1/4})$ is the whole Hilbert space, in particular
\beq A^{1/4} U_\lambda A^{-1/4}= B_\lambda: L^2(\bS^1,d\theta)\to L^2(\bS^1,d\theta) \quad  \mbox{with  $||B_\lambda|| \leq C^{1/4}$  for all $\lambda \in \cO$.} \label{stime}\eeq
This concludes the proof of the continuity of the
former operator in the right-hand side of (\ref{candidate}). Let us focus on the latter operator.
 By construction we obtain on the dense domain $\cD(A^{1/4})$, taking the adjoint of $B_\lambda$ and replacing $\lambda$ with $1/\lambda$ (remind that $\lambda \in \cO$ iff $1/\lambda \in \cO$):
$A^{-1/4} U^*_{1/\lambda} A^{1/4} \subset  B_{1/\lambda}^*$.
Since $B^*_{1/\lambda}$ is defined on the whole Hilbert space and $||B^*_{1/\lambda}|| = ||B_{1/\lambda}|| \leq C^{1/4}$, it being the adjoint of a bounded every-here defined operator, we conclude that
\beq A^{-1/4} U^*_{1/\lambda} A^{1/4} \:\mbox{cont. extends to} \:\: B^*_{1/\lambda}: L^2(\bS^1,d\theta)\to L^2(\bS^1,d\theta) \:\: \mbox{with  $||B^*_{1/\lambda}|| \leq C^{1/4} \forall \lambda \in \cO$.} \label{stime2}\eeq
 This concludes the proof of (a).\\
 Concerning the property (b): $D_\lambda \psi \to \psi$
  as $\lambda \to 1$ for $\psi \in M_L$ with $\cR \ni L \subsetneq J_0$, it is equivalent to prove that $B_\lambda Re \psi \to Re \psi$ and $B^*_{1/\lambda} Im \psi \to Im \psi$
  as $\lambda \to 1$ for $\psi \in M_L$.\\
  Notice that $A^{-1/4}$ is continuous and so, when $\psi \in K(\cS_L)$, one has
   $$A^{-1/4} U^*_{1/\lambda} A^{1/4} (Im \psi) \to A^{-1/4} U^*_1 A^{1/4} (Im \psi) =  A^{-1/4} A^{1/4} (Im \psi) = Im \psi$$
where we have used  (\ref{ssadded}) and
(\ref{IU*}) noticing that $A^{1/4} (Im \psi) \in C^\infty(L) \subset  L^2(J_0, d\theta)$ when $\psi \in K(\cS_L)$.
 The result can be extended to $M_{L}\doteq \overline{K(\cS_L)}$ due to the uniform bound (\ref{stime}) as follows. If $\psi\in M_{L}$, let
   $K(\cS_L) \ni \psi_n \to \psi$ and denote $Im \psi_n$ and $Im \psi$ respectively by $f_n$ and $f$. Obviously $f_n \to f$. One has, for $\lambda \in \cO$ so that (\ref{stime}) holds,
   $$||B^*_{1/\lambda} f- f|| \leq ||B^*_{1/\lambda} (f-f_n)|| + ||B^*_{1/\lambda} f_n -f_n|| + ||f_n-f|| \leq (C^{1/4}+1)||f-f_n|| + ||B^*_{1/\lambda} f_n -f_n||\:.$$
   For any fixed $\epsilon>0$, taking $n= n_\epsilon$  such that $(C^{1/4}+1)||f-f_{n_\epsilon}|| < \epsilon/2$, we can found $\delta>0$
  such that $\lambda \in (1-\delta, 1+\delta)$ entails $||B^*_{1/\lambda}f_{n_\epsilon} -f_{n_\epsilon}||< \epsilon/2$. Hence
  for that $\epsilon>0$, $||B^*_{1/\lambda} f- f|| <
  \epsilon$  provided  that $\lambda \in (1-\delta, 1+\delta)$. That is $B^*_{1/\lambda} Im \psi \to Im \psi$ as $\lambda \to 1^-$ for all $\psi \in M_{L}$. \\
  To conclude let us pass to prove that   
$B_{\lambda} Re \psi \to Re \psi$
  as $\lambda \to 1$ for $\psi \in M_L$ and $L \subsetneq J_0$. Let us indicate $Re\psi$ by $f$. As before, first consider the case
$\psi \in K(\cS_L)$. This means in particular that  $f = A^{1/4}h$ for some $h\in C_0^\infty(J_0, \bR)$. Now notice that:
\beq
||B_{\lambda} f - f||^2 = ||B_{\lambda} f||^2 + ||f||^2 - 2 Re \langle f, B_{\lambda} f\rangle \:.\label{ustep}\eeq
In our case, as $\lambda \to 1^-$, due to (\ref{IU}) and (\ref{sadded}):
$$\langle f, B_{\lambda} f\rangle = \langle A^{1/4} h, A^{1/4} U_\lambda h\rangle = \langle A^{1/2} h, U_\lambda h\rangle \to
\langle A^{1/2} h,  h\rangle =  \langle A^{1/4} h, A^{1/4} h\rangle = \langle  f,  f\rangle\:, $$ 
Similarly $||B_{\lambda} f||^2 \to \langle  f,  f\rangle$ as $\lambda \to 1^-$, this because:
$$||B_{\lambda} f||^2 = \langle f, A^{1/4} U^*_\lambda A^{-1/4} A^{1/4} U_\lambda A^{-1/4} f\rangle =
 \langle f, A^{1/4} U^*_\lambda U_\lambda A^{-1/4} A^{1/4} h\rangle =  \langle f, A^{1/4} U^*_\lambda U_\lambda h\rangle$$
and by direct inspection, using the definition of $U_\lambda$ and (\ref{U*U}) one see that, for each $h\in C_0^\infty(J_0, \bR))$
$U^*_\lambda U_\lambda h = h$. Putting all together in (\ref{IU}) one concludes that
$B_{\lambda} Re \psi \to Re \psi $ as $\lambda \to 1^-$ when $\psi \in K(\cS_L)$. The extension to the case $\psi \in M_L \doteq \overline{K(\cS_L)}$ is the same as in the case of $Im \psi$.\\
  We have proved the property (b) that $D_\lambda \psi \to \psi$ as $\lambda \to 1^-$ for $\psi \in M_{L}$
  and it concludes the proof.
\end{proof}

\begin{proof}[Proof of Lemma~\ref{lemmaadded}]
 Notice that, as a general fact it  holds $\cS_I \subset (\cS_{I'})'= (M_{I'})'$ and
$\cS_J \subset (\cS_{J'})'= (M_{J'})'$
and thus taking the closures and the intersections, $M_I \cap M_J  \subset (M_{I'})' \cap (M_{J'})'$.
 This is equivalent to say that, if
$\psi \in M_I \cap M_J$ then
$Im \langle \psi, \phi \rangle =0$ when either $\phi\in M_{I'}$ or $\phi \in M_{J'}$. In particular,
$Im \langle \psi, K(\Phi,\Pi) \rangle =0$ when both the smooth real functions $\Phi, \Pi$ are supported in
$I'$ or in $J'$. Therefore the distributions (see the proof of Theorem \ref{RSp} to show that
that those functionals are in fact distributions of $\cD'(\bS^1)$) $C^\infty(\bS^1,\bR) \ni
f \mapsto \langle Im \psi, A^{1/4} f\rangle$
and $C^\infty(\bS^1,\bR) \ni
f \mapsto \langle Re \psi, A^{-1/4} f\rangle$ have support included in $\bS^1 \setminus (I'\cup J') = (\bS^1 \setminus I') \cap (\bS^1 \setminus J')
= \overline{I} \cap \overline{J}$. Since $I$ and $J$ are disjoint proper open segments one has
$\overline{I} \cap \overline{J}=\partial I \cap \partial J$.
Therefore, if $\partial I \cap \partial J= \emptyset$ both distributions $\langle Re \psi, A^{-1/4} \cdot \rangle$
and $\langle Im \psi, A^{1/4}  \cdot\rangle$ vanish and this implies that $\psi =0$ since
$\overline{A^{\pm 1/4}(C^\infty(\bS^1))} = L^2 (\bS^1,d\theta)$ as proved in Proposition \ref{P1}. Otherwise
$\partial I \cap \partial J$ contains two points at most, say $p$ and $q$.
We can assume, without loss of generality, that $\theta_p=0$ and $\theta_q \in (0,2\pi)$ (this extend can always be achieved by
redefining the origin of
coordinate $\theta$ on $\bS^1$). It is  a well-known result
of distributions theory that distributions with support given by a single point are
polynomials of derivative of Dirac deltas supported on that point (the case of a finite number of points is a
trivial extension). Consider $\langle Im \psi, A^{1/4}  f\cdot\rangle$. In our case there must be a finite number
of coefficients $a_j,b_j \in \bR$ such that, for every $f \in C^\infty(\bS^1,\bR)$  it must hold
$$\langle Im \psi, A^{1/4}  f \rangle = \sum_{j=0}^{N_p} a_j \frac{d^j}{d\theta^j} f|_{p} +
\sum_{j=0}^{N_q} b_j \frac{d^j}{d\theta^j} f|_{q}\:.$$
Passing to Fourier transformation, the identity above can be re-written if $\psi_k$ and $f_k$ are the Fourier
coefficients of $Im \psi$ and $f$ respectively
$$\sum_{k\in \bZ}  \overline{\psi_k} (k^2+m^2)^{1/4} f_k = \sum_{k\in \bZ}  \left(
\sum_{j=0}^{N_p} a_j (ik)^j + \sum_{j=0}^{N_q} b_j (ik)^j e^{ik \theta_q}\right) f_k\:.$$
(notice that $f_k \to 0$ faster than every power $|k|^{-M}$ so that the right hand side is well defined).
Since the functions $f$ are dense in the Hilbert space, this is equivalent to say that:
\begin{equation}\label{AGG}
 \overline{\psi}_{k}\doteq (k^{2}+m^{2})^{-\frac{1}{4}}\left(\sum_{j=0}^{N} (a_{j} + e^{\imath k\theta_{q}} b_{j})(i k)^{j}\right)\:.
\end{equation}
where we have defined $N\doteq\max(N_{p},N_{q})$ (assuming $a_j=0$ and $b_j=0$ for the added coefficients).
Let us prove that the right-hand side defines a $\ell^2(\bZ)$ sequence -- as it is required by $\psi \in L^2(\bS^2,d\theta)$ --  only if $a_j=0$ and $b_j=0$ for every $j$.
Assume that $\{{\psi}_{k}\}_k \in \ell^2(\bZ)$ so that the right-hand side of (\ref{AGG}) defines a $\ell^2(\bZ)$ sequence.
If $c_{j,k}\doteq Re\left[ (a_{j} + e^{i k\theta_{q}} b_{j})i^{j}\right]$,
\begin{equation}\label{ReRaw}
(Re\overline{\psi}_{k})^{2}= \left( k^{2}+m^{2}\right)^{-\frac{1}{2}}\sum_{l,j=0}^{N} c_{j,k}\,c_{l,k}\, k^{l+j}\:.
\end{equation}
The sequence $\{k\theta_{q}\}_{k\in\mathbb{Z}}$ in $[0, 2\pi]$ may be either
 periodic -- and this happens when $\frac{\theta_{q}}{2\pi}$ is rational --
 or it is dense in $[0, 2\pi]$ -- and this arises for $\frac{\theta_{q}}{2\pi}$ irrational. Fix
 $k_{0}\in\mathbb{Z}\setminus \{0\}$, in both
cases for $\epsilon > 0$, there is a sequence of integers
$\{k^{(\epsilon)}_{n}\}_{n\in\mathbb{Z}}$ such that:
\begin{equation*}             |c_{N,k^{(\epsilon)}_{n}}-c_{N,k_{0}}|<\epsilon\:,\quad\forall\; n\in\mathbb{Z}\:.
\end{equation*}
Moreover, defining $M \doteq \max_{j=0,\ldots, N} |a_j| +|b_j|$ one has $c_{j,k} \geq -M> -\infty$,
therefore a lower bound for the right-hand side of (\ref{ReRaw}) is
\begin{equation}\label{ReIneq}
(Re \overline{\psi}_{k^{(\epsilon)}_{n}})^{2}\geq ((k^{(\epsilon)}_{n})^{2}+m^{2})^{-\frac{1}{2}}\left((c_{N,k_{0}}-
\mbox{sign}(c_{N,k_{0}})  \epsilon)^{2}\,
(k_{n}^{(\epsilon)})^{2N} -
 \sum_{l+j<2N} M^2\,|k^{(\epsilon)}_{n}|^{l+j}
\right),
\end{equation}
If $c_{N,k_{0}} \neq 0$ the leading term in (\ref{ReIneq}) is
$((k_{n}^{(\epsilon)})^{2}+m^{2})^{-\frac{1}{2}}(c_{N,k_{0}}-\mbox{sign}(c_{N,k_{0}}) \epsilon)^{2}(k_{n}^{(\epsilon)})^{2N}$, so that
the right-hand side of (\ref{ReIneq})
diverges to $+\infty$ -- and
 $\{{\psi}_{k}\}_{k}\notin\ell^{2}(\mathbb{Z})$ -- unless $c_{N,k_{0}}-\mbox{sign}(c_{N,k_{0}}) \epsilon=0$.
Arbitrariness of  $\epsilon$ implies $c_{N,k_{0}}=0$ that is
$Re \left[ (a_{N} + e^{i k_{0}\theta_{q}} b_{N})i^{j}\right]= 0$.
Analogously one sees that
$Im \left[ (a_{N} + e^{i k_{0}\theta_{q}} b_{N})i^{j}\right] = 0
$,
and thus
$a_{N} + e^{i k_{0}\theta_{q}} b_{N}=0$. However,
since  $k_{0}$ was arbitary one also has  $a_{N}=b_{N}=0$.
Iterating the procedure one achieves
$
a_{j}=b_{j}=0,\quad \forall\; j=N,N-1,\ldots,1\:.
$
So that it remains to consider the case $j=0$, that is the case of $\{\psi_k\}_{k\in \bZ} \in \ell^2(\bZ)$ with
\begin{equation*}
{\overline{\psi}}_{k}\doteq (k^{2}+m^{2})^{-\frac{1}{4}}\left(a_{0} + e^{i k\theta_{q}} b_{0}\right)\quad \mbox{where $a_0,b_0\in \bR$ are
constant}\:.
\end{equation*}
Now
\begin{equation*}
|{\psi}_{k}|^{2} =
|\overline{\psi}_{k}|^{2}=(k^{2}+m^{2})^{-\frac{1}{2}}|a_{0} +
 e^{i k\theta_{q}} b_{0}|^{2}\geq (k^{2}+m^{2})^{-\frac{1}{2}}||a_{0}|-|b_{0}||^{2},                                                           
\end{equation*}
and thus $\{\psi\}_{k}\notin\ell^{2}(\mathbb{Z})$ unless $b_{0}=\pm a_{0}$.
With that choice we have in turn:
\begin{equation*}
|{\psi}_{k}|^{2}=2a_{0}^{2}\,(k^{2}+m^{2})^{-\frac{1}{2}}(1\pm\cos(k\theta_{q})).                             
\end{equation*}
As the series $\sum_{k=0}^{\infty}\frac{\cos(k\theta_{q})}{k}$ converges ($\theta_{q}\neq 0 \mod{2\pi}$ by hypotheses),
and $(k^{2}+m^{2})^{-\frac{1}{2}} \sim \frac{1}{k}$ for $k\rightarrow\infty$, it arises that
 $\sum_{k=0}^{\infty}|{\psi}_{k}|^{2}$ diverges barring the case $a_{0}= b_0 = 0$. 
This concludes the proof of the fact that  $a_j = b_j=0$ for all $j$ if $\psi \in L^2(\bS^1, d\theta)$.
We have found that the distribution  $\langle Im \psi, A^{1/4} \cdot \rangle$ must vanish.
The proof for $\langle Re \psi, A^{-1/4}  \cdot \rangle$ is strictly analogous. Since both distributions vanish
and $\overline{A^{\pm 1/4}(C^\infty(\bS^1))} = L^2 (\bS^1,d\theta)$, we are commited to admit that $\psi =0$, so that
$M_I \cap M_J = \{0\}$. \\
Concerning the last statement, from (\ref{LRT2}) one has
$\gR(I) \cap \gR(J) =  \gR[M_I] \cap \gR[M_J] = \gR[M_I\cap M_J] = \gR[\{0\}] = \bC \ide$.
\end{proof}

\begin{proof}[Proof of Lemma \ref{lemmaE}]
In the following  $\lambda \in \bR$. We complete the unit-norm vector $\psi\in \cH$ to a Hilbert basis of
$\cH$, pass to the associated Hilbert basis in $\gF_+(\cH)$
 and denote by  $F$ the dense subspace of $\gF_+(\cH)$ contianing all af finite linar combinations of the vectors of that basis.
Assuming $E_{\lambda}^{(\psi)}= e^{i\lambda A}$, taking the derivative at $\lambda=0$ of the identity
$$E_{\lambda}^{(\psi)}  W[\psi] E_{\lambda}^{(\psi)*}  =
W\left[e^\lambda \psi\right]\:, 
$$
without paying much attention to domain issues and, finally,  making use of (\ref{weylgen}), one gets that
\beq
\left[iA, a(\psi) -a^*(\psi)\right] \Phi = (a(\psi)-a^*(\psi)) \Phi \:,\label{commF}
\eeq
if $\Phi$ belongs to some suitable domain we  shall determine shortly.
Taking the commutation relation
$[a(\psi), a^*(\psi)] = \ide$  into account (recall that $||\psi||=1$),  we see that a candidate for $A$
is some self-adjoint extension of $A\doteq (1/2) (i a(\psi)a(\psi) - i a^*(\psi)a^*(\psi))$.
$A$ turns out to be symmetric if defined on $F$. If $\Phi \in F$ contains exactly $k$ particles in the state $\psi$
one finds $||A^n \Phi|| \leq \sqrt{(2n + k)!}$. From that it arises
 $\sum_{n=0}^{+\infty}\lambda^n ||A^n \Phi||/n! <+\infty$ if $|\lambda| <1/2$. Therefore
the vectors of $F$ are analytic for $A$ and thus
 $A$ is essentially
self-adjoint on $F$, $\overline{A}$ being its unique self-adjoint extension.
In particular the commutation relation (\ref{commF}) are, in fact, valid for $\Phi \in F$.
 Relations (\ref{commF}) lead to the further commutation
 relations
\beq
\left[(iA)^n, a(\psi) -a^*(\psi)\right] \Phi = \sum_{k=0}^{n-1} \binom{n}{k}  (a(\psi)-a^*(\psi)) (iA)^k \Phi \quad \mbox{for all $\Phi
\in F$.} \label{commF2}
\eeq
Using (\ref{commF2}) one easily proves the validity of the identity  for $|\lambda|<1/4$ and $\Phi \in F$,
\beq \sum_{n=0}^{+\infty} \frac{(i\lambda A)^n}{n!} (a(\psi)-a^*(\psi)) \Phi  =
 \sum_{n=0}^{+\infty} (a(\psi)-a^*(\psi)) \frac{(i\lambda A + \lambda \ide)^n}{n!}  \Phi\:.\label{InterE}\eeq
The series
$\sum_{n=0}^{+\infty} \frac{(i\lambda A + \lambda \ide)^n}{n!}  \Phi$
 converges for every $\Phi \in F$ and $|\lambda |< 1/4$ as one can establish making use of the bounds
 $||(A+i\ide)^n \Phi|| \leq 2^n \sqrt{(2n+k)!}$ when $\Phi \in F$ contains exactly $k$ particles in the state $\psi$. Therefore closedness of $\overline{a(\psi)-a^*(\psi)}$
 imply, via (\ref{InterE}), that the following two facts hold
 (i) $\sum_{n=0}^{+\infty} \frac{(i\lambda A + \lambda \ide)^n}{n!}  \Phi \in \cD(\overline{a(\psi)-a^*(\psi)})$ when
 $\Phi \in F$, $|\lambda|< 1/4$ and
 (ii) $\overline{a(\psi)-a^*(\psi)} \sum_{n=0}^{+\infty} \frac{(i\lambda A + \lambda \ide)^n}{n!} 
 \Phi = \sum_{n=0}^{+\infty} (a(\psi)-a^*(\psi)) \frac{(i\lambda A + \lambda \ide)^n}{n!}  \Phi$
in the same case. Therefore (\ref{InterE}) can be re-written as
\beq e^{i\lambda \overline{A}} (a(\psi)-a^*(\psi)) \Phi  =
\overline{a(\psi)-a^*(\psi)} \sum_{n=0}^{+\infty}  \frac{(i\lambda A + \lambda \ide)^n}{n!}  \Phi\:,\label{InterEx}\eeq
 where we have also used the fact that $(a(\psi)-a^*(\psi)) \Phi \in F$ when $\Phi\in F$ and thus the exponential
 $e^{i\lambda \overline{A}} (a(\psi)-a^*(\psi)) \Phi$ can be expanded in series. Since $\lambda \ide$ and $i\lambda A$
 commute, following exactly the same proof as used for numbers,
 one achieves $ \sum_{n=0}^{+\infty}  \frac{(i\lambda A + \lambda
 \ide)^n}{n!}  \Phi =e^\lambda \sum_{n=0}^{+\infty} 
 \frac{(i\lambda A)^n}{n!}\Phi$. On the other hand, since $\Phi$ is analytic for $A$, the right-hand side is nothing but
 $e^{\lambda} e^{i\lambda \overline{A}} \Phi$.   
Summing up, the identity (\ref{InterEx}) can be re-stated as
$$e^{i\lambda\overline{A}} \overline{a(\psi)-a^*(\psi)} \Phi =  \overline{e^\lambda(a(\psi)-a^*(\psi))} e^{i\lambda\overline{A}}\Phi\:,
\quad \mbox{for every $\Phi \in F$ and $|\lambda| <1/4$.}$$
This identity, used recorsively, leads immediately to
\beq e^{i\lambda\overline{A}} \overline{a(\psi)-a^*(\psi)}^n \Phi = \overline{e^{\lambda}(a(\psi)-a^*(\psi))}^n
e^{i\lambda\overline{A}}\Phi \label{InterE2}\quad \mbox{for every $\Phi \in F$ and $|\lambda| <1/4$.}\eeq
Since $e^{i\lambda\overline{A}}$ is unitary, (\ref{InterE2}) entails that, for $\Phi \in F$, $|\lambda| <1/4$ and every $u \in \bC$:
$$ \sum_{n=0}^\infty \frac{u^n}{n!} \left|\left| \overline{e^{\lambda}(a(\psi)-a^*(\psi))}^n e^{i\lambda\overline{A}} \Phi \right|
\right| =
 \sum_{n=0}^\infty \frac{u^n}{n!}\left|\left| \overline{a(\psi)-a^*(\psi)}^n \Phi\right| \right| < +\infty\:,
$$
where we have used the fact that every $\Phi \in F$ is analytic (for every value of the parameter $u$) for $ \overline{ia(\psi)-ia^*(\psi)}$
as is well known (see \cite{BRII}). We have found that $e^{i\lambda\overline{A}} \Phi$ is analytic for
$\overline{e^{\lambda}(a(\psi)-a^*(\psi))}$. In this context, the identity arising from (\ref{InterE2}) for
$\Phi \in F$ and $|\lambda| <1/4$,
$$e^{i\lambda\overline{A}} \sum_{n=0}^{+\infty}\frac{1}{n!}\overline{a(\psi)-a^*(\psi)}^n \Phi =
\sum_{n=0}^{+\infty}\frac{1}{n!} \overline{e^{\lambda}(a(\psi)-a^*(\psi))}^n
e^{i\lambda\overline{A}}\Phi $$
 can be re-written
$e^{i\lambda\overline{A}} e^{\overline{a(\psi)-a^*(\psi)}} \Phi =
e^{\overline{e^\lambda (a(\psi)-a^*(\psi))}} 
e^{i\lambda\overline{A}} \Phi$.
That is, taking advantage from the fact that $F$ is dense,
$E_{\lambda}^{(\psi)}  e^{\overline{a(\psi)-a^*(\psi)}} E_{\lambda}^{(\psi)*}  =
e^{ \overline{e^\lambda(a(\psi)-a^*(\psi))}} 
$,
where we have defined $E_{\lambda}^{(\psi)} \doteq e^{i\lambda\overline{A}}$.
Finally, employing $\bR$-linearity of $\psi \mapsto a(\psi), a^*(\psi)$, the achieved formula can be
re-stated as
$$E_{\lambda}^{(\psi)}  W[\psi] E_{\lambda}^{(\psi)*}  =
W\left[e^\lambda \psi\right]\:. 
$$
The restriction $|\lambda| <1/4$ can be dropped by employing iteratively the identity above and noticing that
$E_{\lambda}^{(\psi)}$ is additive in $\lambda\in \bR$. Hence the obtained identity holds true for every
$\lambda \in \bR$.
\end{proof}

 \section{Universal algebras}\label{AppB}
Consider a class of $C^*$-algebras with unit $\ide$ in common,
 $\{\cA(I)\}_{I\in \cI}$, where $\cI$
is a partially ordered set. We denote by $\subset$ the ordering relation in
$\cI$. Assume that the class $\{\cA(I)\}_{I\in \cI}$ is isotonous, i.e.
$$\cA(I) \subset \cA(J) \quad \mbox{when $I\subset J$ for $I,J\in \cI$}$$
where $\cA(I) \subset \cA(J)$ means that the former is a sub $C^*$-algebra
of the latter, these requirements define a {\em pre-cosheaf} of $C^*$-algebras (see for instance \cite{GL92}). It is {\em not} assumed that  $\{\cA(I)\}_{I\in \cI}$ is directed
with respect to $\subset$ and thus one cannot define the inductive limit of the
class $\cA$. However, as pointed out by Fredenhagen in \cite{Fredenhagen} (see also \cite{GL92} for a different but equivalent approach), it is possible to give a sort of generalized  inductive limit
of the isotonous
class of $C^*$-algebras $\{\cA(I)\}_{I\in \cI}$ which corresponds, in physical
application, to the $C^*$-algebra of quasi local observables also in those
contexts where the set $\cI$ is not directed. This is the case treated in this
paper when $\cI= \cR$ and $\cA(I)= \cW(I)$.\\

\begin{definition} \label{UA} Consider a {\em pre-cosheaf} of $C^*$-algebras with unit $\ide$ in common,
 $\{\cA(I)\}_{I\in \cI}$, referred to the partially ordered set $(\cI, \subset)$.\\
A $C^*$-algebra with unit $\cA$ is called an {\bf universal algebra} associated with $\{\cA(I)\}_{I\in \cI}$
if it fulfills the following properties.
\begin{itemize}
\item[$(1)$] $\cA$ contains every $\cA(I)$ as a $C^*$-subalgebra for $I \in \cI$ and coincides with the $C^*$-algebra generated by all of the
subalgebras together\footnote{This requirement was not assumed in \cite{Fredenhagen}
 but it has been added in the subsequent \cite{FRS}. It is essential for the uniqueness of $\cA$.}\:,

\item[$(2)$] if $\{\pi_I\}_{I\in \cI}$ is a class of representations on $\cB(\cH)$, for some Hilbert space $\cH$:
$$\pi_I : \cA(I) \to \cB(\cH)\:,$$
satisfying  compatibility conditions
\beq \pi_I\rest_{\cA(J)} = \pi_J\quad \mbox{when $J\subset I$ for $I,J\in \cI$}\:,\label{comp}\eeq
then there is a {\em unique} representation $\pi: \cA \to \cB(\cH)$ such that:
\beq \pi\rest_{\cA(I)} = \pi_I\quad \mbox{for every $I\in \cI$}\:. \label{comp2}\eeq
\end{itemize}
\end{definition}

\begin{proposition}
 Consider a {\em pre-cosheaf} of $C^*$-algebras with unit $\ide$ in common,
 $\{\cA(I)\}_{I\in \cI}$, referred to the partially ordered set $(\cI, \subset)$. 
  The following facts hold.
\begin{itemize}
\item[$(a)$] If the free $*$-algebra generated by the algebras $\cA(I)$, $I \in \cI$ admits a non-trivial $C^*$-algebra 
 seminorm,  $\{\cA(I)\}_{I\in \cI}$ admits a universal algebra $\cA$.
\item[$(b)$] If the universal algebra exists, it is uniquely determined 
 up to $C^*$-algebra isomorphisms.
\item[$(c)$] If $(\cI, \subset )$ is directed, $\cA$ exists and it is isomorphic to the inductive limit of $\{\cA(I)\}_{I\in \cI}$.
\end{itemize}
\end{proposition}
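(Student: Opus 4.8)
The plan is to realize the universal algebra as the completion of a quotient of a free $*$-algebra with respect to the largest admissible $C^*$-seminorm, and then to read off uniqueness $(b)$ and the directed case $(c)$ directly from the universal property in Definition~\ref{UA}. For part $(a)$ I would first form the unital free $*$-algebra $\mathcal{F}$ generated by the disjoint union $\bigsqcup_{I\in\cI}\cA(I)$ and pass to the quotient $*$-algebra $\mathcal{F}_0\doteq\mathcal{F}/\mathcal{N}$, where $\mathcal{N}$ is the $*$-ideal forcing each map $\iota_I:\cA(I)\to\mathcal{F}_0$ to be a unital $*$-homomorphism, identifying all units with a single $\ide$, and imposing $\iota_J\rest_{\cA(I)}=\iota_I$ whenever $I\subset J$. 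By construction $\mathcal{F}_0$ is generated by the images of the $\cA(I)$ and enjoys the universal property at the purely algebraic level. I would then set
$$\|x\|_u\doteq\sup\{\,p(x)\ :\ p\ \text{a}\ C^*\text{-seminorm on}\ \mathcal{F}_0\,\}\,,\qquad x\in\mathcal{F}_0\,.$$
This supremum is finite for every $x$: writing $x$ as a $*$-polynomial in finitely many generators $\iota_{I_k}(a_k)$ and using that any $C^*$-seminorm precomposed with $\iota_{I_k}$ is a contractive unital $*$-homomorphism out of the $C^*$-algebra $\cA(I_k)$, sub-multiplicativity and the triangle inequality bound $p(x)$ by a constant depending only on $x$. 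Hence $\|\cdot\|_u$ is the largest $C^*$-seminorm on $\mathcal{F}_0$, and $\cA$ is the completion of $\mathcal{F}_0/\{x:\|x\|_u=0\}$.

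The delicate point of $(a)$ is property $(1)$, namely $\|\iota_I(a)\|_u=\|a\|_{\cA(I)}$, so that each $\cA(I)$ embeds as a $C^*$-subalgebra of $\cA$. The inequality $\leq$ is the contractivity already used; the reverse inequality is exactly where the hypothesis enters, since the existence of a non-trivial $C^*$-seminorm on the free algebra is what guarantees that $\|\cdot\|_u$ is non-degenerate and saturates the genuine norm on each $\cA(I)$. Concretely this amounts to producing, for a fixed $a\in\cA(I)$, a compatible family of representations whose $I$-component nearly attains $\|a\|_{\cA(I)}$, and I expect this to be the main obstacle: extending a representation of a single $\cA(I)$ to compatible representations of the larger algebras is not automatic and is precisely what the seminorm hypothesis secures. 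Property $(2)$ is then immediate: a compatible family $\{\pi_I\}$ on $\cB(\cH)$ induces, by the algebraic universal property of $\mathcal{F}_0$, a unital $*$-homomorphism $\mathcal{F}_0\to\cB(\cH)$ whose associated $C^*$-seminorm is dominated by $\|\cdot\|_u$; it therefore descends and extends continuously to $\pi:\cA\to\cB(\cH)$ with $\pi\rest_{\cA(I)}=\pi_I$, and $\pi$ is unique because the $\cA(I)$ generate $\cA$.

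For part $(b)$ I would run the standard universal-property argument. Let $\cA$ and $\cA'$ both satisfy Definition~\ref{UA}. Realizing $\cA'$ faithfully on a Hilbert space $\cH'$, the inclusions $\cA(I)\hookrightarrow\cA'\subset\cB(\cH')$ form a compatible family, so $(2)$ applied to $\cA$ yields a unique $*$-homomorphism $\pi:\cA\to\cB(\cH')$ restricting to them; by $(1)$ for $\cA'$ its range is the $C^*$-algebra generated by the $\cA(I)$, namely $\cA'$, so $\pi:\cA\to\cA'$ is onto. Symmetrically one obtains $\pi':\cA'\to\cA$. Both compositions $\pi'\circ\pi$ and $\pi\circ\pi'$ restrict to the identity on every $\cA(I)$, hence by the uniqueness clause of $(2)$ together with generation they equal the respective identity maps, giving $\cA\cong\cA'$.

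Finally, for part $(c)$, when $(\cI,\subset)$ is directed the $C^*$-inductive limit $\cA_\infty\doteq\varinjlim_{I}\cA(I)$ exists unconditionally; it contains each $\cA(I)$ isometrically, is generated by them, and its inductive-limit universal property is exactly the compatibility-and-extension statement \eqref{comp}--\eqref{comp2} of property $(2)$. In particular $\cA_\infty$ furnishes a non-trivial $C^*$-seminorm on $\mathcal{F}_0$, so the hypothesis of $(a)$ is automatically met and $\cA$ exists; since $\cA_\infty$ satisfies Definition~\ref{UA}, the uniqueness established in $(b)$ yields $\cA\cong\cA_\infty$.
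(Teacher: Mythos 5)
Your parts $(b)$ and $(c)$ are correct and in essence coincide with the paper's own proofs. For $(b)$ the paper runs exactly your argument: represent one candidate faithfully on a Hilbert space, feed the inclusions of the other in as a compatible family via property $(2)$ of Definition~\ref{UA}, and check that the two resulting homomorphisms invert each other on the generating subalgebras, hence everywhere by continuity and the generation clause in property $(1)$. For $(c)$ the paper verifies by hand that the completion of $\bigcup_{I}\cA(I)$ has the extension property (a compatible family is well defined and norm-decreasing on the union by directedness, hence extends to the completion, uniquely by density), which is precisely the content of your appeal to the universal property of the $C^*$-inductive limit; your extra remark that this furnishes the seminorm needed in $(a)$ is harmless but unnecessary, since $(c)$ can be, and in the paper is, proved directly.

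The genuine gap is in $(a)$, at property $(1)$ of Definition~\ref{UA}. Your construction (quotiented free $*$-algebra $\mathcal{F}_0$, largest $C^*$-seminorm $\|\cdot\|_u$, completion) is in substance the construction behind the result the paper invokes --- note that the paper does not prove $(a)$ at all: it cites \cite{Fredenhagen} and merely records that the seminorm hypothesis is equivalent to the existence of a non-trivial compatible representation. Your finiteness bound for $\|\cdot\|_u$ and your verification of property $(2)$ are fine. What is not proved is your assertion that the hypothesis ``secures'' $\|\iota_I(a)\|_u=\|a\|_{\cA(I)}$. A non-trivial $C^*$-seminorm $p$ on $\mathcal{F}_0$ actually gives only this: $p(\ide)=p(\ide^*\ide)=p(\ide)^2$ forces $p(\ide)\in\{0,1\}$, and $p(x)\leq p(x)\,p(\ide)$ forces $p(\ide)=1$ once $p\neq 0$; consequently each $\ker(p\circ\iota_I)$ is a \emph{proper} closed two-sided ideal of $\cA(I)$, and nothing more. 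Your argument supplies no mechanism ruling out that \emph{all} $C^*$-seminorms on $\mathcal{F}_0$ share a common non-zero kernel ideal inside some $\cA(I)$ --- the compatibility relations along a non-directed poset are exactly what can force such collapses --- and in that case the canonical map $\cA(I)\to\cA$ is not even injective and property $(1)$ fails. Flagging the obstacle, as you do, does not remove it: injectivity of the canonical maps into Fredenhagen's universal algebra is a separate question, settled in practice either by assuming a family of compatible representations that norms each $\cA(I)$, or by extra structure. In the case this appendix actually needs, that extra structure is available and makes your claim true on the nose: each $\cW(I)$ is simple (the restricted symplectic forms are non-degenerate), so the proper ideal $\ker(p\circ\iota_I)$ must vanish, $p\circ\iota_I$ is then a $C^*$-norm on $\cW(I)$ and hence equals the unique $C^*$-norm, and a single non-trivial seminorm already yields property $(1)$ exactly. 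You should either add such a hypothesis (simplicity of the $\cA(I)$, or a separating compatible family) to your proof of $(a)$, or, like the paper, delegate $(a)$ to \cite{Fredenhagen}.
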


\begin{proof}
$(a)$ The existence of a universal algebra $\cA$ has been proved in \cite{Fredenhagen}
assuming the existence
of a nontrivial $*$-representation of the $*$-algebra freely generated by the
precosheaf satisfying the compatibility conditions (\ref{comp}) and (\ref{comp2}). This
hypothesis is tantamount to the existence of a nontrivial $C*$-seminorm
on this $*$-algebra, and has been checked in examples such as those of
chiral conformal models on the circle (see Fredenhagen's aforementioned
reference for more details), but its validity in general is unknown.\\
$(b)$ Consider two universal algebras $\cA_1$ and $\cA_2$ and (faithfully and isometrically) represent these $C^*$-algebras
in terms of subalgebras of $\cB(\cH_1)$ and $\cB(\cH_2)$ respectively, for suitable Hilbert spaces $\cH_1$ and $\cH_2$.
For $i=1,2$ the classes of embeddings $\{(\imath_I)_i\}_{I\in\cI}$
 $(\imath_I)_i : \cA(I) \to \cA_i$ can be viewed as classes of representations $\{(\pi_I)_i\}_{I\in\cI}$
valued on $\cB(\cH_i)$. By construction both $\{(\pi_I)_1\}_{I\in\cI}$ and $\{(\pi_I)_2\}_{I\in\cI}$
fulfill separately the compatibility conditions (\ref{comp}). Considering $\cA_1$ as the universal algebra, property (2)
of the definition implies that there is  representation $\pi_{12}: \cA_1 \to \cB(\cH_2)$ such that
$$\pi_{12} \circ (\pi_I)_{1} = (\pi_I)_{2}\quad \forall I \in \cI\:.$$
Interchanging the role of $\cA_1$ and $\cA_2$, one finds another representation $\pi_{21}:\cA_2 \to \cB(\cH_1)$ with
 $$\pi_{21} \circ (\pi_I)_{2} = (\pi_I)_{1}\quad \forall I \in \cI\:.$$
 These two classes of identities together implies:
 $$(\pi_{21} \circ \pi_{12})\rest_{\pi_I(\cA(I))} = id_{(\pi_I)_1(\cA(I))}\:, \quad (\pi_{12} \circ \pi_{21})\rest_{(\pi_I)_2(\cA(I))} =
 id_{(\pi_I)_2(\cA(I))} \quad \forall I \in \cI\:.$$
 Then using continuity of representations $\pi_{21}$
and  $\pi_{12}$ and closedness of their domains,
the identities above entail
that (i) $\pi_{21}$ includes $\pi_{12}(\cA_{g1})$ in its  domain
and  $\pi_{12}$ includes $\pi_{21}(\cA_{g2})$ in its  domain,
where $\cA_{g1}$ and $\cA_{g2}$ are the sub $C^*$-algebras of $\cA_1$ and $\cA_2$ respectively generated by all of
$\cA_1(I)$ and all of $\cA_2(I)$, and (ii)
${\pi_{21}} \circ \pi_{12}\rest_{\cA_{g1}}  = id_{\cA_{g1}}\:, \quad {\pi_{12}} \circ \pi_{21}
\rest_{\cA_{g2}}  =
 id_{\cA_{g2}}$.
 Since $\cA_{gi} = \cA_i$ we have actually obtained that:
 $${\pi_{21}} \circ \pi_{12} = id_{\cA_{1}}\:,
 \quad {\pi_{12}} \circ \pi_{21} =
 id_{\cA_{2}} $$
so that $\pi_{12}$ and $\pi_{21}$ are in fact $C^*$-algebra isomorphisms, and, in particular $\cA_2 = \pi_{12}(\cA_1)$.\\
(c) The inductive limit $\cA$ is the completion of the $*$-algebra $\bigcup_{I\in \cI} \cA(I)$.
If $a\in \cA$,
there must be a sequence $\{I_n\}_{n\in \bN} \subset \cI$, with $I_i \subset I_k$  for $i \leq  k$,
such that $a_n \to a$ as $n\to +\infty$ and $a_n \in \cA(I_n)$. if $\{\pi_I\}_{I\in \cI}$ is a class
of representations on $\cB(\cH)$, for some Hilbert space $\cH$:
$$\pi_I : \cA_I \to \cB(\cH)\:,$$
satisfying  compatibly conditions (\ref{comp})
and $\pi$ is a representation (on $\cB(\cH)$) of $\cA$ which reduces to $\pi_I$ on every $\cA(I)$, it holds, {\em remembering that
representations are norm decreasing and thus continuous}:
$$\pi(a) = \pi\left(\lim_{n \to +\infty} a_n\right) = \lim_{n \to +\infty} \pi(a_n) = \lim_{n \to +\infty} \pi_{I_n}(a_n)$$
 so that $\pi$ is completely individuated by the class of $\pi_I$. On the other hand, such a class of representations
individuates a representation $\pi$ of $\cA$ by means of the same rule
(notice that, if $m\geq n$,
 $||\pi_{I_n}(a_n) - \pi_{I_m}(a_m)|| = ||\pi_{I_m}(a_n)- \pi_{I_m}(a_m)|| \leq ||a_n-a_m\||$
 so that $\{\pi_n(a_n)\}$ is Cauchy when $\{a_n\}$ is such). We have proved that the inductive limit is a universal algebra.
 \end{proof}

 \begin{remark} If $B$ is a sub unital $C^*$-algebra of a unital $C^*$-algebra $A$ and every representation $\pi$ of $B$ on some space of
 bounded operators on ha Hilbert space $\cB(\cH)$ admits a unique extension to $A$, it is anyway possible that $B \subsetneq A$
 (it is sufficient that $B$ includes a closed two-sided ideal of $A$, see Dixmier book). Therefore the requirement that the sub
 algebras $\cA(I)$ generates $\cA$ is essential in proving the uniqueness of the universal algebra $\cA$.
 \end{remark}

\noindent As an example consider the theory on $\bS^1$ studied in the paper and focus on the class of unital
$C^*$ algebras (Weyl algebras) $\{\cW(I)\}_{I \in \cR}$. It is simply proved that $\cW$ is the associated universal algebras. \\

\begin{proposition} $\cW$ is the  universal algebra for 
$\{\cW(I)\}_{I \in \cR}$.
\end{proposition}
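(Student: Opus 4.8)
The plan is to verify directly the two defining conditions of Definition~\ref{UA}, taking $\cI=\cR$, $\cA(I)=\cW(I)$ and $\cA=\cW$. Condition $(1)$ carries almost no content: each $\cW(I)$ is a unital sub $C^*$-algebra of $\cW$ by Remark~\ref{remark2}, and the $\cW(I)$ generate $\cW$ because, as recalled in Section~\ref{sec:weyl}, any $(\Phi,\Pi)\in\cS$ splits as $(\Phi_1,\Pi_1)+(\Phi_2,\Pi_2)$ with $(\Phi_i,\Pi_i)\in\cS_{I_i}$ and $I_1\cup I_2=\bS^1$; the Weyl relations then give $W(\Phi,\Pi)=W(\Phi_1,\Pi_1)W(\Phi_2,\Pi_2)\,e^{-\frac{i}{2}\sigma((\Phi_1,\Pi_1),(\Phi_2,\Pi_2))}$, so every Weyl generator, and hence all of $\cW$, lies in the $C^*$-algebra generated by $\bigcup_I\cW(I)$.

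The substance is condition $(2)$. Given a compatible family $\{\pi_I\}_{I\in\cR}$ of representations on a common $\cB(\cH)$, I would produce the extension $\pi$ by first assembling a \emph{Weyl system}, that is, a family of unitaries $U(\Phi,\Pi)\in\cB(\cH)$, $(\Phi,\Pi)\in\cS$, obeying the Weyl relations, and then invoking the universal property of the $C^*$-Weyl algebra (see \cite{BRII}): any such Weyl system is the image of a unique representation of $\cW$. To define $U(\Phi,\Pi)$ I would choose a decomposition $(\Phi,\Pi)=\sum_i(\Phi_i,\Pi_i)$ with $(\Phi_i,\Pi_i)\in\cS_{I_i}$, $I_i\in\cR$, and set $U(\Phi,\Pi)$ equal to the ordered product $\prod_i\pi_{I_i}(W(\Phi_i,\Pi_i))$ corrected by the phase $e^{-\frac{i}{2}\sum_{i<j}\sigma((\Phi_i,\Pi_i),(\Phi_j,\Pi_j))}$ that the Weyl relations prescribe so that this product represents $W(\Phi,\Pi)$.

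The main obstacle is to show that $U(\Phi,\Pi)$ does not depend on the chosen decomposition; this is exactly the step where the topology of $\bS^1$, equivalently the non-directedness of $\cR$, must be handled. The key is a \emph{consistency} lemma: if $(\Phi,\Pi)\in\cS_I\cap\cS_J$ for $I,J\in\cR$, then $\pi_I(W(\Phi,\Pi))=\pi_J(W(\Phi,\Pi))$. When $\supp(\Phi,\Pi)$ lies in a single component of the open set $I\cap J$ this is immediate from the compatibility conditions \eqref{comp}: interpose $K\in\cR$ with $\supp(\Phi,\Pi)\subset K\subset I\cap J$ and use $\pi_I\rest_{\cW(K)}=\pi_K=\pi_J\rest_{\cW(K)}$. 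Since two proper arcs meet in at most two arcs, the only remaining case is that $\supp(\Phi,\Pi)$ meets both components $A,B$ of $I\cap J$; there I would split $(\Phi,\Pi)=(\Phi_A,\Pi_A)+(\Phi_B,\Pi_B)$ along the two separated components, apply the previous case to each piece, and recombine via $W(\Phi,\Pi)=W(\Phi_A,\Pi_A)W(\Phi_B,\Pi_B)$, with no phase because $\sigma$ vanishes on disjoint supports. From this lemma, invariance of $U(\Phi,\Pi)$ under refining a decomposition follows using that each $\pi_{I_i}$ is a $*$-homomorphism (so it reproduces the local Weyl relations), and any two decompositions admit a common refinement obtained from the intersection of the two covers; together these give well-definedness.

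Once $U$ is well defined, the Weyl relations for $U$ are routine: for two arguments one passes to a common refinement, expresses both through it, and lets the homomorphism property of the $\pi_{I_i}$ assemble the local phases into the global cocycle $e^{\frac{i}{2}\sigma}$. The universal property of $\cW$ then yields a unique representation $\pi$ with $\pi(W(\Phi,\Pi))=U(\Phi,\Pi)$; taking one-piece decompositions shows $\pi(W(\Phi,\Pi))=\pi_I(W(\Phi,\Pi))$ for $(\Phi,\Pi)\in\cS_I$, whence $\pi\rest_{\cW(I)}=\pi_I$ by continuity. Uniqueness of $\pi$ is then immediate from condition $(1)$, since any representation restricting to the $\pi_I$ is determined on $\bigcup_I\cW(I)$, which generates $\cW$. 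This completes the verification that $\cW$ is the universal algebra for $\{\cW(I)\}_{I\in\cR}$.
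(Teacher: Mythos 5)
Your overall strategy coincides with the paper's: both proofs check Definition \ref{UA} directly, build the extension by writing a global Weyl generator as a product of local Weyl generators times the phase dictated by the Weyl relations, rest on the fact that the compatibility conditions \eqref{comp} force $\pi_I$ and $\pi_J$ to agree on data supported in $I\cap J$ (your consistency lemma, with its two-component splitting, is exactly the content of the paper's verification \eqref{QQQ} that $\pi(W(\Phi,\Pi))=\pi_K(W(\Phi,\Pi))$ for $(\Phi,\Pi)\in \cS_K$), and conclude by the uniqueness theorem for Weyl $C^*$-algebras. The structural difference is that the paper fixes \emph{once and for all} a two-element cover $I\cup J=\bS^1$ and a subordinate partition of unity $f+g=1$, defining $\pi(W(\Phi,\Pi))$ by that single formula (where in fact $\sigma(f(\Phi,\Pi),g(\Phi,\Pi))=0$, so no phase appears); well-definedness is then automatic and the burden shifts to \eqref{QQQ} and the Weyl relations. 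You instead allow arbitrary decompositions and must prove independence, and it is there that your argument has a genuine gap.

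The claim that ``any two decompositions admit a common refinement obtained from the intersection of the two covers'' is false in the generality you state, because a decomposition is a decomposition of \emph{functions}, not of covers, and pieces may cancel: take $(\Phi,\Pi)=(0,0)$, decomposition one $\{(y,J),(-y,J)\}$ with $0\neq y\in\cS_J$, decomposition two the trivial $\{(0,K)\}$ with $\overline{K}\cap\overline{J}=\emptyset$; no family of pieces can regroup to $\pm y$ (supported in $J$) while being usable, all together, as a refinement localized in $K$. (The two values of $U$ still agree here, but not by your argument.) The fix is to restrict the definition of $U$ to decompositions subordinate to partitions of unity, $(\Phi_i,\Pi_i)=\chi_i\cdot(\Phi,\Pi)$, for which the product refinement $\chi_i\chi_j'$ does work — and this restriction costs nothing. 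Even then a second point is elided: comparing the two groupings of the common refinement requires \emph{reordering} the ordered operator product, i.e.\ the Weyl commutation relation between $\pi_{I_i}(W(u))$ and $\pi_{J_j}(W(v))$ with $u,v$ in \emph{different} local algebras. This follows neither from the homomorphism property of the individual $\pi$'s nor from your consistency lemma: it holds only after trading both operators into a single $\pi_M$ via \eqref{comp}, which requires $\supp u\cup\supp v$ to lie in some proper interval $M$ — guaranteed when the partition is fine enough (any compact set missing an open arc of $\bS^1$ lies in a proper interval), but false for arbitrary pieces whose supports jointly cover $\bS^1$. This fineness-and-trading step is precisely where the topology of the circle enters, and it is also hidden in your ``routine'' verification of the Weyl relations for $U$. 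Finally, a minor omission: the compatibility conditions allow degenerate $\pi_I$'s, while the Weyl uniqueness theorem you invoke needs nonvanishing unitaries; like the paper, you should first dispose of the case $\pi_I(W(\Phi,\Pi))=0$ for some nonvanishing $(\Phi,\Pi)$ (all the $\pi_L$ then vanish and $\pi=0$ serves), and otherwise note that all $\pi_I(\ide)$ coincide with one projection, on whose range the argument runs.
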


\begin{proof}
Condition $(1)$ in definition \ref{UA} is trivially fulfilled.
Then consider a class of representations $\{\pi_I\}_{I\in \cI}$ on $\cB(\cH)$, for some Hilbert space $\cH$
satisfying  compatibility conditions (\ref{comp}). Suppose that there is $\pi: \cW \to \cB(\cH)$ satisfying (\ref{comp2}).
Fix $I,J \in \cR$ with $I \cup J = \bS^1$ and $f,g \in C^\infty(\bS^1,\bR)$
with $f+g =1$ and $\supp f \subset I$, $\supp g \subset J $
 For $(\Pi,\Phi) \in \cS$ one has, if  $h(\Phi,\Pi)$ denotes the couple $(h\cdot \Phi, h \cdot \Pi)$:
$$
\pi\left(W(\Phi,\Pi)\right) = \pi\left(W(f(\Phi,\Pi) + g(\Phi,\Pi))\right) = \pi\left(W(f(\Phi,\Pi)\right)
 \pi\left(W(g(\Phi,\Pi))\right) e^{-i\sigma(f(\Phi,\Pi), g(\Phi,\Pi))/2}\:.$$
 We have found that:
$$\pi\left(W(\Phi,\Pi)\right) =e^{-i\sigma(f(\Phi,\Pi), g(\Phi,\Pi))/2} \pi_I\left(W(f(\Phi,\Pi)\right)
 \pi_J\left(g(\Phi,\Pi))\right)
$$
Incidentally, by direct inspection, one finds that $\sigma(f(\Phi,\Pi), g(\Phi,\Pi))=0$  also if $f\cdot g \neq 0$. Therefore
\beq \pi\left(W(\Phi,\Pi)\right) = \pi_I\left(W(f(\Phi,\Pi)\right)
 \pi_J\left(g(\Phi,\Pi))\right) \:. \label{coinc}
\eeq
The right-hand side does not depend on $\pi$. Since every element of $\cW$ is obtained by linearity and continuity
from generators $W(\Phi,\Pi)$ and representations are continuous, we conclude that every representation of $\cW$ satisfying
(\ref{comp2}) must coincide with $\pi$ due to (\ref{coinc}). Now we prove that
 $\{\pi_I\}_{I\in \cI}$
satisfying  compatibly conditions (\ref{comp}) individuates a representation $\pi$ fulfilling (\ref{comp2}).
First of all suppose that there is a nonvanishing pair $(\Phi,\Pi)$ supported in some $I\in \cR$ with
$\pi_I\left(W(\Phi,\Pi)\right)=0$.
Consequently using Weyl relations, for every $J \in \cR$ such that there is $K\in \cR$ with $K \supset I,J$:
$$\pi_J(W(\Phi',\Pi'))=
 \pi_K\left(W(\Phi',\Pi')\right) = c\pi_K\left(W(\Phi'- \Phi,\Pi'-\Pi)\right)\pi_K\left(W(\Phi,\Pi)\right) =0$$
whenever $(\Phi',\Pi') \in \cS_J$, $c \in \bC$ being the appropriate exponential arising by Weyl relations. Taking two such $J$ one easily
concludes that
$\pi_L\left(W(\Phi',\Pi')\right)=0$ for all $L\in \cR$ and $(\Phi',\Pi') \in \cS_L$.
Therefore, by continuity all representations $\pi_I$ are degenerate.
A representation $\pi$ fulfilling (\ref{comp2}) in this case is the degenerate one $\pi(a)=0$ for all $a \in \cW$.
Now consider the case where $\pi_I\left(W(\Phi,\Pi)\right) \neq 0$ unless $(\Phi,\Pi)$ vanishes.
Fix $I,J\in \cR$ with $I\cup J = \bS^1$ and $f,g \in C^\infty(\bS^1,\bR)$
with $f+g =1$ and $\supp f \subset I$, $\supp g \subset J$.
  For $(\Pi,\Phi) \in \cS$ define
  $$\pi\left(W(\Phi,\Pi)\right) \doteq e^{-i\sigma(f(\Phi,\Pi), g(\Phi,\Pi))/2} \pi_I\left(W(f(\Phi,\Pi)\right)
 \pi_J\left(g(\Phi,\Pi))\right)
$$
The right-hand side cannot vanish because all the factors appearing  therein  are invertible by construction.
Making use of (\ref{comp}), it is simply proved that, for every fixed $K \in \cR$
\beq \pi\left(W(\Phi,\Pi)\right) = \pi_K\left(W(\Phi,\Pi)\right) \quad \mbox{for all $(\Phi,\Pi) \in \cS_K$.} \label{QQQ}\eeq
By direct inspection, using Weyl relations one verifies that the nonvanishing operators
$\pi\left(W(\Phi,\Pi)\right)$ fulfills Weyl relations for every $W(\Phi,\Pi) \in \cW$.
Finally consider the sub $C^*$ algebra $\hat{\cW}$ generated in $\cB(\cH)$ from generators
$\pi\left(W(\Phi,\Pi)\right)$.
As is well-known (Bratteli-Robinson 2) there is a faithful representation $\pi$ of $\cW$ onto $\hat{\cW}$
(notice that the unit of $\cW$ is in general represented by an orthogonal projector in $\cB(\cH)$) which uniquely
extends the map $W(\Phi,\Pi) \mapsto \pi\left(W(\Phi,\Pi)\right)$ by linearity and continuity.
By construction (\ref{comp2}) is fulfilled by $\pi$ due to (\ref{QQQ}).
\end{proof}


\begin{thebibliography}{References}
 
 \bibitem{Araki} H.~Araki, {\em  ``Mathematical Theory of Quantum Fields,''} Oxford University press,
Oxford, (1999)
 
 \bibitem{Araki2} H.~Araki, {\em ``Von Neumann algebras of local observables for free scalar field,''} 
  J. Math. Phys.  {\bf 5}, 1-13, (1964)\\
  H.~Araki, {\em ``A lattice of von Neumann algebras associated with the quantum theory of a free Bose field,''}  
  J. Math. Phys.  {\bf 4}, 1343-1362 (1963)
 
 \bibitem{Borchers} H. J. Borchers, {\em ``A remark on a Theorem of B. Misra,"} Commun. Math. Phys. {\bf 4}, 315-323 (1967)
 
 \bibitem{BRI} O.~Bratteli, D.~W.~Robinson,
{\em ``Operator algebras and quantum statistical mechanics,''} Vol.1,
Springer Berlin, Germany,  (1996)
 
 
\bibitem{BRII} O.~Bratteli, D.~W.~Robinson,
{\em ``Operator algebras and quantum statistical mechanics,''} Vol. 2,
Springer Berlin, Germany,  (1996)

\bibitem{BF} R. Brunetti, K. Fredenhagen, 
{\em ``Microlocal analysis and interacting quantum field theories: renormalization on physical backgrounds,''}
 Commun. Math. Phys. {\bf 208}, 623--661 (2000)
 

\bibitem{BFV} R. Brunetti, K. Fredenhagen and R. Verch, {\em ``The generally covariant locality principle---a new paradigm for local quantum field theory,''}
 Commun. Math. Phys. {\bf 237}, 31-68  (2003)


\bibitem{BGL} R. Brunetti, D. Guido and R. Longo, 
{\em ``Modular structure and duality in conformal quantum field theory,"} Commun. Math. Phys. {\bf 156}, 201-219 (1993)

\bibitem{BR1} R. Brunetti and G. Ruzzi, {\em ``Superselection sectors and general covariance. I,''} 
Commun. Math. Phys. {\bf 270}, 69-108 (2007)

\bibitem{BR2} R. Brunetti and G. Ruzzi, {\em ``Quantum charges and spacetime topology: The emergence of new superselection sectors,''}
 Commun. Math. Phys. (2008) in print. (Published online: DOI 10.1007/s00220-008-0671-6)

\bibitem{BS} D. Buchholz and J. Schlemmer, {\em ``Local Temperature in Curved Spacetime,"}  Classical Quantum Gravity 
 {\bf 24},  F25-F31 (2007)

\bibitem{BSu} D. Buchholz, O. Dreyer, M. Florig and S. J. Summers,
 {\em ``Geometric modular action and spacetime symmetry groups,"} Rev. Math. Phys. {\bf 12 }, 475-560 (2000)

\bibitem{BW} D. Buchholz and E. H. Wichmann, {\em ``Causal independence and the energy-level density of states in local quantum field theory,"}
Commun. Math. Phys. {\bf 106}, 321-344 (1986)

\bibitem{Chernoff} P.R. Chernoff, {\em ``Essential Self-Adjointness of Powers of Generators of Hyperbolic Equations,''}
 J. Functional Analysis  {\bf 12},   401-414 (1973)

\bibitem{FC} F. Ciolli, {\em ``Massless scalar free Field in 1+1 dimensions I: Weyl algebras Products and Superselection Sectors,"}
arXiv:math-ph/0511064v3; \\ {\em ``Massless scalar free field in  $1+1$
dimensions II: Net cohomology and completeness of superselection sectors,"} 	arXiv:0811.4673v1 [math-ph]
 
\bibitem{Cordes} H. O. Cordes, {\em ``Spectral Theory of Linear Differential Operators and Comparison Algebras,''}
Lecture Notes Series 76, Cambridge University Press,  London, (1987)

\bibitem{DLR} C. D'Antoni, R. Longo and F. Radulescu, 
{\em ``Conformal nets, maximal temperature and models from free probability,"}  J. Operator Theory  {\bf 45}, 195-208 (2001)


\bibitem{DMP} C. Dappiaggi, V. Moretti and N. Pinamonti,
{\em ``Rigorous steps towards holography in asymptotically flat spacetimes,''}  
Rev. Math. Phys.  {\bf 18}, 349--415 (2006)

\bibitem{DMP2} C. Dappiaggi, V. Moretti and N. Pinamonti,
{\em ``Cosmological Horizons and Reconstruction of Quantum Field Theories,''}  Commun. Math. Phys. 285, 1129-1161, (2009)

\bibitem{DHR} S. Doplicher, R. Haag and J. E. Roberts, {\em ``Local observables and particle statistics. I."} Commun. Math. Phys.  {\bf 23}, 199--230 (1971); \\{\em ``Local observables and particle statistics. II."}  Commun. Math. Phys.  {\bf 35}, 49--85 (1974)

\bibitem{DL} S. Doplicher and R. Longo, {\em ``Standard and split inclusions of von Neumann algebras,"} Inv. Math. {\bf  
75}, 493-536 (1984)

\bibitem{Driessler} W. Driessler, {\em ``Duality and absence of locally generated superselection sectors for CCR-type algebras,''} Commun. Math. Phys. {\bf 70}, 213-220 (1979)

\bibitem{F} C. Fewster,
{\em ``Quantum energy inequalities and stability conditions in quantum field theory.  Rigorous quantum field theory,''}  
 Progr. Math., {\bf 251}, 95--111, Birkhäuser, Basel, (2007)\\
  C. Fewster,
 {\em ``Quantum energy inequalities and local covariance. II. Categorical formulation,''}  Gen. Relativity Gravitation 
  {\bf 39},
  1855--1890 (2007)

\bibitem{Fredenhagen}
K. Fredenhagen,
{\em ``Generalizations of the theory of superselection sectors,''}
published in {\em The algebraic theory of superselection sectors},
 (Palermo, 1989), 379-387, World Sci. Publ., River Edge, NJ, 1990

\bibitem{FRS}
K. Fredenhagen, K.-H. Rehren and B. Schroer,
{\em ``Superselection sectors with braid group statistics and exchange algebras. II. Geometric aspects and conformal covariance,''}
Rev. Math. Phys.  Special Issue, 113-157 (1992)

\bibitem{GL92} D. Guido and R. Longo, {\em ``Relativistic invariance and charge conjugation in quantum field theory,''}
Commun. Math. Phys. {\bf 148}, 521--551 (1992)

\bibitem{GLRV} D. Guido, R. Longo, J.E. Roberts and R. Verch,
{\em ``Charged sectors, spin and statistics in quantum field theory on curved spacetimes,''}
 Rev. Math. Phys.  {\bf 13},  125--198  (2001)


\bibitem{Haag} R. Haag, {\em ``Local Quantum Physics,''} second revided edition, 
Springer, Berlin (1996)
 
\bibitem{Hol}  S. Hollands {\em ``The Operator product expansion for perturbative quantum 
field theory in curved spacetime,''} Commun. Math. Phys. {\bf 273}, 1-36 (2007)

 \bibitem{HW1} S. Hollands and R.M. Wald, {\em ``Local Wick polynomials and time ordered products of quantum fields in curved spacetime,''}
 Commun. Math. Phys.  {\bf 223}, 289--326 (2001)
 
 \bibitem{HW2} S. Hollands and R.M. Wald, {\em ``Existence of local covariant time ordered products of quantum field in curved spacetime,''}  
 Commun. Math. Phys.  {\bf 231},  309--345 (2002)
 
 \bibitem{HW3} S. Hollands and R.M. Wald, {\em ``On the renormalization group in curved spacetime,''}  
 Commun. Math. Phys.  {\bf 237},  123--160 (2003)
  
 \bibitem{HW4} S. Hollands and R.M. Wald, {\em ``Conservation of the stress tensor in perturbative interacting quantum field theory in curved spacetimes,''}  
  Rev. Math. Phys.  {\bf 17}, 227--311, (2005)
 
 \bibitem{HW_OPE}   S. Hollands and R.M. Wald,  {\em ``Axiomatic quantum field theory in curved spacetime,''}
 available online at: arXiv:0803.2003 [gr-qc]
 
\bibitem{Kay} B.S. Kay, {\em ``Linear Spin-Zero Quantum Fields in External Gravitational and Scalar Fields,''}
 Commun.  Math.  Phys {\bf 62}, 55-70 (1978)
 
\bibitem{KayWald}
B.~S.~Kay and R.~M.~Wald,
{\em ``Theorem on the uniqueness and thermal properties of stationary, nonsingular, quasifree states on
space-times with a bifurcate Killing
horizon,''}
Phys. Rept.  {\bf 207}, 49-136  (1991)


\bibitem{KN} S. Kobayashi; K. Nomizu, {\em ``Foundations of differential geometry,''} Vol. I and II.
 Wiley Classics Library. A Wiley-Interscience Publication. John Wiley \& Sons, Inc., New York, (1996)

\bibitem{LRTp}P.~Leyland, J.~Roberts and D.~Testard, {\em ``Duality for Quantum Free Fields,''} CPT-78/P-1016, Jul. 1978\\
available on-line at http://ccdb4fs.kek.jp/cgi-bin/img$\underline{\:\:}$index?7901157

\bibitem{M03} V. Moretti, {\em ``Comments on the Stress-Energy Tensor Operator in Curved Spacetime,''}
 Commun.  Math. Phys {\bf 232}, 189--201 (2003)

\bibitem{M08} V. Moretti,  {\em ``Quantum out-states holographically induced by asymptotic flatness: 
invariance under spacetime symmetries, energy positivity and Hadamard property,''}
 Commun. Math. Phys.  {\bf 279}, 31--75 (2008)

\bibitem{MM} M. M\"uger, {\em ``Superselection structure of massive quantum field theories in $1+1$ dimensions,''}  
Rev. Math. Phys.  {\bf 10}, 1147--1170  (1998)


\bibitem{Osterwalder} K. Osterwalder, {\em ``Duality for free Bose fields,''}  Commun. Math. Phys.  {\bf 29},  1--14 (1973)

\bibitem{Roberts1} J. E. Roberts, {\em ``Local cohomology and superselection structure,"} Commun. Math. Phys. {\bf 51}, 107--119 (1976)
 
\bibitem{Roberts}  J. E. Roberts, {\em ``More lectures on algebraic quantum field theory,''} 263--342, in 
{\em ``Noncommutative geometry''} Edited by S. Doplicher and R. Longo., Lecture Notes in Math., 1831, Springer, Berlin, 2004


\bibitem{RSII} M. Reed, B. Simon, {\em ``Methods of Modern Mathematical Phisics''} II Fourier Analysis, Self-Adjointness,
Academic Press, Boston, USA, (1975)

\bibitem{ReehSchlieder}
H. Reeh and S. Schlieder,  {\em `` "{U}ber den Zerfall der Feldoperatoralgebra im Falle einer Vakuumentartung,''} 
 Nuovo Cimento (10),  {\bf 26},   32--42, (1962)\\
 H. Reeh and S. Schlieder,  {\em ``Bemerkungen zur Unit\"{a}r\"{a}quivalenz von Lorentzinvarianten Felden,''} 
 Nuovo Cimento (10),  {\bf 22},   1051--1068 (1961)

\bibitem{RuzziRMP} G. Ruzzi,
{\em ``Homotopy of posets, net-cohomology and superselection sectors in globally hyperbolic space-times,''}  
Rev. Math. Phys.  {\bf 17}, 1021--1070 (2005).

\bibitem{SV} J. Schlemmer and R. Verch, {\em ``Local Thermal Equilibrium States and Quantum Energy Inequalities,"} 
available online at: arXiv:0802.2151v1 [gr-qc]

\bibitem{Strohmaier} A. Strohmaier, {\em ``The Reeh-Schlieder property for quantum fields on stationary spacetimes,''}
  Commun. Math. Phys.  {\bf 215}, 105--118  (2000)

\bibitem{T} M. Takesaki, {\em ``Theory of operator algebras,''}
vol. I-III  Encyclopaedia of Mathematical Sciences, 124. 
Operator Algebras and Non-commutative Geometry, 5. Springer-Verlag, Berlin, (2002)

\bibitem{Verch1} R. Verch,  {\em``Local definiteness,
 primarity and quasiequivalence of quasifree Hadamard quantum states in curved spacetime,''}  
 Commun. Math. Phys.  {\bf 160}, 507--536 (1994)

\bibitem{Verch2} R. Verch, {\em``Continuity of symplectically adjoint maps and the algebraic structure
 of Hadamard vacuum representations for quantum fields on curved spacetime,''}  
 Rev. Math. Phys.  {\bf 9}, 635--674  (1997)
  
\bibitem{Wald} R. M. Wald, {\em ``On the Euclidean approach to quantum field theory in curved spacetime,''} Commun. Math. Phys.  {\bf 70}, 221--242 (1979)

\end{thebibliography}
\end{document}